\newcommand\synerror{\text{\bf * }}
\newcommand\ImpL{⊸}
\newcommand\ImpR{⟜}
\def\@endtheorem{\endtrivlist}% NEW
\declaretheorem[name=Theorem,numberwithin=section]{theorem}
\declaretheorem[sibling=theorem]{lemma}
\declaretheorem[sibling=theorem]{corollary}
\declaretheorem[sibling=theorem]{proposition}
\declaretheorem[sibling=theorem]{definition}
\declaretheorem[sibling=theorem]{example}
\declaretheorem[sibling=theorem]{observation}
\newcommand\cod{\mathop{\mathrm{cod}}}
\newcommand\negL[2][\bot]{\mathop{\overset{#1}{\underset{#2}{⊸}}}}
\newcommand\negR[2][\bot]{\mathop{\overset{#1}{\underset{#2}{⟜}}}} %⟜
\newcommand\op{ {\rm op} }
\newcommand\dom{\mathop{\rm dom}}
\renewcommand\oast{\mathop{\circledast}}
\newcommand\mul{\cdot}%\mathbin{\bullet}}
\newcommand\emul{\cdot}
\newcommand\defeq{\stackrel{\text{\tiny def}}{=}}
\newcommand\id{-}
\def\c#1{ {\mathbf #1} }
\newcommand\refs{\sqsubset}
\newcommand\morph[1]{\overset{#1}\longrightarrow}
\newcommand\seq[1]{\underset{#1}\longrightarrow}
\newcommand\nseq{\longrightarrow}
\newcommand\resetL{(\rc{\id}\mul\id);\plugL}
\newcommand*{\longhookrightarrow}{\ensuremath{\lhook\joinrel\relbar\joinrel\rightarrow}}
\newcommand*{\longtwoheadrightarrow}{\ensuremath{\relbar\joinrel\twoheadrightarrow}}
\newcommand\pullseq[1]{\underset{#1}\longhookrightarrow}
\newcommand\pushseq[1]{\underset{#1}\longtwoheadrightarrow}
\newcommand\Set{ {\mathbf{Set} } }
\newcommand\SubSet{ {\mathbf{SubSet} } }
\newcommand\Cat{ {\mathbf{Cat} } }
\newcommand\SubCat{ {\mathbf{SubCat} } }
\newcommand\deq{\sim}
\let\vd=\vdash
\newcommand\Wand{\mathbin{\relbar\joinrel{\!*}}}
\newcommand\WandL{\mathbin{\relbar\joinrel{\!*}}}
\newcommand\WandR{\mathbin{*\!\joinrel\relbar}}
\newcommand\negWandL[2][\bot]{\mathop{\overset{#1}{\underset{#2}{\WandL}}}}
\newcommand\negWandR[2][\bot]{\mathop{\overset{#1}{\underset{#2}{\WandR}}}} %⟜
\newcommand\plugL{\circledgtr}
\newcommand\plugR{\circledless}
\newcommand\pull[1]{{#1}^*\,}
\newcommand\push[1]{{#1}\,}
\newcommand\lc[1]{\lambda[{#1}]}
\newcommand\rc[1]{\rho[{#1}]}
\newcommand\G{\Gamma}
\begin{document}

\title{Type refinement and monoidal closed bifibrations}

\author{Paul-André Melliès \qquad Noam Zeilberger}
%% \address{CNRS, Université Paris Diderot}
%% \email{mellies@pps.univ-paris-diderot.fr}

%% \address{}%Institute for Advanced Study}
%% \email{noam.zeilberger@gmail.com}

%% %% mandatory lists of keywords and classifications:
%% \keywords{MANDATORY list of keywords}
%% \subjclass{MANDATORY list of acm classifications}
%% \titlecomment{OPTIONAL comment concerning the title, \eg, if a variant or an extended abstract of the paper has appeared elsewehere}

\maketitle

\begin{abstract}
The concept of {\em refinement} in type theory is a way of reconciling the ``intrinsic'' and the ``extrinsic'' meanings of types.  We begin with a rigorous analysis of this concept, settling on the simple conclusion that the type-theoretic notion of ``type refinement system'' may be identified with the category-theoretic notion of ``functor''.  We then use this correspondence to give an equivalent type-theoretic formulation of Grothendieck's definition of (bi)fibration, and extend this to a definition of {\em monoidal closed bifibrations}, which we see as a natural space in which to study the properties of proofs and programs.  Our main result is a representation theorem for strong monads on a monoidal closed fibration, describing sufficient conditions for a monad to be isomorphic to a continuations monad ``up to pullback''.
\end{abstract}

\tableofcontents

\pagebreak

\section{Introduction}

One of the difficulties in giving a clear mathematical definition of the ``topic'' of type theory is that the word ``type'' is actually used with two very different intuitive meanings and technical purposes in mind:

\begin{enumerate}
\item Like the syntactician's {\em parts of speech}, as a way of defining the grammar of well-formed expressions.
\item Like the semanticist's {\em predicates}, as a way of identifying subsets of expressions with certain desirable properties.
\end{enumerate}
These two different views of types are often associated respectively with Alonzo Church and Haskell Curry (hence ``types à la Church'' and ``types à la Curry''), while the late John Reynolds referred to these as the {\em intrinsic} and the {\em extrinsic} interpretations of types \cite{reynolds9x}.  In the intrinsic view, all expressions carry a type, and there is no need (or even sense) to consider the meaning of ``untyped'' expressions; while in the extrinsic view, every expression carries an independent meaning, and typing judgments serve to assert some property of that meaning.

Usually, readings of type theory through the lens of category theory have sided towards the intrinsic view.  This is natural given the analogy
\begin{center}
type system $\sim$ category
\end{center}
which says for example that a judgment
$$
x_1:A_1,\dots,x_n:A_n \vd e : B
$$
of the simply-typed lambda calculus may be interpreted as a morphism
$$
A_1 \times \dots \times A_n \morph{e} B
$$
in a cartesian-closed category \cite{lambekscott8x}.  This favors the intrinsic interpretation, since any morphism of a category
$$
A \morph{f} B
$$
is intrinsically associated with a pair of types (or ``objects''), namely, its domain $\dom(f) = A$ and codomain $\cod(f) = B$. Nor is it considered sensible to write the {\em same} morphism between a different pair of objects, 
\begin{align*}
A &\morph{f} B \\
\synerror{}  A' &\morph{f} B'
\end{align*}
although it's possible to have different morphisms
\begin{align*}
A &\morph{f} B \\
A &\morph{g} B
\end{align*}
between the same pair of objects.

But while the identification of typing judgments with morphisms of a category works nicely for systems like the simply-typed lambda calculus, for better or worse, the extrinsic view of types is also an important aspect of type theory, and does not seem to sit well with this analogy.  For instance, certain basic type-theoretic notions such as {\em intersection types} and {\em subtyping} really call out for an extrinsic reading.  Typically, the most natural and direct reading of the {\em intersection introduction rule}
$$
\infer{\G \vd e : A \cap B}{\G \vd e : A & \G \vd e : B}
$$
makes different typing judgments about the same expression, as does the most natural reading of the {\em subsumption rule}
$$
\infer{\G \vd e : B}{\G \vd e : A & A \le B}
$$
Trying to give an intrinsic interpretation of these rules instead (e.g., by asserting the existence of ``hidden coercions'') requires mental gymnastics.

Indeed, the very idea of a typing {\em judgment} in some sense presupposes a domain of expressions which may be {\em judged}.  Per Martin-Löf gave an influential dissection of the concept of judgment in logic \cite{martinlof8x}, and in many ways his theory of dependent types is all about the {\em interplay} of intrinsic and extrinsic interpretations. This is to say that the mismatch between categories and type systems cannot be dismissed as a mere artifact of notation---instead it reveals that something is {\em conceptually} missing in the standard categorical reading of type theory.

\section{Type refinement systems, fibrations and bifibrations}

We want to offer a different reading, and our starting point will be a categorical analysis of the concept of {\em refinement} in type theory, which is a way of reconciling the intrinsic and the extrinsic meanings of types \cite{freemanpfenning91,pfenning08}.  The basic idea of refinement is simple: a ``type à la Curry'' should not be considered as a predicate in a vacuum, but really as a predicate over a given ``type à la Church''.  In the limiting case, perhaps, there is a {\em unique} underlying ``intrinsic type'' which all of the different ``extrinsic types'' refine, but most often one's world is more diverse, and it is helpful to keep this in mind.

Our main aim in this section is to explain how the analogy
\begin{center}
type system $\sim$ category
\end{center}
may be generalized to an analogy
\begin{center}
type refinement system $\sim$ functor
\end{center}
and to then use this analogy to give an equivalent type-theoretic reformulation of Grothendieck's definition of {\em fibration} and {\em bifibration}.

\subsection{Reading a functor as a type refinement system}
\label{sec:reffun}

Let us suppose given two categories $\c{I}$ and $\c{E}$, related by a functor $p : \c{E} \to \c{I}$.  We establish a few terminological and notational conventions.

We refer to the objects of $\c{I}$ as {\bf i-types} $A,B,\dots$, and to its morphisms as {\bf expressions} $f,g,\dots$.  We indicate the signature of an expression in the traditional categorical style by writing the expression above an arrow from its domain to its codomain,
$$
A \morph{f} B
$$
or else using the type-theoretic colon notation $f : A \to B$.  Expressions are composed in diagrammatic order, i.e., we write the composition of
$$A \morph{f} B\quad\text{and}\quad B \morph{g} C$$
as
$$A \morph{f ; g} C$$
We indicate the identity morphism on an i-type $A$ as the expression $\id_A$, or often simply ``$\id$'' when the i-type is clear from context.

We refer to the objects of $\c{E}$ as {\bf e-types} $S,T,\dots$ and to its morphisms as {\bf derivations} $\alpha,\beta,\dots$.  Otherwise, we keep the same notational conventions for e-types and derivations as for i-types and expressions, writing 
the composition of
$$S \morph{\alpha} T\quad\text{and}\quad T \morph{\beta} U$$
as 
$$S \morph{\alpha ; \beta} U$$
and the identity derivation on an e-type $S$ by $\id_S$.

\begin{definition}\label{defn:refine}
We say that an e-type $S$ {\bf refines} an i-type $A$, written $S \refs A$, if $p(S) = A$.
\end{definition}
Now, suppose given an expression $f : A \to B$ and two e-types $S \refs A$ and $T \refs B$.  Such a triple of information is called a {\bf typing judgment}, which we notate by writing $f$ \uline{below} an arrow from $S$ to $T$:
$$S \seq{f} T$$
In the special case where $A = B$ and $f = \id_A$, we use the abbreviated notation
$$S \nseq T \defeq S \seq{\id_A} T$$
which we call a {\bf subtyping judgment}.
\begin{definition}
A {\bf typing derivation} for a (sub)typing judgment $S \seq{f} T$ is a derivation  $\alpha : S \to T$ such that $p(\alpha) = f$.  We notate this concisely by placing $\alpha$ over the judgment:
$$\deduce{S \seq{f} T}{\alpha}$$
A (sub)typing judgment is said to be {\bf derivable} if there exists a typing derivation for that judgment.  We notate this with a turnstile to the left of the judgment:
$$\vdash S \seq{f} T$$
\end{definition}
We will adapt the standard conventions of proof theory in using {\em inference rules} as a compact notation for generating typing derivations.  Somewhat informally, we say that an inference rule
$$
\infer{S \seq{f} T}{S_1 \seq{f_1} T_1 & \dots & S_n \seq{f_n} T_n}
$$
is {\em admissible} if there is an operation $D$ for transforming derivations of the premises
$$
\deduce{S_1 \seq{f_1} T_1}{\alpha_1} \quad \cdots \quad 
\deduce{S_n \seq{f_n} T_n}{\alpha_n}
$$
into a derivation of the conclusion
$$
\deduce{S \seq{f} T}{D(\alpha_1,\dots,\alpha_n)}
$$
We will often also label an admissible rule with the corresponding operation on derivations, as an annotation to the side of the horizontal line:
$$
\infer[D]{S \seq{f} T}{S_1 \seq{f_1} T_1 & \dots & S_n \seq{f_n} T_n}
$$
For example, {\em composition} and {\em identity} typing rules are admissible:
$$
\infer[C]{S \seq{f;g} U}{S \seq{f} T & T \seq{g} U}
\qquad
\infer[I]{S \seq{\id} S}{}
$$
In particular, the operation $C$ is defined by $C(\alpha,\beta) = (\alpha ; \beta)$, while $I$ is defined by $I = \id_S$; the fact that these rules are admissible is immediate from the assumption that $p : \c{E} \to \c{I}$ is a functor.  Likewise, {\em reflexivity}, {\em transitivity}, and {\em subsumption} rules for subtyping are admissible, 
$$
\infer{S \nseq S}{}
\qquad
\infer{S \nseq U}{S \nseq T & T \nseq U}
$$

$$
\infer{S \seq{f} U}{S \seq{f} T & T \nseq U}
\quad
\infer{S \seq{g} U}{S \nseq T & T \seq{g} U}
$$
noting that reflexivity is by definition just another way of writing the identity typing rule $I$, and that transitivity and subsumption are all special cases of $C$ with one or both of $f$ and $g$ set to $\id$.

Since $\c{I}$ is a category, there is a notion of identity of expressions, which we notate $f \deq g$.  We allow ourselves to treat typing judgments modulo identity of expressions, so that
$$
S \seq{f} T\quad\text{and}\quad S \seq{g} T
$$
are considered interchangeable for expressions $f \deq g$, although for clarity we may sometimes indicate the move between them as a {\em conversion rule},
$$
\infer[\deq]{S \seq{g} T}{S \seq{f} T}
$$
Since $\c{E}$ is also a category, there is likewise a notion of equality of derivations.  However, we won't typically refer to equality between ``naked'' derivations, but only between derivations of particular typing judgments.  For example, the associativity and unit equations of $\c{E}$ imply the following equations betwen derivations of typing judgments:
\begin{itemize}
\item (associativity)
$$
\infer[C]{S \seq{(f;g);h} V}{\infer[C]{S \seq{f;g} U}{\deduce{S \seq{f} T}{\alpha} & \deduce{T \seq{g} U}{\beta}} & \deduce{U \seq{h} V}{\gamma}}
\quad\deq\quad
\infer[C]{S \seq{f;(g;h)} V}{\deduce{S \seq{f} T}{\alpha} & \infer[C]{T \seq{g;h} V}{\deduce{T \seq{g} U}{\beta} & \deduce{U \seq{h} V}{\gamma}}}
$$
\item (right unit)
$$
\infer[C]{S \seq{(f;g);\id} U}{\infer[C]{S \seq{f;g} U}{\deduce{S \seq{f} T}{\alpha} & \deduce{T \seq{g} U}{\beta}} & \infer[I]{U \seq{\id} U}{}}
\quad\deq\quad
\infer[C]{S \seq{f;g} U}{\deduce{S \seq{f} T}{\alpha} & \deduce{T \seq{g} U}{\beta}}
$$
\item (left unit)
$$
\infer[C]{T \seq{\id;(g;h)} V}{\infer[I]{T \seq{\id} T}{} & \infer[C]{T \seq{g;h} V}{\deduce{T \seq{g} U}{\beta} & \deduce{U \seq{h} V}{\gamma}}}
\quad\deq\quad
\infer[C]{T \seq{g;h} V}{\deduce{T \seq{g} U}{\beta} & \deduce{U \seq{h} V}{\gamma}}
$$
\end{itemize}
Finally, besides the usual notion of isomorphism of objects of $\c{E}$, we can consider a stronger notion of ``vertical'' isomorphism of e-types.
\begin{definition}
We say that two e-types $S,T \refs A$ refining a common i-type are {\bf vertically isomorphic} ($S \deq T$) when there exist a pair of subtyping derivations
$$
\deduce{S \nseq T}\alpha
\qquad
\deduce{T \nseq S}\beta
$$
which compose to reflexivity
$$
\infer[C]{S \nseq S}{\deduce{S \nseq T}\alpha & \deduce{T \nseq S}\beta}
\quad\deq\quad
\infer[I]{S \nseq S}{}
\qquad
\infer[C]{T \nseq T}{\deduce{T \nseq S}\beta & \deduce{S \nseq T}\alpha}
\quad\deq\quad
\infer[I]{T \nseq T}{}
$$
\end{definition}
In the sequel, whenever we say that two e-types are isomorphic, we really mean vertical isomorphism. 

Now let us take a moment to reflect. Of course, everything we have said so far is completely trivial, a mere matter of changing some of the standard categorical terminology and establishing some syntactic conventions.  We hope the point that comes across, though, is that a lot of type-theoretic commentary can be extracted from the mere existence of a functor (if we only know where to look for the hidden soundtrack!).  In particular, these observations motivate our adopting the following simple definition.
\begin{definition}A {\bf type refinement system} is just a functor $p : \c{E} \to \c{I}$.
\end{definition}

\subsection{A typical example}
\label{sec:refexample}

To try to provide a bit of intuition for this funny way of reading functors, we will consider a simple and naive example, which is indeed perhaps the ``folk model'' for type refinement systems.  For $\c{I}$ we take the category of sets and functions $\Set$, while for $\c{E}$ we take the category of {\em subsets and image inclusions} $\SubSet$.  An object of $\SubSet$ is just a subset of a given underlying set
$$S \subseteq A$$
while a morphism
$$(S \subseteq A) \to (T \subseteq B)$$
is a function between the underlying sets
$$f : A \to B$$
such that the image of the first subset is included in the second
$$f(S) \subseteq T$$
As the functor $p : \SubSet \to \Set$, we take the forgetful map sending a subset $S \subseteq A$ to its underlying set $A$, and a function $f : A \to B$ to itself (simply forgetting the fact that $f(S) \subseteq T$).

By most interpretations, this model is already quite rich with i-types.  For example we might suppose it contains i-types of the natural numbers, integers, sequences of integers, 
$$
\mathbb{N}, \mathbb{Z}, \mathbb{Z}^{\mathbb{N}}
$$
and many more besides.  But the philosophy of type refinement is that rather than trying to translate every detail of the world into the language of $\c{I}$ (which is perhaps the traditional view of set-theoretic foundations), it is sometimes better to begin with a {\em rough} statement in $\c{I}$, then provide additional explanation in $\c{E}$.  Thus, for instance, we might consider the e-types of {\em odd} natural numbers or of {\em prime} natural numbers,
\begin{align*}
\{n \mid \exists k.n = 2k + 1\} &\refs \mathbb{N} \\
\{n \mid n\text{ prime}\} &\refs \mathbb{N}
\end{align*}
the e-types of {\em non-zero} integers or of  {\em non-negative} integers,
\begin{align*}
\{x \mid x \ne 0\} &\refs \mathbb{Z} \\
\{x \mid x \ge 0\} &\refs \mathbb{Z}
\end{align*}
the e-types of {\em linear} sequences or of {\em bounded} sequences,
\begin{align*}
\{f \mid \exists a,b \forall n. f(n) = a\cdot n + b\} &\refs \mathbb{Z}^{\mathbb{N}} \\
\{f \mid \exists x\forall n.f(n) \le x\} &\refs  \mathbb{Z}^{\mathbb{N}}
\end{align*}
and so on.  The point is that these e-types will always be considered with respect to the original i-types which they refine.

For example, the question whether ``every prime number is odd'' may be sensibly posed as a subtyping problem,
$$\{n \mid n\text{ prime}\} \nseq \{n \mid \exists k.n = 2k + 1\}$$
whose answer happens to be negative (i.e., the judgment is not derivable).  On the other hand, the question of whether ``every linear sequence is prime'' is not really sensible without resort to arbitrary conventions or encodings, and the corresponding subtyping judgment
$$\synerror \{f \mid \exists a,b \forall n. f(n) = a\cdot n + b\} \nseq \{n \mid n\text{ prime}\}$$
is not well-formed, since the two e-types refine different i-types.  As another example, if we take 
$$\lambda x.x^2 : \mathbb{Z} \to \mathbb{Z}$$
to be the squaring function on the integers, then the following three typing judgments are respectively derivable, underivable, and ill-formed:
\begin{align*}
\vdash \{x \mid x \ne 0\} &\seq{\lambda x.x^2} \{x \mid x \ge 0\}  \\
\nvdash \{x \mid x \ge 0\} &\seq{\lambda x.x^2}  \{x \mid x \ne 0\} \\
\synerror{}\{x \mid x \ne 0\} &\seq{\lambda x.x^2}  \{f \mid \exists x\forall n.f(n) \le x\}
\end{align*}

\subsection{Reading Grothendieck in translation}
\label{sec:fibref}

Let us recall the definition of when a functor $p : \c{E} \to \c{I}$ defines a {\em fibration} à la Grothendieck.
\begin{definition}
A morphism $\alpha : T' \to T$ in $\c{E}$ is said to be ($p-$){\bf cartesian} if for every object $S \in \c{E}$ and every pair of morphisms $\beta : S \to T$ and $g : p(S) \to p(T')$ such that $p(\beta) = g;p(\alpha)$, there is a unique morphism $\beta' : S \to T'$ such that $\beta = \alpha;\beta'$ and $p(\beta') = g$.  Let $f : A \to B$ be a morphism in $\c{I}$ and $T$ be an object of $\c{E}$ such that $p(T) = B$.  A morphism $\alpha$ in $\c{E}$ is said to be a {\bf cartesian lifting} of $f$ to $T$ if $p(\alpha) = f$, $\cod(\alpha) = T$, and $\alpha$ is cartesian.
\end{definition}
\begin{definition}
A functor $p : \c{E} \to \c{I}$ is said to be a {\bf fibration} if for every morphism $f : A \to B$ in $\c{I}$ and object $T \in \c{E}$ such that $p(T) = B$, $f$ has a cartesian lifting to $T$.
\end{definition}
\noindent
This definition may seem a bit mysterious to the uninitiated.  Rather than attempting to motivate it directly, we will now give an equivalent formulation in the language of  type refinement.  Again, we assume a fixed functor $p : \c{E} \to \c{I}$ and the notational and terminological conventions of \Cref{sec:reffun}.

\begin{definition}\label{def:pb}
Let $f : A \to B$ and $T \refs B$.  A ($p$-){\bf pullback of $T$ along $f$} is an e-type $\pull{f}T \refs A$ equipped with a pair of admissible rules
$$
\infer[L\pull{f}]{\pull{f}T \seq{f} T}{}
\qquad
\infer[R\pull{f}]{S \seq{g} \pull{f}T}{S \seq{g;f} T}
$$
referred to as the \uline{left rule} and the \uline{right rule}, such that for all derivations
$$\deduce{S \seq{g;f} T}{\beta}\quad\text{and}\quad\deduce{S \seq{g} \pull{f}T}{\eta}$$
we have equalities
$$
\infer[C]{S \seq{g;f} T}{
 \infer[R\pull{f}]{S \seq{g} \pull{f}T}{\deduce{S \seq{g;f} T}{\beta}} &
 \infer[L\pull{f}]{\pull{f}T \seq{f} T}{}
}
\quad\deq\quad
\deduce{S \seq{g;f} T}{\beta}
$$
and
$$
\deduce{S \seq{g} \pull{f}T}{\eta}
\quad\deq\quad
\infer[R\pull{f}]{S \seq{g} \pull{f}T}
{\infer[C]{S \seq{g;f} T}
{\deduce{S \seq{g} \pull{f}T}{\eta} & \infer[L\pull{f}]{\pull{f}T \seq{f} T}{}}}
$$
\end{definition}
\begin{proposition}\label{prop:pbuniq}
Any two pullbacks of $T$ along $f$ are isomorphic.
\end{proposition}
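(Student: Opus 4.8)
The plan is to build a vertical isomorphism by hand out of the left and right rules, and then verify that its two halves compose to reflexivity purely by the $\beta$- and $\eta$-equalities in \Cref{def:pb}. Let $P$ and $Q$ be two pullbacks of $T$ along $f$, both refining $A$, and write $L_P,R_P$ (resp.\ $L_Q,R_Q$) for the left and right rules of $P$ (resp.\ $Q$).

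First I would produce the two candidate subtyping derivations. Since $\id;f \deq f$, the derivation $L_P : P \seq{f} T$ may be read (via the conversion rule) as a derivation of $P \seq{\id;f} T$, and feeding it to the right rule of $Q$ gives $\phi \defeq R_Q(L_P) : P \nseq Q$. Symmetrically, $\psi \defeq R_P(L_Q) : Q \nseq P$.

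Next I would show $C(\phi,\psi) \deq \id_P$, the reflexivity derivation on $P$. The engine of the computation is the $\beta$-equality, which collapses a right rule immediately postcomposed with the matching left rule: $C(R_P(L_Q),L_P) \deq L_Q$ and $C(R_Q(L_P),L_Q) \deq L_P$ (again using $\id;f \deq f$). Starting from $C(\phi,\psi) : P \nseq P$, I would apply the $\eta$-equality of $P$ to rewrite it as $R_P(C(C(\phi,\psi),L_P))$, regroup by associativity as $R_P(C(\phi,C(\psi,L_P)))$, and then discharge the two $\beta$-collapses in turn, first $C(\psi,L_P)\deq L_Q$ and then $C(\phi,L_Q)\deq L_P$, landing on $R_P(L_P)$. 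On the other side, applying the $\eta$-equality of $P$ to $\id_P$ and using the left-unit law gives $\id_P \deq R_P(C(\id_P,L_P)) \deq R_P(L_P)$. Comparing the two, $C(\phi,\psi)\deq R_P(L_P)\deq \id_P$, and the symmetric argument (swapping the roles of $P$ and $Q$) gives $C(\psi,\phi)\deq \id_Q$; hence $P \deq Q$.

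The argument is really just the standard uniqueness-of-universal-objects diagram chase transcribed into this syntax, so the only genuinely delicate point---and the main thing I would be careful about---is the bookkeeping of the conversion rule. Every application of a right rule along $\id$ silently moves between $S \seq{\id;f} T$ and $S \seq{f} T$, and the two $\eta$-steps above depend on the unit laws for composition; once those conversions are tracked faithfully the equalities chain together exactly as claimed. (The same $\eta$-equality also shows the isomorphism is the unique vertical one, though that is more than the proposition asks.)
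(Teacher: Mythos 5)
Your proposal is correct and follows essentially the same route as the paper: form the two candidate subtyping derivations $R_Q(L_P)$ and $R_P(L_Q)$, then show they compose to reflexivity by an $\eta$-expansion, an associativity regrouping, two $\beta$-collapses landing on $R_P(L_P)$, and a final identification of $R_P(L_P)$ with the identity via the $\eta$-equality applied to $\id_P$. The attention you give to the silent $\id;f\deq f$ conversions is a point the paper leaves implicit, but the computation is the same one.
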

\begin{proof}
Let $T'$ and $T''$ both be pullbacks of $T$ along $f$, equipped with corresponding admissible rules
$$
\infer[LT']{T' \seq{f} T}{}
\qquad
\infer[RT']{S \seq{g} T'}{S \seq{g;f} T}
$$
and
$$
\infer[LT'']{T'' \seq{f} T}{}
\qquad
\infer[RT'']{S \seq{g} T''}{S \seq{g;f} T}
$$
We can build derivations of $T' \nseq T''$ and $T'' \nseq T'$ by
$$
\infer[RT'']{T' \nseq T''}{\infer[LT']{T' \seq{f} T}{}}
\qquad
\infer[RT']{T'' \nseq T'}{\infer[LT'']{T'' \seq{f} T}{}}
$$
Moreover, since
\small
\begin{center}
\begin{tabular}{m{4.8cm} m{10pt} m{4cm}}
$\infer[C]{T' \nseq T'}{
 \infer[RT'']{T' \nseq T''}{\infer[LT']{T' \seq{f} T}{}}
 &
 \infer[RT']{T'' \nseq T'}{\infer[LT'']{T'' \seq{f} T}{}}}
$
&$\deq$&
$\infer[RT']{T' \nseq T'}{
 \infer[C]{T' \seq{f} T}{
 \infer[C]{T' \nseq T'}{
 \infer[RT'']{T' \nseq T''}{\infer[LT']{T' \seq{f} T}{}}
 &
 \infer[RT']{T'' \nseq T'}{\infer[LT'']{T'' \seq{f} T}{}}
 }
 &
 \infer[LT']{T' \seq{f} T}{}
 }}$ \\\\
&$\deq$&
$\infer[RT']{T' \nseq T'}{
 \infer[C]{T' \seq{f} T}{
 \infer[RT'']{T' \nseq T''}{\infer[LT']{T' \seq{f} T}{}} &
 \infer[C]{T'' \seq{f} T'}{
 \infer[RT']{T'' \nseq T'}{\infer[LT'']{T'' \seq{f} T}{}}
 &
 \infer[LT']{T' \seq{f} T}{}
 }}}
$\\\\
&$\deq$&
$\infer[RT']{T' \nseq T'}{
 \infer[C]{T' \seq{f} T}{
 \infer[RT'']{T' \nseq T''}{\infer[LT']{T' \seq{f} T}{}} &
 \infer[LT'']{T'' \seq{f} T}{}
 }}$
\\\\
&$\deq$&
$\infer[RT']{T' \nseq T'}{
 \infer[LT']{T' \seq{f} T}{}
 }$\\\\
&$\deq$&
$\infer[I]{T' \nseq T'}{}$
\end{tabular}
\end{center}
\normalsize
and likewise (by a symmetric argument)
\begin{center}
\begin{tabular}{m{4.8cm} m{10pt} m{4cm}}
$
 \infer[C]{T'' \nseq T''}{
 \infer[RT']{T'' \nseq T'}{\infer[LT'']{T'' \seq{f} T}{}}
  &
  \infer[RT'']{T' \nseq T''}{\infer[LT']{T' \seq{f} T}{}}
 }$
&$\deq$&
$ \infer[I]{T'' \nseq T''}{}$
\end{tabular}
\end{center}
we have $T' \deq T''$ (and so we may speak of \emph{the} pullback $\pull{f}T$ when one exists).
\end{proof}
\begin{proposition}Whenever both sides exist, $\pull{(f;g)}T \deq \pull{f}\pull{g}T$
\end{proposition}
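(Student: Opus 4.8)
The plan is to reduce the claim to the uniqueness result already established in \Cref{prop:pbuniq}, rather than building an explicit pair of mutually inverse subtyping derivations by hand. Concretely, I would show that the iterated pullback $\pull{f}\pull{g}T$ can itself be equipped with the structure of a pullback of $T$ along the composite expression $f;g$. Since $\pull{(f;g)}T$ is by hypothesis also such a pullback, \Cref{prop:pbuniq} then immediately forces the two e-types to be vertically isomorphic, which is exactly $\pull{f}\pull{g}T \deq \pull{(f;g)}T$.

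To carry this out, fix $f : A \to B$, $g : B \to C$ and $T \refs C$, and assume all the relevant pullbacks exist, so that we have left and right rules $L\pull{g}, R\pull{g}$ for $\pull{g}T \refs B$ and $L\pull{f}, R\pull{f}$ for $\pull{f}\pull{g}T \refs A$. First I would define the candidate \emph{left rule} exhibiting $\pull{f}\pull{g}T$ as a pullback along $f;g$ by composing the two given left rules, namely the derivation $C(L\pull{f}, L\pull{g})$ of $\pull{f}\pull{g}T \seq{f;g} T$ obtained by passing through $\pull{g}T$. For the candidate \emph{right rule}, given a derivation $\beta$ of $S \seq{h;(f;g)} T$, I would first reassociate to read it as a derivation of $S \seq{(h;f);g} T$, apply $R\pull{g}$ to obtain $S \seq{h;f} \pull{g}T$, and then apply $R\pull{f}$ to obtain $S \seq{h} \pull{f}\pull{g}T$; the composite operation $\beta \mapsto R\pull{f}(R\pull{g}(\beta))$ is the desired right rule.

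Next I would check the two defining equations of \Cref{def:pb} for this pair of rules. The first ($\beta$-like) equation, which demands that applying the candidate right rule and then composing with the candidate left rule returns the original derivation, unfolds by associativity of $C$ into $C(C(R\pull{f}(R\pull{g}(\beta)), L\pull{f}), L\pull{g})$; applying the corresponding equation for $\pull{f}$ collapses the inner composite to $R\pull{g}(\beta)$, and then the corresponding equation for $\pull{g}$ collapses the remainder to $\beta$. Symmetrically, the second ($\eta$-like) equation unfolds by associativity so that the equation for $\pull{g}$ may be applied to the derivation $C(\eta, L\pull{f})$ of $S \seq{h;f} \pull{g}T$, after which the equation for $\pull{f}$ finishes the job. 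In both cases the only manipulations are re-bracketings via the associativity equation for $C$ together with one application of each of the two given universal equations.

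Finally, having exhibited $\pull{f}\pull{g}T$ as a pullback of $T$ along $f;g$, I would invoke \Cref{prop:pbuniq} to conclude $\pull{f}\pull{g}T \deq \pull{(f;g)}T$. I do not anticipate any genuine obstacle here: the argument is purely formal. The one place requiring care is the bookkeeping of the reassociations, in particular keeping track of exactly which expression sits beneath each arrow, so that each of the four invocations of a universal equation is applied at the correct type and direction.
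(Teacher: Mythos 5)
Your proposal is correct, but it is organized differently from the paper's proof. The paper proceeds directly: it writes down the two subtyping derivations $\pull{(f;g)}T \nseq \pull{f}\pull{g}T$ (namely $R\pull{f}$ applied to $R\pull{g}$ applied to $L\pull{(f;g)}$) and $\pull{f}\pull{g}T \nseq \pull{(f;g)}T$ (namely $R\pull{(f;g)}$ applied to the composite of the two left rules), and then asserts that ``an easy calculation'' shows they compose to the identity, without carrying that calculation out. You instead verify once that $\pull{f}\pull{g}T$, equipped with the composite left rule $C(L\pull{f},L\pull{g})$ and the composite right rule $\beta \mapsto R\pull{f}(R\pull{g}(\beta))$, satisfies the two defining equations of \Cref{def:pb} for the expression $f;g$, and then invoke \Cref{prop:pbuniq}. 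Your verification of the two equations is right: each one unfolds by associativity of $C$ and then collapses by one application of the corresponding equation for $\pull{f}$ and one for $\pull{g}$. If one unfolds the uniqueness proposition at these rules, the resulting mutually inverse derivations are exactly the ones the paper writes down, so the two arguments produce the same isomorphism; the difference is where the computational work is placed. Your route has the advantage that the calculation the paper elides is replaced by the calculation already written out in full in the proof of \Cref{prop:pbuniq}, at the cost of having to check the $\beta$- and $\eta$-equations for the composite rules; the paper's route is shorter to state but leaves its closing calculation to the reader.
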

\begin{proof}
We construct derivations of $\pull{(f;g)}T \nseq \pull{f}\pull{g}T$ and $\pull{f}\pull{g}T\nseq \pull{(f;g)}T$ as
$$\infer[R\pull{f}]{\pull{(f;g)}T \nseq \pull{f}\pull{g}T}{
\infer[R\pull{g}]{\pull{(f;g)}T \seq{f} \pull{g}T}{
\infer[L\pull{(f;g)}]{\pull{(f;g)}T \seq{f;g} T}{
}}}
\qquad
\infer[R\pull{(f;g)}]{\pull{f}\pull{g}T \nseq \pull{(f;g)}T}{
\infer[C]{\pull{f}\pull{g}T \seq{f;g} T}{
\infer[L\pull{f}]{\pull{f}\pull{g}T \seq{f} \pull{g}T}{} &
\infer[L\pull{g}]{\pull{g}T \seq{g} T}{}
}}
$$
and again by an easy calculation, we can show that these two derivations compose to the identity.
\end{proof}
We write out these explicit proofs in order to demonstrate a certain style of argument (similar to reasoning in sequent calculus), but of course these properties of pullbacks are well-known.  Indeed, as the following proposition asserts, we have just dressed up Grothendieck's definition of cartesian liftings in type-theoretic notation.
\begin{proposition}\label{prop:pullcart} $\alpha : T' \to T$ is a cartesian lifting of $f$ to $T$ if and only if $T'$ is a pullback of $f$ along $T$, with the left rule given by $\alpha$, and the right rule defined by the universal property of $\alpha$.
\end{proposition}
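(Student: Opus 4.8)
The plan is to prove both directions by unfolding the two definitions and checking that they record exactly the same data in different dress. The dictionary I will use is: the left rule $L\pull{f}$ of \Cref{def:pb} corresponds to the cartesian morphism $\alpha$ itself; the right rule $R\pull{f}$ corresponds to the \emph{existence} clause of the universal property of $\alpha$; the first (``computation'') equation of \Cref{def:pb} corresponds to the factorization $\beta';\alpha=\beta$ asserted by that clause; and the second (``eta'') equation corresponds to the \emph{uniqueness} clause. Once this dictionary is fixed, both implications reduce to routine verification.

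For the direction from cartesian lifting to pullback, I would assume $\alpha : T' \to T$ is a cartesian lifting of $f$ to $T$, so that $p(\alpha)=f$ and $\cod(\alpha)=T$; since $p(\alpha)=f : A\to B$ this forces $p(T')=A$, i.e.\ $T'\refs A$. I would then set $\pull{f}T := T'$ and declare $\alpha$ to be the left rule, observing that it is by definition a derivation of $\pull{f}T \seq{f} T$. To define the right rule, I would note that a derivation of $S\seq{g;f}T$ is precisely a morphism $\beta : S\to T$ with $p(\beta)=g;f=g;p(\alpha)$, which is exactly the data to which the cartesian property applies; invoking it yields a unique $\beta' : S\to T'$ with $\beta';\alpha=\beta$ and $p(\beta')=g$, and I would set $R\pull{f}(\beta):=\beta'$, a derivation of $S\seq{g}\pull{f}T$. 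The computation equation then holds immediately because $\beta';\alpha=\beta$, while the eta equation follows by applying \emph{uniqueness} to the two derivations $\eta$ and $R\pull{f}(\eta;\alpha)$: both lie over $g$ and compose with $\alpha$ to $\eta;\alpha$, so they coincide.

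For the converse, I would assume $T'=\pull{f}T$ is a pullback of $T$ along $f$, take $\alpha:=L\pull{f}$ (so $p(\alpha)=f$ and $\cod(\alpha)=T$, giving a lifting), and show $\alpha$ is cartesian. Given $S$, a morphism $\beta:S\to T$, and $g$ with $p(\beta)=g;p(\alpha)$, I would read $\beta$ as a derivation of $S\seq{g;f}T$ and set $\beta':=R\pull{f}(\beta)$, a derivation of $S\seq{g}\pull{f}T$ and hence a morphism over $g$; the computation equation gives $\beta';\alpha=\beta$, establishing existence. For uniqueness, any competitor $\beta''$ over $g$ with $\beta'';\alpha=\beta$ is itself a derivation of $S\seq{g}\pull{f}T$, so the eta equation forces $\beta''=R\pull{f}(\beta'';\alpha)=R\pull{f}(\beta)=\beta'$.

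Since the statement is genuinely a definitional repackaging, I do not expect a real obstacle, only a need for bookkeeping. The two points requiring care are: keeping the composition in diagrammatic order, so that Grothendieck's factorization reads $\beta';\alpha=\beta$ and matches the first equation of \Cref{def:pb} exactly; and checking that ``a derivation of $S\seq{g;f}T$'' is literally the same datum as ``a morphism $\beta:S\to T$ over $g;p(\alpha)$'' quantified in the definition of cartesian lifting, so that the right rule and the existence clause are interdefinable. The only load-bearing equivalence is that the eta equation says precisely that $R\pull{f}$ inverts post-composition with $\alpha$ on derivations lying over $g$, which is exactly the uniqueness half of cartesianness.
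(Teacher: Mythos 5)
Your proposal is correct and takes exactly the route the paper intends: the paper's proof is simply ``essentially immediate by unwinding the definitions,'' and your dictionary (left rule $\leftrightarrow$ $\alpha$, right rule $\leftrightarrow$ existence, computation equation $\leftrightarrow$ factorization, eta equation $\leftrightarrow$ uniqueness) is the correct unwinding, carried out in both directions. Your attention to diagrammatic order is well placed --- the type-correct factorization is indeed $\beta';\alpha=\beta$ as you write it.
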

\begin{proof}
Essentially immediate by unwinding the definitions.
\end{proof}
We can use this correspondence to restate the definition of when a functor is a fibration.
\begin{definition}
We say that a type refinement system {\bf has all pullbacks} if the pullback of $T$ along $f$ exists for every expression $f : A \to B$ and e-type $T \refs B$, or, to put it more concisely, if it is equipped with the following \uline{e-type formation rule}:
$$
\infer{\pull{f}T \refs A}{f : A \to B & T \refs B}
$$
\end{definition}
\begin{proposition}$p$ is a fibration iff $p$ has all pullbacks.
\end{proposition}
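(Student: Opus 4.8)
The plan is to reduce this biconditional entirely to Proposition~\ref{prop:pullcart}, which already establishes a pointwise correspondence between cartesian liftings and pullbacks. First I would observe that both the Grothendieck definition of fibration and the property ``has all pullbacks'' are statements universally quantified over pairs $(f,T)$ with $f : A \to B$ an expression and $T$ an e-type satisfying $p(T) = B$; crucially, the side-condition $p(T) = B$ is by definition exactly the refinement $T \refs B$, so the two quantifier ranges coincide. It therefore suffices to transfer, for each fixed such pair, the existence of a cartesian lifting of $f$ to $T$ into the existence of a pullback $\pull{f}T$, and back.

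For the forward direction I would assume $p$ is a fibration, fix an arbitrary $f : A \to B$ and $T \refs B$, and extract a cartesian lifting $\alpha : T' \to T$. Proposition~\ref{prop:pullcart} then immediately gives that $T'$, with left rule $\alpha$ and right rule induced by the universal property of $\alpha$, is a pullback of $T$ along $f$; since the pair was arbitrary, $p$ has all pullbacks.

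For the converse I would assume $p$ has all pullbacks, fix $f : A \to B$ and $T \refs B$, and take the pullback $\pull{f}T$ together with its left rule, which furnishes a derivation $\alpha : \pull{f}T \to T$ with $p(\alpha) = f$ and $\cod(\alpha) = T$. Reading Proposition~\ref{prop:pullcart} in the other direction, this $\alpha$ is a cartesian lifting of $f$ to $T$, so $p$ is a fibration.

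I do not expect any genuine obstacle: all of the real content---the translation between the universal property of a cartesian morphism and the left/right rules of a pullback---has been isolated into Proposition~\ref{prop:pullcart}. The only thing demanding a little care is the bookkeeping that the two definitions quantify over exactly the same data with the same side-conditions, which the identification of $p(T) = B$ with $T \refs B$ settles at once.
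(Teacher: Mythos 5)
Your proposal is correct and follows exactly the route the paper intends: the paper states this proposition without proof precisely because it is the pointwise correspondence of \Cref{prop:pullcart} applied uniformly over all pairs $(f,T)$ with $T \refs B$, which is just what you spell out. The only content beyond \Cref{prop:pullcart} is the bookkeeping that the quantifier ranges agree, and you handle that correctly by identifying $p(T)=B$ with $T \refs B$.
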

Now we give a similar makeover to Grothendieck's definition of {\em bifibration}.
\begin{definition}$p : \c{E} \to \c{I}$ is said to be a {\bf bifibration} if it is a fibration and if $p^\op : \c{E}^\op \to \c{I}^\op$ is also a fibration.
\end{definition}
\begin{definition}\label{def:pf}
Let $S \refs A$ and $f : A \to B$.  A ($p$-){\bf pushforward of $S$ along $f$} is an e-type $\push{f}S \refs B$ equipped with a pair of admissible rules
$$
\infer[Lf]{\push{f}S \seq{g} T}{S \seq{f;g} T}
\qquad
\infer[Rf]{S \seq{f} \push{f}S}{}
$$
such that for all derivations
$$\deduce{S \seq{f;g} T}{\beta}\quad\text{and}\quad\deduce{\push{f}S \seq{g} T}{\eta}$$
we have equalities
$$
\infer[C]{S \seq{f;g} T}{
 \infer[Rf]{S \seq{f} \push{f}S}{}
  &
 \infer[Lf]{\push{f}S \seq{g} T}{\deduce{S \seq{f;g} T}{\beta}}
}
\quad\deq\quad
\deduce{S \seq{f;g} T}{\beta}
$$
and
$$
\deduce{\push{f}S \seq{g} T}{\eta}
\quad\deq\quad
\infer[Lf]{\push{f}S \seq{g} T}
{\infer[C]{S \seq{f;g} T}
{\infer[Rf]{S \seq{f} \push{f}S}{} & \deduce{\push{f}S \seq{g} T}{\eta}}}
$$
\end{definition}
\begin{proposition}\label{prop:pbuniq}
Any two pushforwards of $S$ along $f$ are isomorphic.
\end{proposition}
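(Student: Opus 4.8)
The plan is to mimic the proof of the earlier proposition establishing that any two pullbacks are isomorphic, exploiting the fact that \Cref{def:pf} is the formal dual of \Cref{def:pb}, obtained by passing from $p$ to the opposite functor $p^\op$ and thereby exchanging pullbacks with pushforwards and the left rule with the right rule. Concretely, given two pushforwards $S'$ and $S''$ of $S$ along $f$, equipped with left and right rules which I will call $Lf',Rf'$ and $Lf'',Rf''$ respectively, I would first construct subtyping derivations in both directions. To build $S' \nseq S''$ I apply the left rule of $S'$ to the unit derivation supplied by the right rule of $S''$, instantiating the parameter $g$ to $\id$ (so that $f;g \deq f$):
$$
\infer[Lf']{S' \nseq S''}{\infer[Rf'']{S \seq{f} S''}{}}
\qquad
\infer[Lf'']{S'' \nseq S'}{\infer[Rf']{S \seq{f} S'}{}}
$$
the second derivation being obtained symmetrically.

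The second and main step is to verify that these two derivations compose to the respective identity derivations, yielding a vertical isomorphism $S' \deq S''$. For this I would run the calculation dual to the one displayed in the pullback proof: starting from the composite of $S' \nseq S''$ with $S'' \nseq S'$, I rewrite using the first equation of \Cref{def:pf} (the $\beta$-like rule stating that applying $Lf$ to a derivation $\beta$ and then precomposing with the unit $Rf$ recovers $\beta$) together with the second, $\eta$-like equation and the associativity and unit laws for $C$. Each step is an instance of one of these equalities, exactly paralleling the chain of rewrites in the pullback argument, and the companion identity for $S'' \nseq S''$ follows by interchanging the roles of $S'$ and $S''$.

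The step I expect to require the most attention is bookkeeping rather than any genuine difficulty: because passing to pushforwards swaps the roles of the left and right rules relative to pullbacks, I must keep track of which equation of \Cref{def:pf} licenses each rewrite, remembering that here it is the \emph{left} rule $Lf$ that carries the universal content, while the \emph{right} rule $Rf$ merely supplies the unit derivation $S \seq{f} \push{f}S$. Once the correspondence $L \leftrightarrow R$ and $\pull{f} \leftrightarrow \push{f}$ is fixed, the verification is entirely formal.
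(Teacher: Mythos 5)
Your proposal is correct and matches the paper's intent exactly: the paper states this proposition without proof, leaving it as the formal dual of the pullback-uniqueness proposition, and your derivations of $S' \nseq S''$ and $S'' \nseq S'$ together with the dualized $\beta$/$\eta$ calculation are precisely that dual argument. Nothing is missing.
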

\begin{proposition}Whenever both sides exist, $\push{(g;f)}T \deq \push{f}\push{g}T$
\end{proposition}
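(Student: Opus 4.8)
The plan is to observe that this statement is exactly the mirror image, under the fibration/opfibration duality, of the preceding proposition $\pull{(f;g)}T \deq \pull{f}\pull{g}T$. By \Cref{def:pf}, the pushforward of an e-type along $f$ for the functor $p : \c{E} \to \c{I}$ is precisely a pullback along $f$ for the opposite functor $p^\op : \c{E}^\op \to \c{I}^\op$: the left and right rules of a pushforward are obtained from those of a pullback simply by reversing every expression and every derivation, and the two defining equalities correspond under this reversal. Since composition in $\c{I}^\op$ runs in the opposite order, the $\c{I}^\op$-composite of $g$ and $f$ is the $\c{I}$-morphism $g;f$ — which is exactly why the composition order flips from $f;g$ in the pullback statement to $g;f$ here. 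Applying the preceding pullback-composition proposition to $p^\op$ therefore yields $\push{(g;f)}T \deq \push{f}\push{g}T$ immediately.

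If instead one prefers a self-contained argument in the explicit sequent-calculus style used above for the pullback case, I would first exhibit the two subtyping derivations witnessing the isomorphism. In one direction, apply the left rule $L(g;f)$ to the composite of the two right rules:
$$
\infer[L(g;f)]{\push{(g;f)}T \nseq \push{f}\push{g}T}{
\infer[C]{T \seq{g;f} \push{f}\push{g}T}{
\infer[Rg]{T \seq{g} \push{g}T}{} &
\infer[Rf]{\push{g}T \seq{f} \push{f}\push{g}T}{}}}
$$
In the other direction, apply $Lf$ to the result of applying $Lg$ to the right rule $R(g;f)$:
$$
\infer[Lf]{\push{f}\push{g}T \nseq \push{(g;f)}T}{
\infer[Lg]{\push{g}T \seq{f} \push{(g;f)}T}{
\infer[R(g;f)]{T \seq{g;f} \push{(g;f)}T}{}}}
$$
Here I silently use the conversion rule to rewrite $(g;f);\id \deq g;f$ and $f;\id \deq f$ in the relevant premises. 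These two derivations are precisely the duals of the two derivations written out for the pullback case, with every left rule traded for a right rule and vice versa.

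It then remains to check that the two composites $\push{(g;f)}T \nseq \push{(g;f)}T$ and $\push{f}\push{g}T \nseq \push{f}\push{g}T$ each reduce to the identity. This is a routine calculation mirroring the pullback computation step for step, repeatedly invoking the two defining equalities of \Cref{def:pf} — the equation that cancels the right rule $Rf$ against the left rule $Lf$, and the equation that recognises any derivation out of $\push{f}S$ as an application of the left rule — together with the associativity and unit laws for composition.

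The only part requiring any care is the bookkeeping of which defining equation applies at each rewrite and of how the intermediate composites such as $g;f$ regroup, but no genuine obstacle arises: the calculation is entirely mechanical, and the duality argument of the first paragraph makes even this verification unnecessary.
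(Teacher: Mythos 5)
Your proposal is correct. The paper in fact states this proposition without any proof at all, implicitly relying on exactly the duality you articulate in your first paragraph (a pushforward for $p$ is a pullback for $p^\op$, which is why the composite flips from $f;g$ to $g;f$); your explicit derivations are the precise duals of the two derivations the paper writes out for the pullback case, with $L$ and $R$ rules exchanged, and the remaining check that the composites reduce to the identity is deferred to a routine calculation just as the paper does for pullbacks. So you have supplied, correctly, the argument the authors chose to omit.
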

\begin{definition}
We say that a type refinement  system {\bf has all pushforwards} if it is equipped with the following e-type formation rule:
$$
\infer{\push{f}S \refs B}{S \refs A & f : A \to B}
$$
\end{definition}
\begin{proposition}$p$ is a bifibration iff $p$ has all pullbacks and pushforwards.
\end{proposition}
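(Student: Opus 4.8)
The plan is to reduce this statement to the earlier proposition characterizing fibrations (``$p$ is a fibration iff $p$ has all pullbacks'') by exploiting duality. By definition $p$ is a bifibration precisely when both $p$ and $p^\op : \c{E}^\op \to \c{I}^\op$ are fibrations. Applying the fibration proposition to $p$ itself immediately gives that $p$ is a fibration iff $p$ has all pullbacks, so the only remaining task is to show that $p^\op$ is a fibration iff $p$ has all pushforwards; and for this, applying the fibration proposition now to $p^\op$, it suffices to show that $p^\op$ has all pullbacks iff $p$ has all pushforwards.

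The key step is therefore a duality lemma: a $p$-pushforward of $S$ along $f$ is exactly a $p^\op$-pullback of $S$ along $f^\op$. To establish this I would carefully translate Definition~\ref{def:pf} across the op-functor and check that it coincides with Definition~\ref{def:pb} read in $p^\op$. Concretely, a judgment $X \seq{\phi} Y$ in $p^\op$ corresponds to the judgment $Y \seq{\phi^\op} X$ in $p$, and diagrammatic composition reverses, $g^\op ; f^\op = (f;g)^\op$. A morphism $f^\op : B \to A$ of $\c{I}^\op$ has codomain $A$, so the data of a $p^\op$-pullback is an object $S \refs A$ together with $\pull{f^\op}S$, matching exactly the data of a $p$-pushforward of $S$ along $f$. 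Under this dictionary the left and right rules swap roles: the left rule of the $p^\op$-pullback, $\pull{f^\op}S \seq{f^\op} S$, translates to $S \seq{f} \pull{f^\op}S$, which is exactly the right rule $Rf$ of the $p$-pushforward; and the right rule of the $p^\op$-pullback translates to the left rule $Lf$, turning a derivation of $S \seq{f;g} T$ into one of $\pull{f^\op}S \seq{g} T$. Identifying $\pull{f^\op}S$ with $\push{f}S$, the two constructions agree on objects and on their admissible rules.

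I expect the main obstacle to be the bookkeeping of the two equational conditions (the $\beta$-like and $\eta$-like laws) rather than the rules themselves. I would verify that the first equation in Definition~\ref{def:pb}, read in $p^\op$, maps under the dualizing dictionary precisely to the second equation in Definition~\ref{def:pf}, and conversely — the roles of the two coherence laws interchange because reversing arrows swaps the order of the composite being normalized. Since the correspondence is then a strict bijection between data satisfying matching axioms, $p^\op$ has a pullback of $S$ along $f^\op$ iff $p$ has a pushforward of $S$ along $f$; quantifying over all $S$ and $f$ gives that $p^\op$ has all pullbacks iff $p$ has all pushforwards.

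Finally I would assemble the pieces: $p$ is a bifibration iff $p$ and $p^\op$ are both fibrations, iff $p$ has all pullbacks and $p^\op$ has all pullbacks, iff $p$ has all pullbacks and $p$ has all pushforwards, as desired. Alternatively, one could bypass the explicit rule-translation and simply observe via Proposition~\ref{prop:pullcart} that cartesian liftings for $p^\op$ are exactly cocartesian liftings for $p$, so that $p^\op$ being a fibration coincides with $p$ admitting all pushforwards; but spelling out the duality at the level of the inference rules keeps the argument within the type-theoretic idiom established above.
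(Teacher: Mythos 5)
Your argument is correct and is exactly the duality reduction the paper intends: the paper states this proposition without proof, treating it as immediate from the definition of bifibration together with the preceding proposition that a fibration is the same as having all pullbacks, with pushforwards being $p^\op$-pullbacks. Your careful dictionary (left/right rules swapping, the two coherence equations interchanging under arrow reversal) just makes explicit what the paper leaves implicit.
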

\begin{proposition}\label{prop:threeway}
In a bifibration we have a three-way correspondence of interderivability,
$$\vdash \push{f}S \nseq T \quad\text{iff}\quad \vdash S \seq{f} T \quad\text{iff}\quad \vdash S \nseq \pull{f}T$$
\end{proposition}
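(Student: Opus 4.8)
The plan is to observe that the desired three-way correspondence is nothing but the two defining bijections of the pushforward and the pullback, each specialized to the case $g = \id$. I will not need either construction's roundtrip equations in full: since we are only asserting \emph{interderivability}, it suffices to exhibit the two inference rules of each construction and to invoke the composition and identity rules $C$ and $I$, together with the conversion rule $\deq$ that identifies $f$ with $\id;f$ and with $f;\id$.

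First I would treat the right-hand equivalence $\vdash S \seq{f} T$ iff $\vdash S \nseq \pull{f}T$. For the forward direction, given a derivation of $S \seq{f} T$, I rewrite $f \deq \id;f$ and apply the right rule $R\pull{f}$ to obtain a derivation of $S \seq{\id} \pull{f}T$, that is, of $S \nseq \pull{f}T$. Conversely, given a derivation $\eta$ of $S \nseq \pull{f}T$, I compose it with the left rule $L\pull{f} : \pull{f}T \seq{f} T$ using $C$, producing a derivation of $S \seq{\id;f} T \deq S \seq{f} T$.

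The left-hand equivalence $\vdash \push{f}S \nseq T$ iff $\vdash S \seq{f} T$ is entirely dual. For the backward direction I rewrite $f \deq f;\id$ and apply the left rule $Lf$ to a derivation of $S \seq{f} T$, yielding $\push{f}S \seq{\id} T$, i.e. $\push{f}S \nseq T$. For the forward direction I precompose a derivation $\eta$ of $\push{f}S \nseq T$ with the unit $Rf : S \seq{f} \push{f}S$ via $C$, obtaining a derivation of $S \seq{f;\id} T \deq S \seq{f} T$.

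I do not expect any genuine obstacle here: each of the four passages is a single application of an admissible rule. The only point demanding a little care is the bookkeeping of the identity laws $f \deq \id;f \deq f;\id$, which is precisely what lets the general parameter $g$ in the pushforward and pullback rules collapse to $\id$. If one wanted the stronger statement that these passages are mutually inverse bijections rather than mere logical equivalences, one would additionally invoke the two roundtrip equations of \Cref{def:pb} and \Cref{def:pf}, again instantiated at $g = \id$; but for the claim as stated, exhibiting derivations in both directions is enough.
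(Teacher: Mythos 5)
Your proof is correct, and since the paper omits the proof of this proposition entirely (treating it as immediate from \Cref{def:pb} and \Cref{def:pf}), your argument is exactly the evident one the authors intended: each direction is a single application of $L\pull{f}$, $R\pull{f}$, $Lf$, or $Rf$ at $g = \id$, with the unit laws handling the bookkeeping. You are also right that the roundtrip equations are not needed for mere interderivability.
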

\begin{example}\label{ex:subsetbifib}
The type refinement system $\SubSet \to \Set$ of \Cref{sec:refexample} is a bifibration, where $\pull{f}T$ is the inverse image of $T$ under $f$, and $\push{f}S$ is the image of $S$ under $f$, i.e.,
\begin{align*}
\pull{f}T &\defeq \{ a \mid f(a) \in T \} \\
\push{f}S &\defeq \{f(a) \mid a \in S\}
\end{align*}
\end{example}
\begin{example}\label{ex:subcatbifib}
The previous example %\Cref{ex:subsetbifib}
can be generalized by taking i-types to be categories rather than sets, and e-types to be \uline{presheaves} rather than subsets.  Putting aside issues of ``size'', $\Cat$ has categories $A,B,\dots$ as objects and functors $f : A \to B$ as morphisms, while $\SubCat$ has set-valued functors $S \in \Set^A$ as objects, and morphisms $(S : \Set^A) \to (T : \Set^B)$ given by pairs of a functor $f : A \to B$ together with a natural transformation $S \Rightarrow T\circ f$,
$$
\xymatrix @-1.5pc {
A \ar[rr]^S\ar[dd]_f && \Set \\
&\Rightarrow & \\
B \ar@/_1pc/[uurr]_{T}
}
$$
Then in diagrammatic terms, pullback and pushforward are defined respectively by precomposition and by \uline{left kan extension},
$$
\xymatrix @-1.5pc {
B \ar[rr]^T && \Set \\
& &\\
A\ar[uu]^f\ar[uurr]_{\pull{f}T}
}
\qquad
\xymatrix @-1.5pc {
A \ar[rr]^S\ar[dd]_f && \Set \\
&\Rightarrow & \\
B \ar@/_1pc/[uurr]_{\push{f}S}
}
$$
One can also describe the pullback and pushforward in pointwise form, 
\begin{align*}
\pull{f}T &\defeq a \mapsto T(fa) \\
\push{f}S &\defeq b \mapsto \int^a B(fa,b) \times T(a)
\end{align*}
where the formula for the pushforward denotes a \uline{coend}.
\end{example}
\begin{example}\label{ex:trivbifib}
Any category $\c{C}$ defines a bifibration over the trivial category $1$, with $\c{C} \to 1$ the functor which collapses all objects and arrows. Since there is only the identity arrow in $1$, vacuously all pullbacks and pushforwards exist.
\end{example}
\begin{example}\label{ex:hoarelogic}
A well-known example of a bifibration in computer science comes from taking $\c{I}$ to be a category of \uline{states and state transformers}, and $\c{E}$ a category of \uline{state predicates and valid assertions}.  In particular, a typing judgment may be read exactly like a ``Hoare triple'',
$$S \seq{f} T\ \ \sim\ \  \{P\}c\{Q\}$$
that is, as an assertion that the program $f$ will transform a state satisfying the precondition $S$ into a state satisfying the postcondition $T$.  Indeed, the typing rule $C$ is exactly the rule of sequential composition in Hoare logic,
$$
\infer{S \seq{f;g} U}{S \seq{f} T & T \seq{g} U} \ \ \sim\ \ 
\infer{\{P\} c_1;c_2\{R\}}{\{P\} c_1\{Q\}  & \{Q\} c_2\{R\}}
$$
while pullbacks correspond to the calculation of \uline{weakest preconditions} and pushforwards to \uline{strongest postconditions}.
\end{example}

\subsection{Weighted intersections and unions}
\label{sec:intun}

Although we will not explore this further here, we remark that pullbacks and pushforwards could also be seen as instances of a more general notion of ``weighted'' intersection and union types.

\begin{definition}
Let $(f_i : A \to B_i)_{i \in I}$ be a collection of expressions and $(T_i\refs B_i)_{i \in I}$ a collection of e-types.  The {\bf $(f_i)$-weighted intersection of the $(T_i)$} is an e-type
\def\ETYPE{\bigcap_{i\in I}\pull{f_i}T_i}
$\ETYPE \refs A$ equipped with a collection of admissible rules
$$
\infer[L^\cap{\pull{f_i}}]{\ETYPE \seq{f_i} T_i}{}
$$
as well as an admissible rule
$$
\infer[R^\cap{\pull{f_i}}]{S \seq{g} \ETYPE}{\forall i\in I. & S \seq{g;f_i} T_i}
$$
such that for all collections of derivations
$$\left(\deduce{S \seq{g;f_i} T_i}{\beta_i}\right)_{i\in I} \quad
\text{and}\quad\deduce{S \seq{g} \ETYPE}{\eta}$$
we have
$$
\infer[C]{S \seq{g;f_i} T_i}{
 \infer[R^\cap\pull{f_i}]{S \seq{g} \ETYPE}{\forall i\in I. & \deduce{S \seq{g;f_i} T_i}{\beta_i}} &
 \infer[L^\cap\pull{f_i}]{\ETYPE \seq{f_i} T_i}{}
}
\quad\deq\quad
\deduce{S \seq{g;f_i} T_i}{\beta_i}
$$
and
$$
\deduce{S \seq{g} \ETYPE}{\eta}
\quad\deq\quad
\infer[R^\cap\pull{f_i}]{S \seq{g} \ETYPE}
{\forall i\in I. & \infer[C]{S \seq{g;f_i} T_i}
{\deduce{S \seq{g} \ETYPE}{\eta} & \infer[L^\cap\pull{f_i}]{\ETYPE \seq{f_i} T_i}{}}}
$$
Dually, let $(f_i : A_i \to B)_{i \in I}$ be a collection of expressions and $(S_i\refs A_i)_{i \in I}$ a collection of e-types.  The {\bf $(f_i)$-weighted union of the $(S_i)$} is an e-type
\def\ETYPE{\bigcup_{i\in I}\push{f_i}S_i}
$\ETYPE \refs B$ equipped with a collection of admissible rules
$$
\infer[R_\cup{{f_i}}]{S_i \seq{f_i} \ETYPE}{}
$$
as well as an admissible rule
$$
\infer[L_\cup{{f_i}}]{\ETYPE \seq{g} T}{\forall i\in I. & S_i \seq{f_i;g} U}
$$
such that for all collections of derivations
$$\left(\deduce{S_i \seq{f_i;g} T}{\beta_i}\right)_{i\in I} \quad
\text{and}\quad\deduce{\ETYPE \seq{g} T}{\eta}$$
we have
$$
\infer[C]{S_i \seq{f_i;g} T}{
 \infer[R_\cup{f_i}]{S_i \seq{f_i} \ETYPE}{}
 &
 \infer[L_\cup{f_i}]{\ETYPE \seq{g} S}{\forall i\in I. & \deduce{S_i \seq{f_i;g} T}{\beta_i}}
}
\quad\deq\quad
\deduce{S_i \seq{f_i;g} T}{\beta_i}
$$
and
$$
\deduce{\ETYPE \seq{g} T}{\eta}
\quad\deq\quad
\infer[L_\cup{f_i}]{\ETYPE \seq{g} T}
{\forall i\in I. & \infer[C]{S_i \seq{f_i;g} T}
{\infer[R_\cup{f_i}]{S_i \seq{f_i} \ETYPE}{} & \deduce{\ETYPE \seq{g} T}{\eta}}}
$$
\end{definition}
\begin{definition}We say that a type refinement system is {\bf bicomplete} if all weighted intersections and unions exist, i.e., if it is equipped with the following e-type formation rules:
$$
\infer{\bigcap_{i\in I} \pull{f_i}T_i \refs A}{\forall i \in I. & f_i : A \to B_i & T_i \refs B_i}
\qquad
\infer{\bigcup_{i\in I} \push{f_i} S_i \refs B}{\forall i \in I. & S_i \refs A_i & f_i : A_i \to B}
$$
\end{definition}
Pullbacks and pushforwards of course correspond to the weighted intersection/union of a singleton, while the usual ``unweighted''  notion of intersection and union can be seen as weighting by the identity.  For example, with the definitions
\begin{align*}
T_1 \cap T_2 &\defeq \pull{(\id)}T_1 \cap \pull{(\id)}T_2 \\
S_1 \cup S_2 &\defeq (\id)S_1 \cup (\id)S_2
\end{align*}
the following type formation and typing rules are admissible in any bicomplete type refinement system:
$$
\infer{T_1 \cap T_2 \refs A}{T_1 \refs A & T_2 \refs A}\qquad
\infer{S_1 \cup S_2 \refs B}{S_1 \refs B & S_2 \refs B}
$$

$$
\infer{T_1\cap T_2 \nseq T_1}{}\quad
\infer{T_1\cap T_2 \nseq T_2}{}\qquad
\infer{S \seq{g} T_1 \cap T_2}{S \seq{g} T_1 & S \seq{g} T_2}
$$

$$
\infer{S_1 \cup S_2 \seq{g} T}{S_1 \seq{g} T & S_2 \seq{g} T}\qquad
\infer{S_1 \nseq S_1 \cup S_2}{}\quad
\infer{S_2 \nseq S_1 \cup S_2}{}
$$
Weighted intersections and unions in their full generality may be seen as an abstraction of the programming concepts of \emph{variant record} and \emph{tagged union}.

\section{Monoidal type refinement systems}
\label{sec:monref}

So far we have assumed nothing about the base category of i-types, other than that it is a category.  In this interlude we consider refinement of {\em monoidal} categories of i-types, with a corresponding monoidal structure on e-types.  (In the next section, we will consider refinement of monoidal {\em closed} categories.)

%\subsection{Preliminaries}
%\let\eps=\varepsilon

Recall that a {\bf monoidal category} is a category $\c{D}$ equipped with a bifunctor
$$-\mul- : \c{D} \times \c{D} \to \c{D}$$
and an object $1$, satisfying associativity and unity equations up to coherent natural isomorphism,
$$
(A\mul B)\mul C \deq A\mul (B\mul C) \qquad
A \mul 1 \deq A \deq 1 \mul A
$$
A {\bf strong monoidal functor} between two monoidal categories $(\c{E}, \mul_{\c{E}}, 1_{\c{E}})$ and $(\c{D}, \mul_{\c{D}}, 1_{\c{D}})$ is a functor $F : \c{E} \to \c{D}$ equipped with natural isomorphisms
\begin{align*}
F[A] \mul_{\c{D}} F[B] &\deq F [A \mul_{\c{E}} B] \\
1_{\c{D}} &\deq F [1_{\c{E}}]
\end{align*}
which again satisfy associativity and unity conditions.

\begin{definition}A {\bf monoidal type refinement system} is just a strong monoidal functor $p : \c{E} \to \c{I}$.
\end{definition}
As before, we can use type-theoretic language to elaborate on this compact definition.

We will omit subscripts when referring to the respective monoidal structures on $\c{E}$ and $\c{I}$, since there is never ambiguity in the way we use them.  The object part of the strong monoidal functor $p$ may be expressed as the following e-type formation rules,\footnote{Since the refinement relation was defined as an equality $p(S) = A$ (\Cref{defn:refine}), pedantically we should be speaking of \emph{strict} rather than strong monoidal functors.  However, ``morally'' (i.e., as a non-``evil'' notion) we really think of the refinement relation as being defined up to coherent isomorphism $p(S) \deq A$, which is why we don't feel a need to draw distinctions here between strong and strict monoidal functors.}
$$
\infer{S_1 \emul S_2 \refs A_1 \mul A_2}{S_1 \refs A_1 & S_2 \refs A_2}
\qquad
\infer{1  \refs 1}{}
$$
while the arrow part can be expressed as the following typing rules:
$$
\infer[M]{S_1 \emul S_2 \seq{f_1 \mul f_2} T_1 \emul T_2}
{S_1 \seq{f_1} T_1 & S_2 \seq{f_2} T_2}
\qquad
\infer[U]{1 \seq{1} 1}{}
$$
The equations of monoidal categories may be expressed as the following equations on derivations:
\begin{itemize}
\item (associativity)
$$
\infer[M]{(S_1 \emul S_2) \emul S_3 \seq{(f_1 \mul f_2) \mul f_3} (T_1 \emul T_2) \emul T_3}
{\infer[M]{S_1 \emul S_2 \seq{f_1 \mul f_2} T_1 \emul T_2}
  {\deduce{S_1 \seq{f_1} T_1}{\alpha_1} & \deduce{S_2 \seq{f_2} T_2}{\alpha_2}}
 & \deduce{S_3 \seq{f_3} T_3}{\alpha_3}}
\quad\deq\quad
\infer[M]{S_1 \emul (S_2 \emul S_3) \seq{f_1 \mul (f_2 \mul f_3)} T_1 \emul (T_2 \emul T_3)}
{\deduce{S_1 \seq{f_1} T_1}{\alpha_1} &
 \infer[M]{S_2 \emul S_3 \seq{f_2 \mul f_3} T_2 \emul T_3}{
   \deduce{S_2 \seq{f_2} T_2}{\alpha_2}
  & \deduce{S_3 \seq{f_3} T_3}{\alpha_3}}}
$$
\item (unit)
$$
\infer[M]{S \mul 1 \seq{f \mul 1} T \mul 1}
{\deduce{S \seq{f} T}{\alpha}  & \infer[U]{1 \seq{1} 1}{}}
\quad\deq\quad
\deduce{S \seq{f} T}{\alpha}
\quad\deq\quad
\infer[M]{1 \mul S \seq{1 \mul f} 1 \mul T}
{\infer[U]{1 \seq{1} 1}{} & \deduce{S \seq{f} T}{\alpha}}
$$
\item (bifunctoriality)
$$
\infer[C]{S_1 \emul S_2 \seq{(f_1\mul f_2);(g_1 \mul g_2)} U_1 \emul U_2}{
\infer[M]{S_1 \emul S_2\seq{f_1 \mul f_2} T_1 \emul T_2}{\deduce{S_1 \seq{f_1} T_1}{\alpha_1} & \deduce{S_2 \seq{f_2} T_2}{\alpha_2}} &
\infer[M]{T_1 \emul T_2\seq{g_1 \mul g_2} U_1 \emul U_2}{\deduce{T_1 \seq{g_1} U_1}{\beta_1} & \deduce{T_2 \seq{g_2} U_2}{\beta_2}}}
\quad\deq\quad
\infer[M]{S_1 \emul S_2 \seq{(f_1; g_1)\mul (f_2;g_2)} U_1 \emul U_2}{
\infer[C]{S_1 \seq{f_1;g_1} U_1}{\deduce{S_1 \seq{f_1} T_1}{\alpha_1} & \deduce{T_1 \seq{g_1} U_1}{\beta_1}} &
\infer[C]{S_1 \seq{f_2;g_2} U_2}{\deduce{S_2 \seq{f_2} T_2}{\alpha_2} & \deduce{T_2 \seq{g_2} U_2}{\beta_2}}}
$$
$$
\infer[I]{S \mul T \seq{\id} S \mul T}{}
\quad\deq\quad
\infer[M]{S \mul T \seq{\id\mul\id} S \mul T}{\infer[I]{S \seq{\id}S}{} & \infer[I]{T\seq{\id} T}{}}
\qquad
\infer[I]{1 \seq{\id} 1}{} \quad\deq\quad 
\infer[U]{1 \seq{1} 1}{}
$$
\end{itemize}

\begin{definition}
A {\bf monoidal (bi)fibration} is a monoidal type refinement system $p : \c{E} \to \c{I}$ with all pullbacks (and pushforwards), such that the monoidal product on $\c{E}$ preserves these pullbacks (and pushforwards).
\end{definition}
The fact that $p$ preserves all pullbacks and pushforwards may be expressed by saying that the canonical derivations
$$
\infer[R\pull{(f_1\mul f_2)}]{\pull{f_1}T_1 \cdot \pull{f_2}T_2 \nseq \pull{(f_1\mul f_2)} (T_1\mul T_2)}{
\infer[M]{\pull{f_1}T_1 \cdot \pull{f_2}T_2 \seq{f_1\mul f_2} T_1\mul T_2}{
\infer[L\pull{f_1}]{\pull{f_1}T_1 \seq{f_1} T_1}{} &
\infer[L\pull{f_2}]{\pull{f_2}T_2 \seq{f_2} T_2}{}
}}
\qquad
\infer[L\push{(f_1\mul f_2)}]{\push{(f_1\mul f_2)} (S_1\mul S_2) \nseq \push{f_1}S_1 \cdot \push{f_2}S_2}{
\infer[M]{S_1\mul S_2 \seq{f_1\mul f_2} \push{f_1}S_1 \cdot \push{f_2}S_2}{
\infer[R\push{f_1}]{S_1 \seq{f_1} \push{f_1}S_1}{} &
\infer[R\push{f_2}]{S_2 \seq{f_2} \push{f_2}S_2}{}
}}
$$
have inverses
$$
\infer{\pull{(f_1\mul f_2)} (T_1\mul T_2)\nseq \pull{f_1}T_1 \cdot \pull{f_2}T_2 }{} 
\qquad
\infer{\push{f_1}S_1 \cdot \push{f_2}S_2 \nseq \push{(f_1\mul f_2)} (S_1\mul S_2)}{}
$$
thereby witnessing the isomorphisms
\begin{align*}
\pull{f_1}T_1 \cdot \pull{f_2}T_2 &\deq \pull{(f_1\mul f_2)} (T_1\mul T_2) \\
\push{f_1}S_1 \cdot \push{f_2}S_2 &\deq \push{(f_1\mul f_2)} (S_1\mul S_2)
\end{align*}
We remark that this definition of monoidal fibration is essentially equivalent to the one appearing in  \cite{shulman08}. 

\section{Monoidal closed type refinement systems}
\label{sec:monclosedref}

In this section we work out the definition of \emph{monoidal closed bifibrations}---as a generalization of monoidal closed categories and a natural extension of the concept of bifibration---and describe some examples.  These will include examples of monoidal closed bifibrations, of course, but also examples of logical structures that can be naturally defined {\em inside} monoidal closed bifibrations.

Before we begin, though, it's worth spending a moment to discuss Lawvere's notion of \emph{hyperdoctrine} \cite{lawvere69}.  There is obviously a very close kinship between the approach we have been describing here and the principles behind hyperdoctrines.  What we call ``i-types'' correspond to what Lawvere just called ``types'', and what we call ``e-types'' correspond to what Lawvere called ``attributes'';\footnote{We prefer to emphasize that these are both aspects of the word \emph{type} as it has been employed in type theory.} pullback and pushforward correspond directly to ``substitution'' and ``existential quantification''.  However, besides the obvious difference that we choose to work in a monoidal rather than a cartesian setting, the crucial difference is in the way that the closed structure on i-types is used, and the closed structure on e-types introduced.

In contrast to the situation with hyperdoctrines, monoidal closed type refinement systems follow a sort of \emph{microcosm principle} \cite{baezdolan98}: in order to \emph{define} what it means for e-types to have a closed structure, the category of i-types already has to be monoidal closed.  And as we will see (\Cref{sec:examplesin,sec:repmonad}), the really interesting phenomena arise through the \emph{interaction} of the two monoidal closed structures---typically by forming a product or residual of e-types, and then pushing forward or pulling back along a map defined using the monoidal closed structure of $\c{I}$.

\subsection{Review of monoidal closed categories}
\label{sec:mccs}

Let $A,B$, and $C$ be objects of a monoidal category.  A {\bf left residual} of $C$ by $A$ is an object $\negL[C]{A}$ equipped with a map
$$\plugL : A \mul \negL[C]{A} \to C$$
and a natural transformation $\lambda$ from maps
$$A \mul B \to C$$
to maps
$$B \to \negL[C]{A}$$
such that for all $f : A \mul B \to C$ and $g : B \to \negL[C]{A}$ we have
\begin{align*}
(\id\mul \lc{f});\plugL &\deq f \\
g &\deq \lc{(\id\mul g);\plugL}
\end{align*}
Similarly, a {\bf right residual} of $C$ by $B$ is an object $\negR[C]{B}$ equipped with a map
$$\plugR : \negR[C]{B} \mul B \to C$$
and a natural transformation $\rho$ from maps
$$A \mul B \to C$$
to maps
$$A \to \negR[C]{B}$$
such that for all $f : A \mul B \to C$ and $g : A \to \negR[C]{B}$ we have
\begin{align*}
(\rc{f}\mul \id);\plugR &\deq f \\
g &\deq \rc{(g\mul \id);\plugR}
\end{align*}
A {\bf monoidal closed category} is a monoidal category equipped with left and right residuals for each pair of objects.  We remark that the following maps are definable in any monoidal closed category (we will use them in \Cref{sec:repmonad}) :
\begin{align*}
%\tag{coevaluation} \coevalL &: B \to \negL[A\mul B]{A} \\
\tag{shift}  \lc{\plugR} &: B \to \negL[C]{\negR[C]{B}} \\
\tag{reset} \resetL &: \negL[C]{\negR[B]{B}}  \to C
\end{align*}
Note that the shift maps are the units of the {\bf continuation monads} arising from adjunctions of the form
$$
\xymatrix {
\c{I}\ar@/^1pc/[rr]^{\negR[C]{\id}} & \bot & \c{I}^\op\ar@/^1pc/[ll]^{\negL[C]{\id}}
}
$$
for each object $C$ of a monoidal closed category $\c{I}$.

\subsection{Residuals of e-types}

Let $p : \c{E} \to \c{I}$ be a monoidal type refinement system over a monoidal closed category $\c{I}$.
\begin{definition}
Let $S \refs A$ and $U \refs C$.  A ($p$-){\bf left residual} of $U$ by $S$ is an e-type $\negL[U]{S} \refs \negL[C]{A}$ equipped with a pair of admissible rules
$$
\infer[L^⊸]{S \emul \negL[U]{S} \seq{\plugL} U}{}
\qquad
\infer[R^⊸]{T \seq{\lc{f}} \negL[U]{S}}{S \emul T \seq{f} U}
$$
such that for all derivations
$$
\deduce{S \emul T \seq{f} U}{\beta}
\quad\text{and}\quad
\deduce{T \seq{g} \negL[U]{S}}{\eta}
$$
we have equalities
$$
\infer[C]{S\emul T \seq{(\id\mul \lc{f});\plugL} U}{
 \infer[M]{S \emul T \seq{\id\mul\lc{f}} S \emul \negL[U]{S}}{
    \infer[I]{S \seq{\id} S}{} &
    \infer[R^⊸]{T \seq{\lc{f}} \negL[U]{S}}{\deduce{S \emul T \seq{f} U}{\beta}}
  }
   &
 \infer[L^⊸]{S \emul \negL[U]{S} \seq{\plugL} U}{}
}
\quad\deq\quad
\deduce{S \emul T \seq{f} U}{\beta}
$$
and
$$
\deduce{T \seq{g} \negL[U]{S}}{\eta}
\quad\deq\quad
 \infer[R^⊸]{T \seq{\rc{(\id\mul g);\plugL}} \negL[U]{S}}{
 \infer[C]{S \emul T \seq{(\id\mul g);\plugL} U}{
  \infer[M]{S \emul T \seq{\id\mul g} S \emul \negL[U]{S}}{
   \infer[I]{S \nseq S}{} &
   \deduce{T \seq{g} \negL[U]{S}}{\eta}
   } &
  \infer[L^⊸]{S \emul \negL[U]{S} \seq{\plugL} U}{}
  }}
$$
\end{definition}
\begin{definition}
Let $T \refs B$ and $U \refs C$.  A ($p$-){\bf right residual} of $U$ by $T$ is an e-type $\negR[U]{T} \refs \negR[C]{B}$ equipped with a pair of admissible rules
$$
\infer[L^⟜]{\negR[U]{T} \emul T \seq{\plugR} U}{}
\qquad
\infer[R^⟜]{S \seq{\rc{f}} \negR[U]{T}}{S \emul T \seq{f} U}
$$
such that for all derivations
$$
\deduce{S \emul T \seq{f} U}{\beta}
\quad\text{and}\quad
\deduce{S \seq{g} \negR[U]{T}}{\eta}
$$
we have equalities
$$
\infer[C]{S\emul T \seq{(\rc{f}\mul \id);\plugR} U}{
 \infer[M]{S \emul T \seq{\rc{f}\mul\id} \negR[U]{T} \emul T}{
    \infer[R^⟜]{S \seq{\rc{f}} \negR[U]{T}}{\deduce{S \emul T \seq{f} U}{\beta}}
    &
     \infer[I]{T \seq{\id} T}{}
  }
   &
 \infer[L^⟜]{\negR[U]{T} \emul T \seq{\plugR} U}{}
}
\quad\deq\quad
\deduce{S \emul T \seq{f} U}{\beta}
$$
and
$$
\deduce{S \seq{g} \negR[U]{T}}{\eta}
\quad\deq\quad
 \infer[R^⟜]{S \seq{\rc{(g\mul\id);\plugR}} \negR[U]{T}}{
 \infer[C]{S \emul T \seq{(g\mul\id);\plugR} U}{
  \infer[M]{S \emul T \seq{g\mul\id} \negR[U]{T} \emul T}{
   \deduce{S \seq{g} \negR[U]{T}}{\eta}
    &
    \infer[I]{T \nseq T}{}
   } &
  \infer[L^⟜]{\negR[U]{T} \emul T \seq{\plugR} U}{}
  }}
$$
\end{definition}

\begin{definition}A monoidal type refinement system over a monoidal closed category of i-types is said to be {\bf closed} if it is equipped with left and right residuals for all pairs of e-types, i.e., such that the following e-type formation rules are admissible:\footnote{
NB: the formation rules for residuals sometimes appear strange at first to people familiar with the ``rule of contravariant subtyping'' for function types (and who thus expect something like $A \refs S$ in the premise).  This seems to be due to the long tradition of conflating the concepts of refinement and subtyping.  For example, it is easy to show that for any collection of e-types 
$$
S_1,S_2 \refs A \quad
T_1,T_2 \refs B \quad
U_1,U_2 \refs C
$$
the following {\em subtyping rules} are admissible in a monoidal closed type refinement system:
$$
\infer{\negL[U_1]{S_1} \nseq \negL[U_2]{S_2}}{S_2 \nseq S_1 & U_1 \nseq U_2} \qquad
\infer{\negR[U_1]{T_1} \nseq \negR[U_2]{T_2}}{U_1 \nseq U_2 & T_2 \nseq T_1}
$$
}
$$
\infer{\negL[U]{S} \refs \negL[C]{A}}{S \refs A & U \refs C}
\qquad
\infer{\negR[U]{T} \refs \negR[C]{B}}{U \refs C & T \refs B}
$$
\end{definition}
\begin{definition}A {\bf monoidal closed (bi)fibration} is a monoidal closed type refinement system which is also a monoidal (bi)fibration.
\end{definition}

\subsection{Examples of monoidal closed bifibrations}
\label{sec:examplesof}

\begin{example}The bifibration $\SubSet \to \Set$ (\Cref{ex:subsetbifib}) is in fact monoidal closed, with the monoidal closed structure on $\Set$ corresponding to the usual cartesian closed structure,
\begin{align*}
A \mul B &\defeq A \times B\\
\negL[C]{A} &\defeq C^A  \\
\negR[C]{B} &\defeq C^B
\end{align*}
the monoidal structure on $\SubSet$ corresponding to cartesian product of subsets,
\begin{align*}
%% S &\subseteq A \\
%% T &\subseteq B \\
S\mul T &\subseteq A \times B \\
S\mul T &\defeq \{ (a,b) \mid a \in S, b \in T \}
\end{align*}
and the residuals of e-types defined by 
\begin{align*}
%% S &\subseteq A \\
%% U &\subseteq C \\
\negL[U]{S} &\subseteq C^A \\
\negL[U]{S} &\defeq \{ f \mid f(S) \subseteq U\} \\
\negR[U]{T} &\subseteq C^B \\
\negR[U]{T} &\defeq \{ f \mid f(T) \subseteq U\}
\end{align*}
Note that the left and right residuals (of both i-types and e-types) coincide in this example, since the monoidal products on $\c{I}$ and $\c{E}$ are cartesian monoidal.
\end{example}

\begin{example}
The bifibration $\SubCat \to \Cat$  (\Cref{ex:subcatbifib}) is in fact monoidal closed, with the monoidal closed structure on $\Cat$ again corresponding to the usual cartesian closed structure, the monoidal structure on $\SubCat$ corresponding to ``external product'' of presheaves,
\begin{align*}
%% S &\subseteq A \\
%% T &\subseteq B \\
S\mul T &: A \times B \to \Set \\
S\mul T &\defeq (a,b) \mapsto S(a) \times T(b)
\end{align*}
and the residuals defined as sets of natural transformations
\begin{align*}
\negL[U]{S} & : C^A \to \Set \\
\negL[U]{S} &\defeq [A,C](S, U \circ f) \\
\negR[U]{T} & : C^B \to \Set \\
\negR[U]{T} &\defeq [B,C](T, U \circ f)
\end{align*}
or equivalently as \uline{ends}:
\begin{align*}
%% S &\subseteq A \\
%% U &\subseteq C \\
\negL[U]{S} &\defeq f \mapsto \int_a S(a) \to U(f a) \\
\negR[U]{T} &\defeq f \mapsto \int_b T(b) \to U(f b)
\end{align*}
\end{example}

\begin{example}The trivial bifibration $\c{C} \to 1$ is of course also a trivial monoidal closed bifibration whenever $\c{C}$ is a monoidal closed category.
\end{example}

\subsection{Examples {\em in} monoidal closed bifibrations}
\label{sec:examplesin}

The class of ``Hoare logic bifibrations'' of \Cref{ex:hoarelogic} are not typically considered as monoidal closed bifibrations.  On the other hand, Reynolds and O'Hearn's \emph{separation logic} \cite{reynolds02} provides a nice example of a logical structure which can be naturally described \emph{internally} to a monoidal closed bifibration.  Suppose the category of i-types includes a monoid $H$ of ``heaps'':
\begin{align*}
H &: \c{I} \\
\oast &: H \mul H \to H \\
emp &: 1 \to H
\end{align*}
Heap assertions are modelled as different refinements of $H$.  In particular, the ``separating conjunction'' and ``magic wand'' connectives on heap assertions may be defined as follows:
\begin{align*}
S * T &\defeq \push{\oast} (S \mul T) \\
S \Wand T &\defeq \pull{\rc{\oast}} \negR[T]{S}
\end{align*}
Interpreting this signature in the monoidal closed bifibration $\SubSet \to \Set$ gives the usual set-theoretic semantics of separation logic:
\begin{align*}
S * T &= \{h_1 \oast h_2 \mid h_1 \in S, h_2 \in T\} \\
S \Wand T & = \{ h \mid  \forall h'. h' \in S \to h\oast h' \in T\}
\end{align*}
On the other hand, we can see that the internal definition is much more general.  For example, interpreting the signature in $\SubCat \to \Cat$ recovers the \emph{Day construction} for lifting a monoidal structure on a category to a monoidal closed structure on its category of presheaves:
\begin{align*}
S * T &= h \mapsto \int^{h_1,h_2} H(h_1\oast h_2,h) \times S(h_1) \times T(h_2) \\
S \Wand T & = h \mapsto \int_{h'} S(h') \to T(h\oast h')
\end{align*}
The next proposition describes the situation more abstractly.
\begin{proposition}\label{prop:starwand}
With the above definitions of the connectives $*$ and $\Wand$, any monoidal closed bifibration admits the following subtyping rules (where all of the variables range over refinements of $H$),
$$
\infer[M{*}]{S_1 * S_2 \nseq T_1 * T_2}{S_1 \nseq T_1 & S_2 \nseq T_2}
\qquad
\infer[R^{\Wand}]{S \nseq T \Wand U}{S * T \nseq U}
\qquad
\infer[L^{\Wand}]{(T \Wand U) * T \nseq U}{}
$$
satisfying the equations
$$
\infer[C]{S * T \nseq U}{
 \infer[M{*}]{S * T \nseq (T \Wand U) * T}{
    \infer[R^{\Wand}]{S \nseq T \Wand U}{\deduce{S * T \nseq U}{\beta}}
    &
     \infer[I]{T \nseq T}{}
  }
   &
 \infer[L^{\Wand}]{(T\Wand U) * T \nseq U}{}
}
\quad\deq\quad
\deduce{S * T \nseq U}{\beta}
$$
and
$$
\deduce{S \nseq T\Wand U}{\eta}
\quad\deq\quad
 \infer[R^{\Wand}]{S \nseq T \Wand U}{
 \infer[C]{S * T \nseq U}{
  \infer[M{*}]{S * T \nseq T \Wand U * T}{
   \deduce{S \nseq T \Wand U}{\eta}
    &
    \infer[I]{T \nseq T}{}
   } &
  \infer[L^{\Wand}]{(T \Wand U) * T \nseq U}{}
  }}
$$
\end{proposition}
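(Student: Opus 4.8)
The plan is to build the three rules by composing elementary structures that the monoidal closed bifibration already provides, and to deduce the two equations from the $\beta\eta$-laws of those structures. The guiding observation is that, unfolding the definitions $S * T = \push{\oast}(S \mul T)$ and $T \Wand U = \pull{\rc{\oast}}\negR[U]{T}$, a derivation of $S * T \nseq U$ is in natural bijection with a derivation of $S \mul T \seq{\oast} U$ (by the pushforward rules $R\oast$, $L\oast$), which is in natural bijection with a derivation of $S \seq{\rc{\oast}} \negR[U]{T}$ (by the right-residual rules $R^⟜$, $L^⟜$), which is in natural bijection with a derivation of $S \nseq \pull{\rc{\oast}}\negR[U]{T} = T \Wand U$ (by the pullback rules $R\pull{\rc{\oast}}$, $L\pull{\rc{\oast}}$). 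The proposition amounts to repackaging this threefold composite of bijections in the ``evaluation/currying'' presentation given by $M{*}$, $R^{\Wand}$, and $L^{\Wand}$.

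First I would construct the derivations. For $M{*}$, I apply the rule $M$ to $S_1 \nseq T_1$ and $S_2 \nseq T_2$ to obtain $S_1 \mul S_2 \nseq T_1 \mul T_2$, then invoke monotonicity of the pushforward $\push{\oast}$ — itself derivable as $L\oast(C(-,R\oast))$ — to land in $S_1 * S_2 \nseq T_1 * T_2$. For $R^{\Wand}$, starting from $\beta : S * T \nseq U$, I precompose the unit $R\oast : S \mul T \seq{\oast} \push{\oast}(S \mul T)$ with $\beta$ to get $S \mul T \seq{\oast} U$, transpose by $R^⟜$ to get $S \seq{\rc{\oast}} \negR[U]{T}$, and apply $R\pull{\rc{\oast}}$ (using the unit law $\id;\rc{\oast} \deq \rc{\oast}$). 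For $L^{\Wand}$, I form $M(L\pull{\rc{\oast}}, I_T) : (T\Wand U)\mul T \seq{\rc{\oast}\mul\id} \negR[U]{T}\mul T$, compose with the counit $L^⟜ : \negR[U]{T}\mul T \seq{\plugR} U$, convert along the $\c{I}$-equation $(\rc{\oast}\mul\id);\plugR \deq \oast$ coming from the definition of the right residual of i-types, and finally apply $L\oast$ to reach $(T\Wand U)*T \nseq U$.

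The two equations are then exactly the $\beta$- and $\eta$-laws of the composite residuation, so the real work is verifying that $R^{\Wand}$ is the composite of the three forward bijections while the ``evaluation'' map $\eta \mapsto C(M{*}(\eta, I_T), L^{\Wand})$ is the composite of their three inverses; once this is checked, mutual inverseness — hence both equations — follows because each constituent is already a bijection by its own $\beta\eta$-laws. I expect this bookkeeping to be the main obstacle: neither equation reduces to a single law, and one must thread together the pushforward $\beta\eta$-equations, the currying equations of the right residual, and the pullback $\beta\eta$-equations, gluing them with the interchange law between $M$ and $C$ (bifunctoriality), the unit laws $\id\mul\id \deq \id$, and the conversion along $(\rc{\oast}\mul\id);\plugR \deq \oast$. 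A convenient lemma to isolate first is that post-composition commutes with the left rule, $C(L\oast(\gamma), \delta) \deq L\oast(C(\gamma,\delta))$ — provable directly from the pushforward $\beta\eta$-laws — since it is precisely what lets the two separate $L\oast$ applications hidden inside $M{*}$ and $L^{\Wand}$ be merged when checking the equations.
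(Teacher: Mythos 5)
Your construction of the three rules is exactly the paper's: $M{*}$ as $L\oast$ applied to the composite of $M(\alpha_1,\alpha_2)$ with $R\oast$, $R^{\Wand}$ as the composite $R\pull{\rc{\oast}}\circ R^{\ImpR}\circ C(R\oast,-)$, and $L^{\Wand}$ as $L\oast$ applied to $C(M(L\pull{\rc{\oast}},I),L^{\ImpR})$ converted along $(\rc{\oast}\mul\id);\plugR\deq\oast$, and the paper likewise defers the two equations to ``a long but straightforward calculation'' from the $\beta\eta$-laws of the constituent structures. Correct, and essentially the same proof.
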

\begin{proof}
We show how to build the rules:
$$
\infer[L{\oast}]{S_1 * S_2 \nseq T_1 * T_2}{
\infer[C]{S_1 \mul S_2 \seq{\oast} T_1 * T_2}{
 \infer[M]{S_1 \mul S_2 \nseq T_1 \mul T_2}{S_1 \nseq T_1 & S_2 \nseq T_2} &
 \infer[R{\oast}]{T_1 \mul T_2 \seq{\oast} T_1 * T_2}{}
}}
$$
$$
\infer[R\pull{\rc{\oast}}]{S \nseq T \Wand U}{
\infer[R^{\ImpR}]{S \seq{\rc{\oast}} \negR[U]{T}}{
\infer[C]{S \mul T \seq{\oast} U}{
\infer[R{\oast}]{S \mul T \seq{\oast} S * T}{} &
S * T \nseq U}}}
\qquad
\infer[L{\oast}]{(T \Wand U) * T \nseq U}{
\infer[\deq]{(T \Wand U) \mul T \seq{\oast} U}{
\infer[C]{(T \Wand U) \mul T \seq{(\rc{\oast}\mul\id);\plugR} U}{
  \infer[M]{(T \Wand U) \mul T \seq{\rc{\oast}\mul\id} \negR[U]{T} \mul T}{
   \infer[L\pull{\rc{\oast}}]{T \Wand U \seq{\rc{\oast}} \negR[U]{T}}{} & 
   \infer[I]{T \nseq T}{}} &
  \infer[L^{\ImpR}]{\negR[U]{T} \mul T \seq{\plugR} U}{}
 }
}}
$$
The equations then follow from the equations of monoidal closed bifibrations, by a long but straightforward calculation.
\end{proof}
\begin{corollary}For all $T \refs H$, the operations $\id*T$ and $T\Wand \id$ are adjoint in the sense that the $R^{\Wand}$ rule is invertible,
$$
\infer={S \nseq T \Wand U}{S * T \nseq U}
$$
\end{corollary}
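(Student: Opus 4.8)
The plan is to exhibit an explicit inverse to the $R^{\Wand}$ rule and then observe that both round-trip equations have already been discharged inside \Cref{prop:starwand}. First I would define the candidate inverse operation $\Phi$, which sends a subtyping derivation $\eta$ of $S \nseq T \Wand U$ to a derivation of $S * T \nseq U$ obtained by the functorial action $M{*}$ of the separating conjunction followed by the counit-style rule $L^{\Wand}$:
$$
\infer[C]{S * T \nseq U}{
  \infer[M{*}]{S * T \nseq (T \Wand U) * T}{\deduce{S \nseq T \Wand U}{\eta} & \infer[I]{T \nseq T}{}} &
  \infer[L^{\Wand}]{(T \Wand U) * T \nseq U}{}
}
$$
Writing $\Psi$ for the operation supplied by the $R^{\Wand}$ rule itself, the goal is simply to show that $\Phi$ and $\Psi$ are mutually inverse, which is exactly the content of the double inference bar in the statement.

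At this point there is essentially nothing left to compute, since the two displayed equations of \Cref{prop:starwand} are precisely the two round-trips. The first equation, read from top to bottom, says that $\Phi(\Psi(\beta)) \deq \beta$ for every derivation $\beta$ of $S * T \nseq U$; the second says that $\Psi(\Phi(\eta)) \deq \eta$ for every derivation $\eta$ of $S \nseq T \Wand U$. Together they establish the bijective correspondence between derivations of $S * T \nseq U$ and of $S \nseq T \Wand U$, witnessing the adjunction $(\id * T) \dashv (T \Wand \id)$.

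The only point requiring care is bookkeeping: one must match each round-trip to the correct equation and check that the instances of $M{*}$, $L^{\Wand}$, and $R^{\Wand}$ occurring in $\Phi$ and $\Psi$ line up syntactically with those in the proposition (in particular that the identity premise $T \nseq T$ is shared). I do not expect any genuine obstacle, precisely because the substantive work is the ``long but straightforward calculation'' already carried out in the proof of \Cref{prop:starwand}; the corollary is a corollary exactly because that calculation makes $\Phi$ and $\Psi$ mutually inverse, so here it suffices to name the two operations and cite the two equations.
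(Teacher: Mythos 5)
Your proposal is correct and matches the paper's intent exactly: the paper gives no separate proof of this corollary precisely because, as you observe, the two displayed equations of \Cref{prop:starwand} are literally the two round-trip identities $\Phi(\Psi(\beta)) \deq \beta$ and $\Psi(\Phi(\eta)) \deq \eta$ for the inverse $\Phi = (M{*}(-, I)); L^{\Wand}$ you construct. Nothing further is needed.
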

In fact, this adjunction is independent of whether the i-type $H$ is an actual monoid (i.e., of whether the operations $\oast$ and $emp$ satisfy associativity and unit equations), and indeed it even extends to binary operations of arbitrary type.
\begin{proposition}
Given an operation $\oast : A \mul B \to C$, in any monoidal closed bifibration we have formation rules
$$
\infer{S * T \refs C}{S \refs A & T \refs B}
\qquad
\infer{\negWandL[U]{S} \refs B}{S \refs A & U \refs C}
\qquad
\infer{\negWandR[U]{T} \refs A}{U \refs C & T \refs B}
$$
where
\begin{align*}
S * T &\defeq \push{\oast} (S \mul T) \\
\negWandL[U]{S} &\defeq \pull{\lc{\oast}} \negL[U]{S} \\
\negWandR[U]{T} &\defeq \pull{\rc{\oast}} \negL[U]{T} 
\end{align*}
satisfying a three-way adjunction, 
$$
\infer={S \nseq \negWandR[U]{T}}{
\infer={S * T \nseq U}{
T \nseq \negWandL[U]{S}}}
$$
\end{proposition}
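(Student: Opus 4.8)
The plan is to derive both parts by composing machinery already in hand, so that no fresh ``hands-on'' derivation calculation is required. The three formation rules come essentially for free: from $S \refs A$ and $T \refs B$ the monoidal e-type formation rule gives $S \mul T \refs A \mul B$, and since $\oast : A \mul B \to C$ the pushforward rule gives $\push{\oast}(S \mul T) \refs C$; while, recalling from \Cref{sec:mccs} that $\lc{\oast} : B \to \negL[C]{A}$ and $\rc{\oast} : A \to \negR[C]{B}$, the formation rules for residuals of e-types followed by the pullback rule give $\pull{\lc{\oast}}\negL[U]{S} \refs B$ and $\pull{\rc{\oast}}\negR[U]{T} \refs A$. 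For this last step to typecheck the right-hand operation must use the \emph{right} residual $\negR[U]{T}$, in agreement with the $\Wand$ connective of the separation-logic example.

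For the three-way adjunction my plan is to pivot all three subtyping judgments through a single \emph{uncurried} typing judgment, $S \mul T \seq{\oast} U$, reading off each equivalence as a composite of two already-available equivalences. First, the pushforward clause of \Cref{prop:threeway} identifies $S * T \nseq U$, that is $\push{\oast}(S \mul T) \nseq U$, with $S \mul T \seq{\oast} U$. On one side I would then invoke the universal property of the right residual, whose $R^{\ImpR}$ rule matches $S \mul T \seq{\oast} U$ against $S \seq{\rc{\oast}} \negR[U]{T}$, followed by the pullback clause of \Cref{prop:threeway} to reach $S \nseq \pull{\rc{\oast}}\negR[U]{T} = \negWandR[U]{T}$. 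On the other side, symmetrically, the $R^{\ImpL}$ rule of the left residual matches $S \mul T \seq{\oast} U$ against $T \seq{\lc{\oast}} \negL[U]{S}$, and the pullback clause then against $T \nseq \pull{\lc{\oast}}\negL[U]{S} = \negWandL[U]{S}$. Chaining through the shared middle term $S \mul T \seq{\oast} U$ yields exactly the two invertible inference lines of the statement.

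If one wants more than interderivability---a genuine bijection on derivations, i.e.\ a true adjunction---then at each link I would appeal to the defining universal property rather than only its derivability shadow: \Cref{def:pf} and \Cref{def:pb} for the pushforward and pullback, and the round-trip equations of the two residual definitions for the curryings. Each such link is a bijection carrying round-trip equations, so the composites are again bijections and the inference lines are honestly invertible. I therefore expect no conceptual obstacle here; the one thing that genuinely needs care is the type bookkeeping---checking that $\rc{\oast}$ and $\lc{\oast}$ land in $\negR[C]{B}$ and $\negL[C]{A}$ so that both pullbacks are well-formed, and keeping the left and right residuals on their correct sides throughout. Once that is settled, verifying that the bijections compose to the stated invertible rule is a routine calculation, entirely parallel to the proof of \Cref{prop:starwand}.
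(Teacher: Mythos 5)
Your argument is correct, and it is the natural one: the paper states this proposition without proof, and your assembly of the formation rules from the monoidal, residual, and pullback/pushforward formation rules, followed by pivoting the three subtyping judgments through the single uncurried judgment $S \mul T \seq{\oast} U$ via \Cref{prop:threeway} and the universal properties of the two residuals, is exactly the intended generalization of the explicit construction given for \Cref{prop:starwand}. You are also right to flag the type bookkeeping: as printed, the definition $\negWandR[U]{T} \defeq \pull{\rc{\oast}}\negL[U]{T}$ does not typecheck (since $\rc{\oast} : A \to \negR[C]{B}$ while $\negL[U]{T}$ refines a left residual), and it should read $\pull{\rc{\oast}}\negR[U]{T}$, consistent with the $\Wand$ of the separation-logic example. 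Your closing remark correctly distinguishes mere interderivability (which \Cref{prop:threeway} already gives) from the bijection-with-round-trip-equations reading of the double-line rules, and identifies where the latter comes from.
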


\section{Representing monads}
\label{sec:repmonad}

One of the original motivations for this study was to gain a better understanding of Andrzej Filinski's work on the representation of monadic effects in programming languages using continuations and state \cite{filinski94,filinski99}, and to place it in the wider context of universal algebra.

To a first approximation, Filinski's representation of monads using continuations is very similar in spirit to the so-called \emph{codensity monad} \cite{kock66,leinster12}.  The codensity monad of a functor $R : \c{D} \to \c{E}$ may be defined by the following end formula:
$$M^R[T] = \int_U \c{E}(T,R[U]) \to R[U]$$
In the case that $R$ has a left adjoint, then the codensity monad coincides with the monad induced by the adjunction, as a simple Yoneda-like calculation shows:
\begin{proposition}
If $L \dashv R$, then $M^R[T] \cong RL[T]$.
\end{proposition}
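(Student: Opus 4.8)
The plan is to carry out the standard end-calculus (``Yoneda-like'') manipulation hinted at in the text: first use the adjunction to rewrite the integrand of $M^R[T]$, and then recognize the result as an instance of the end form of the Yoneda lemma. First I would read the formula $M^R[T] = \int_U \c{E}(T,R[U]) \to R[U]$ with the understanding that $X \to R[U]$ denotes the \emph{power} (cotensor) of the object $R[U]$ by the set $X$ in $\c{E}$, which is what makes the expression well-typed when $\c{E}$ is an arbitrary category admitting such powers. The adjunction $L \dashv R$ then supplies bijections
$$
\c{E}(T, R[U]) \;\cong\; \c{D}(L[T], U)
$$
natural in both variables. Substituting the right-hand hom-set for the exponent, and checking that this replacement is compatible with the wedge (dinaturality) structure of the end, induces an isomorphism of ends
$$
M^R[T] \;=\; \int_U \c{E}(T,R[U]) \to R[U] \;\cong\; \int_U \c{D}(L[T],U) \to R[U].
$$

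Next I would invoke the end form of the Yoneda lemma: for any functor $F : \c{D} \to \c{E}$ into a category with the relevant powers, and any object $c$ of $\c{D}$, there is a natural isomorphism
$$
\int_U \c{D}(c, U) \to F[U] \;\cong\; F[c].
$$
Instantiating $c = L[T]$ and $F = R$ gives $\int_U \c{D}(L[T],U) \to R[U] \cong R[L[T]] = RL[T]$, which chained with the previous display yields $M^R[T] \cong RL[T]$, naturally in $T$. As a byproduct this shows the end defining $M^R[T]$ exists exactly because the object $RL[T]$ on the right-hand side does, so no separate completeness hypothesis on $\c{E}$ is needed beyond the existence of the powers appearing in the formula.

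The main obstacle is the naturality bookkeeping rather than any hard construction. The variable $U$ occurs twice with opposite variance inside the power — contravariantly through the hom-set $\c{E}(T,R[U])$ (which the end treats dinaturally) and covariantly through the base object $R[U]$ — so the delicate point is to verify that the adjunction isomorphism respects the dinatural wedges and that the resulting Yoneda isomorphism is a genuine isomorphism of objects of $\c{E}$, natural in $T$. This is precisely the hypothesis discharged by the powered (enriched) version of the Yoneda lemma, so once the compatibility of the power structure on $\c{E}$ with the adjunction is recorded, the remainder of the argument is purely formal.
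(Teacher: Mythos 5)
Your proof is correct and follows exactly the same route as the paper's: rewrite the exponent $\c{E}(T,R[U])$ as $\c{D}(L[T],U)$ via the adjunction, then collapse the end $\int_U \c{D}(L[T],U) \to R[U]$ to $RL[T]$ by the end form of the Yoneda lemma. The paper presents this as a one-line chain of isomorphisms; your additional remarks on powers and dinaturality just make explicit the bookkeeping the paper leaves implicit.
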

\begin{proof}
$$M^R[T] = \int_U  \c{E}(T,R[U]) \to R[U] \cong \int_U \c{D}(L[T],U) \to R[U] \cong RL[T]$$
\end{proof}
Logically speaking, the end formula describes the codensity monad as a sort of ``polymorphic double-negation'', in which the answer type (or ``falsehood'') is parameterized over the objects of a category.  In this sense, the coincidence $M^R[T] \cong RL[T]$ may be seen as merely a vast generalization of the tautology
$$
\vdash X \equiv \forall p.(X \supset p) \supset p
$$
of second-order logic, where the crucial step of proving the implication from right to left involves instantiating $p := X$ and applying the hypothesis $(X \supset X) \supset X$ to the trivial proof of $X \supset X$.

Our key insight was that ordinary double-negation becomes a sort of polymorphic double-negation after {\em pulling back along the double-negation introduction (shift) map}.  More precisely, we have the following fact:
\begin{observation}\label{obs:poly} For all e-types $T$ and $U$, if $V$ is a pullback of $U$ (along any $f$), then the subtyping judgment
$$
\pull{\lc{\plugR}}\negL[U]{\negR[U]{T}} \nseq \pull{\lc{\plugR}} \negL[V]{\negR[V]{T}}
$$
is derivable in a monoidal closed fibration.
\end{observation}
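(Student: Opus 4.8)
The plan is to exhibit an explicit subtyping derivation built from the structural rules of a monoidal closed fibration, read from the bottom up. Write $T \refs B$ and $U \refs C$, and fix $f : C' \to C$ exhibiting $V$ as $\pull{f}U$, so that $V \refs C'$. Both e-types in the statement then refine $B$: indeed $\negR[U]{T} \refs \negR[C]{B}$, hence $\negL[U]{\negR[U]{T}} \refs \negL[C]{\negR[C]{B}}$, and pulling this back along the shift map $\lc{\plugR} : B \to \negL[C]{\negR[C]{B}}$ lands in $B$; symmetrically for the $V$-side, so the judgment is well-formed. Abbreviate the two sides by $W_U$ and $W_V$, noting that the two shift maps $\lc{\plugR}$ occurring in them are taken at the answer objects $C$ and $C'$ respectively.

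My first three moves strip the right-hand side of its outer structure. Applying the right rule $R\pull{\lc{\plugR}}$ of the pullback defining $W_V$ reduces $W_U \nseq W_V$ to $W_U \seq{\lc{\plugR}} \negL[V]{\negR[V]{T}}$ (shift at $C'$). Applying the right rule $R^{\ImpL}$ of the left residual $\negL[V]{\negR[V]{T}}$ then reduces this to $\negR[V]{T} \mul W_U \seq{\plugR_{C'}} V$. Finally, applying the right rule $R\pull{f}$ of the pullback $V = \pull{f}U$ turns this into a single typing judgment over the base, namely $\negR[V]{T} \mul W_U \seq{\plugR_{C'}; f} U$.

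The crux is a base-category identity re-expressing the composite $\plugR_{C'}; f : \negR[C']{B} \mul B \to C$. Let $\phi \defeq \rc{\plugR_{C'}; f} : \negR[C']{B} \to \negR[C]{B}$ be the functorial action of the right residual $\negR[-]{B}$ on $f$. Combining the defining equation $(\phi \mul \id); \plugR_C \deq \plugR_{C'}; f$ of $\phi$ with the defining equation $(\id \mul \lc{\plugR_C}); \plugL \deq \plugR_C$ of the shift, and using bifunctoriality of $\mul$, gives
\[
\plugR_{C'}; f \;\deq\; (\phi \mul \lc{\plugR_C}); \plugL .
\]
Rewriting the base expression by this identity (a conversion) and then cutting along the composite with the rule $C$ at the intermediate e-type $\negR[U]{T} \mul \negL[U]{\negR[U]{T}}$ splits the remaining goal into $\negR[U]{T} \mul \negL[U]{\negR[U]{T}} \seq{\plugL} U$, which is exactly the left rule $L^{\ImpL}$ of $\negL[U]{\negR[U]{T}}$, and the product judgment $\negR[V]{T} \mul W_U \seq{\phi \mul \lc{\plugR_C}} \negR[U]{T} \mul \negL[U]{\negR[U]{T}}$. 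On the latter I apply the monoidal rule $M$, splitting it into $W_U \seq{\lc{\plugR_C}} \negL[U]{\negR[U]{T}}$ --- the left rule $L\pull{\lc{\plugR_C}}$ of $W_U$ --- and $\negR[V]{T} \seq{\phi} \negR[U]{T}$. This last judgment expresses exactly that $\phi$ is the functorial action of the residual, and I derive it by $R^{\ImpR}$ followed by $C$ at $V$, landing on the two axioms $\negR[V]{T} \mul T \seq{\plugR_{C'}} V$ (the left rule $L^{\ImpR}$ of $\negR[V]{T}$) and $V \seq{f} U$ (the left rule $L\pull{f}$ of $V = \pull{f}U$). Every leaf is now a basic left/right rule, so the derivation is complete.

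The one genuinely content-bearing step --- and the one I expect to need the most care --- is the base identity $\plugR_{C'}; f \deq (\phi \mul \lc{\plugR_C}); \plugL$: this is precisely the sense in which ``ordinary double-negation becomes polymorphic double-negation after pulling back along the shift map'', recording the compatibility between the shift and the action of the right residual on the cartesian map $f$ witnessing $V = \pull{f}U$. Everything downstream is a routine bottom-up application of the structural rules plus that one conversion; in particular, only pullbacks and residuals (never pushforwards) are used, so the argument holds in any monoidal closed fibration as claimed.
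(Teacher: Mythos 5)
Your derivation is correct and coincides step for step with the paper's own proof: the same bottom-up sequence $R\pull{\lc{\plugR}}$, $R^{\ImpL}$, $R\pull{f}$, a conversion by the identity $\rc{\plugR;f}\mul\lc{\plugR};\plugL \deq \plugR;f$, then $C$, $M$, and the same leaves, including the subderivation of $\negR[V]{T}\seq{\rc{\plugR;f}}\negR[U]{T}$ via $R^{\ImpR}$, $L^{\ImpR}$ and $L\pull{f}$. You have also correctly isolated, and justified from the residual and shift equations, exactly the base-category identity that the paper flags as the crucial step.
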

\begin{proof}
$$
\infer[R\pull{\lc{\plugR}}]{\pull{\lc{\plugR}}\negL[U]{\negR[U]{T}} \nseq \pull{\lc{\plugR}} \negL[{\pull{f}U}]{\negR[{\pull{f}U}]{T}}}{
\infer[R^\ImpL]{\pull{\lc{\plugR}}\negL[U]{\negR[U]{T}} \seq{\lc{\plugR}} \negL[{\pull{f}U}]{\negR[{\pull{f}U}]{T}}}{
\infer[R\pull{f}]{\negR[{\pull{f}U}]{T} \mul \pull{\lc{\plugR}}\negL[U]{\negR[U]{T}} \seq{\plugR} \pull{f}U}{
\infer[\deq]{\negR[{\pull{f}U}]{T} \mul \pull{\lc{\plugR}}\negL[U]{\negR[U]{T}} \seq{\plugR;f} U}{
\infer[C]{\negR[{\pull{f}U}]{T} \mul \pull{\lc{\plugR}}\negL[U]{\negR[U]{T}} \seq{\rc{\plugR;f}\mul \lc{\plugR};\plugL} U}{
 \infer[M]{\negR[{\pull{f}U}]{T} \mul \pull{\lc{\plugR}}\negL[U]{\negR[U]{T}} \seq{\rc{\plugR;f}\mul \lc{\plugR}} \negR[U]{T} \mul \negL[U]{\negR[U]{T}}}{
  \infer[R^\ImpR]{\negR[{\pull{f}U}]{T} \seq{\rc{\plugR;f}} \negR[U]{T}}{
   \infer[C]{\negR[{\pull{f}U}]{T}\mul T \seq{\plugR;f} U}{
    \infer[L^\ImpR]{\negR[{\pull{f}U}]{T}\mul T \seq{\plugR} \pull{f}U}{} &
     \infer[L\pull{f}]{\pull{f}U \seq{f} U}{}
    }}
  &
  \infer[L\pull{\lc{\plugR}}]{\pull{\lc{\plugR}}\negL[U]{\negR[U]{T}} \seq{\lc{\plugR}} \negL[U]{\negR[U]{T}}}{}
} & 
 \infer[L^\ImpL]{\negR[U]{T} \mul \negL[U]{\negR[U]{T}} \seq{\plugL} T}{}
}}}}}
$$
Note the crucial use (at the inference marked ``$\deq$'') of the identity
\begin{align*}
\rc{\plugR;f}\mul \lc{\plugR};\plugL &\ \deq\,\  \plugR;f
\end{align*}
which is valid in any monoidal closed category.
\end{proof}
Thus, double-negation into a particular type subsumes double-negation into all pullbacks of that type---provided we are in the context of a shift.

After a few preliminaries, we will show how this idea leads to a general representation theorem for strong monads on monoidal closed fibrations.

\subsection{Adjunctions and strong monads on type refinement systems}

From now on we will consider pairs of type refinement systems
$$
\xymatrix {
\c{E}\ar[d]_p & \c{D}\ar[d]^q \\
\c{I} & \c{J}
}
$$
To avoid heavy notation, we will keep the same conventions for $p$ and $q$ as we had when there was just a single type refinement system (writing, for example, $T \refs B$ for the refinement relation in $q$, rather than, say, $T \refs^q B$).  For clarity, though, we will distinguish the objects of $\c{D}$ as ``d-types'', and the objects of $\c{J}$ as ``j-types'', while continuing to refer to the objects of $\c{E}$ and $\c{I}$ as e-types and i-types.

\begin{definition}
Let $p : \c{E} \to \c{I}$ and $q : \c{D} \to \c{J}$ be a pair of type refinement systems.  
A {\bf morphism of type refinement systems} $L : p \to q$ is a pair of functors $L_0 : \c{I} \to \c{J}$ and $L_1 : \c{E} \to \c{D}$ forming a commuting square,
$$
\xymatrix {
\c{E}\ar[d]_p\ar[r]^{L_1} & \c{D}\ar[d]^q \\
\c{I}\ar[r]^{L_0} & \c{J}
}
$$
in the sense that the following rules are admissible (omitting subscripts):
$$
\infer{L[S] \refs L[A]}{S \refs A}
\qquad
\infer[L]{L[S] \seq{L[f]} L[T]}{S \seq{f} T}
$$
\end{definition}

\begin{definition}
An {\bf adjunction of type refinement systems} $p \dashv q$ is a pair of morphisms $(L_0,L_1) : p \to q$ and $(R_0,R_1) : q \to p$ together with a pair of adjunctions $(\eta_0,\epsilon_0) : L_0 \dashv R_0$ and $(\eta_1,\epsilon_1) : L_1 \dashv R_1$,
$$
\xymatrix {
\c{E}\ar@/^1pc/[rr]^{L_1}\ar[dd]_p & \bot & \c{D}\ar@/^1pc/[ll]^{R_1}\ar[dd]^q \\
\\
\c{I}\ar@/^1pc/[rr]^{L_0} & \bot & \c{J}\ar@/^1pc/[ll]^{R_0}
}
$$
which are compatible in the sense that the following rules are admissible,
$$
\infer[\eta]{S \seq{\eta} RL[S]}{}\qquad
\infer[\epsilon]{LR[T] \seq{\epsilon} T}{}
$$
and the following equations hold:
\begin{itemize}
\item (naturality)
$$
\infer[C]{S \seq{f;\eta} RL[T]}{\deduce{S \seq{f} T}{\alpha} & \infer[\eta]{T \seq{\eta} RL[T]}{}}
\quad\deq\quad
\infer[C]{S \seq{\eta;RL[f]} RL[T]}{\infer[\eta]{S \seq{\eta} RL[S]}{} &
 \infer[RL]{RL[S] \seq{RL[f]} RL[T]}{\deduce{S \seq{f} T}{\alpha}}}
$$
$$
\infer[C]{LR[T] \seq{\epsilon;f} U}{\infer[\epsilon]{LR[T] \seq{\epsilon} T}{} & \deduce{T \seq{f} U}{\beta}}
\quad\deq\quad
\infer[C]{LR[T] \seq{LR[f];\epsilon} U}{
   \infer[LR]{LR[T] \seq{LR[f]} LR[U]}{\deduce{T \seq{f} U}{\beta}}
   &
   \infer[\epsilon]{LT[U] \seq{\epsilon} U}{}}
$$
\item (triangle laws)
$$
\infer[C]{L[S] \seq{L[\eta];\epsilon} L[S]}{
 \infer[L]{L[S] \seq{L[\eta]} LRL[S]}{\infer[\eta]{S \seq{\eta} RL[S]}{}} &
 \infer[\epsilon]{LRL[S] \seq{\epsilon} L[S]}{}
}
\quad\deq\quad
\infer[I]{L[S] \nseq L[S]}{}
$$
$$
\infer[C]{R[T] \seq{\eta;R[\epsilon]} R[T]}{
 \infer[\eta]{R[T] \seq{\eta} RLR[T]}{} &
 \infer[R]{RLR[T] \seq{R[\epsilon]} R[T]}{\infer[\epsilon]{LR[T] \seq{\epsilon} T}{}}
}
\quad\deq\quad
\infer[I]{R[T] \nseq R[T]}{}
$$
\end{itemize}
\end{definition}
We will be interested in adjunctions that give rise to \emph{strong monads} on monoidal type refinement systems.  By this we mean that the induced monad $R_0L_0$ is strong in the usual sense of having a strength
$$\sigma : A \mul RL[B] \to RL[A \mul B]$$
compatible with the unit and multiplication, and that $R_1L_1$ is strong in a compatible way
$$
\infer[\sigma]{S \mul RL[T] \seq{\sigma} RL[S\mul T]}{}
$$
\begin{proposition}
In a monoidal closed type refinement system, every 
$U \refs C$ gives rise to an adjunction
$$
\xymatrix {
\c{E}\ar@/^1pc/[rr]^{\negR[U]{\id}}\ar[dd]_{p} & \bot & \c{E}^\op\ar@/^1pc/[ll]^{\negL[U]{\id}}\ar[dd]^{p^\op} \\
\\
\c{I}\ar@/^1pc/[rr]^{\negR[C]{\id}} & \bot & \c{I}^\op\ar@/^1pc/[ll]^{\negL[C]{\id}}
}
$$
and a corresponding strong monad on $p$.
\end{proposition}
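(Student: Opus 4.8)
The plan is to lift the i-type continuation adjunction $\negR[C]{\id}\dashv\negL[C]{\id}$—the one recalled in \Cref{sec:mccs}, whose unit is the shift map—along $p$ to the e-types, using the closed structure of the refinement system, and then to equip the resulting double-negation monad with a strength. Concretely I would take $L_0=\negR[C]{\id}$, $R_0=\negL[C]{\id}$, $L_1=\negR[U]{\id}$, $R_1=\negL[U]{\id}$, so that the bottom adjunction $L_0\dashv R_0$ is already in hand and the entire task is to build $L_1\dashv R_1$ over it.

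First I would verify that $\negR[U]{\id}$ and $\negL[U]{\id}$ are functors lying over $\negR[C]{\id}$ and $\negL[C]{\id}$, i.e.\ that $(L_0,L_1)$ and $(R_0,R_1)$ are morphisms of type refinement systems. Given a derivation of $T\seq{g}T'$, one produces a derivation of $\negR[U]{T'}\seq{\negR[C]{g}}\negR[U]{T}$ by tensoring with the identity on $\negR[U]{T'}$ (rule $M$), composing with $L^{\ImpR}$, and transposing along $R^{\ImpR}$; functoriality (preservation of composites and identities) is then forced by the two defining equations of the right residual, and dually for $\negL[U]{\id}$.

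Next come the unit and counit. The unit is the shift derivation
$$
\infer[R^{\ImpL}]{S \seq{\lc{\plugR}} \negL[U]{\negR[U]{S}}}{
\infer[L^{\ImpR}]{\negR[U]{S} \emul S \seq{\plugR} U}{}}
$$
obtained by transposing the evaluation $L^{\ImpR}$ across the left residual, and symmetrically the counit is the dual shift $T\seq{\rc{\plugL}}\negR[U]{\negL[U]{T}}$, built from $L^{\ImpL}$ via $R^{\ImpR}$. By construction these lie over the i-type shifts $\lc{\plugR}$ and $\rc{\plugL}$, so compatibility with the bottom adjunction is automatic. The cleanest way to see that they really constitute an adjunction is to note that the right- and left-residual transpositions each give a bijection between derivations of $T\emul S\seq{f}U$ and derivations of $T\seq{\rc{f}}\negR[U]{S}$, respectively $S\seq{\lc{f}}\negL[U]{T}$; composing them yields the natural hom-isomorphism between derivations $\negR[U]{S}\to T$ in $\c{E}^\op$ and derivations $S\to\negL[U]{T}$ in $\c{E}$, fibered over the identical computation downstairs. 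Either way, naturality of the unit and counit and the two triangle laws reduce to the defining equations of the residuals together with the bifunctoriality of $\mul$, by the same kind of sequent-calculus bookkeeping used earlier to show pullbacks unique.

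Finally, the induced monad is the double-negation monad $R_1L_1=\negL[U]{\negR[U]{\id}}$ over the continuation monad $\negL[C]{\negR[C]{\id}}$, and it remains to give it a strength realizing $S\mul\negL[U]{\negR[U]{T}}\seq{\sigma}\negL[U]{\negR[U]{(S\mul T)}}$. I would first build the canonical action $\negR[U]{(S\mul T)}\emul S\to\negR[U]{T}$ as the $R^{\ImpR}$-transpose of $\plugR$ read through the associator, then tensor it with the identity on $\negL[U]{\negR[U]{T}}$ and postcompose with $L^{\ImpL}$, and finally transpose back along $R^{\ImpL}$; running the same recipe on i-types produces the strength of the continuation monad, so the two agree over $p$ by construction. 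I expect the real work—and the main obstacle—to be checking that $\sigma$ satisfies the strength axioms and is compatible with the unit $\eta$ and the multiplication $\mu$ of the monad: these are several coherence equations relating $\sigma$ to the associativity and unit isomorphisms of $\mul$ and to the monad structure, and while each one unwinds to the residual equations together with monoidal bifunctoriality, there are enough of them that the verification, though routine, is long.
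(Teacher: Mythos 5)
The paper states this proposition without proof, so there is nothing to compare against; your proposal supplies exactly the argument the authors evidently intend --- lifting the base continuation adjunction $\negR[C]{\id}\dashv\negL[C]{\id}$ to e-types via the residual rules $L^{\ImpL},R^{\ImpL},L^{\ImpR},R^{\ImpR}$, taking the shift derivations as unit and counit (equivalently, composing the two transposition bijections on derivations of $T\emul S\seq{f}U$), and equipping the resulting double-negation monad with the standard continuation strength. Your construction of $\sigma$ is moreover the one the paper implicitly relies on in \Cref{prop:strong1}, and the deferred coherence checks are, as you say, routine unwindings of the residual equations together with bifunctoriality of $\emul$.
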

Finally, we observe that by pulling back the monad $RL$ along the unit $\eta$, one obtains a ``fiberwise'' monad, meaning an operation
$$
\infer{\pull{\eta}RL[S] \refs A}{S \refs A}
$$
on the category $\c{E}_A$ of refinements of each i-type $A$, together with a pair of subtyping derivations
$$
\vdash S \nseq \pull{\eta}RL[S]\qquad
\vdash \pull{\eta}RL[\pull{\eta}RL[S]] \nseq \pull{\eta}RL[S]
$$
satisfying the monad laws.

\subsection{Diagrams of pullback and pushforward judgments}

Let $S \seq{f} T$ be a typing judgment.  By slight overloading of terminology, we say that the judgment itself is a pullback if $S$ is a pullback of $T$ along $f$, and indicate this by writing
$$
S \pullseq{f} T
$$
Similarly, we say that the judgment is a pushforward if $T$ is a pushforward of $S$ along $f$, indicated
$$
S \pushseq{f} T
$$
For example, a diagram
$$
S \pullseq{f} T \pullseq{g}U
$$
asserts that $S \deq \pull{f}T$ and $T \deq \pull{g}U$, while a diagram
$$
S \pushseq{f} T \pullseq{g}U
$$
asserts that $\push{f}S \deq T \deq \pull{g}U$.
\begin{proposition}\label{prop:2outof3}For all typing judgments $S \seq{f} T$ and $T \seq{g} U$ we have:
\begin{enumerate}
\item If $S \pullseq{f;g} U$  and $T \pullseq{g} U$ then $S \pullseq{f} T$.
\item If  $S \pushseq{f;g} U$ and $S \pushseq{f} T$ then $T \pushseq{g} U$.
\end{enumerate}
\end{proposition}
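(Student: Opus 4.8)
The plan is to prove statement~(1) by directly establishing the universal property of \Cref{def:pb}, exhibiting $S$ as a pullback of $T$ along $f$; statement~(2) then follows by dualization (reversing arrows and swapping pullbacks with pushforwards, left rules with right rules). It is worth stressing at the outset why the composition law $\pull{(f;g)}U \deq \pull{f}\pull{g}U$ cannot simply be invoked here: that proposition presupposes the existence of $\pull{f}\pull{g}U \deq \pull{f}T$, which is precisely the pullback we are being asked to construct. In other words, this is the proof-theoretic form of the pullback \emph{pasting lemma}, and its content is the genuine ``2-out-of-3'' cancellation rather than mere composition of pullbacks.

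Write $\ell'$, $r'$ for the left and right rules witnessing $S \pullseq{f;g} U$, and $\ell$, $r$ for those witnessing $T \pullseq{g} U$; thus $\ell' : S \seq{f;g} U$ and $\ell : T \seq{g} U$, while $r'$ sends any derivation of $S' \seq{h;(f;g)} U$ to one of $S' \seq{h} S$, and $r$ sends any derivation of $S' \seq{h;g} U$ to one of $S' \seq{h} T$. First I would write down the candidate rules making $S$ a pullback of $T$ along $f$. The left rule is $r(\ell') : S \seq{f} T$, obtained by applying $r$ to $\ell'$ (matching the pattern $S' \seq{h;g} U$ with $S' = S$ and $h = f$). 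The right rule sends a derivation $\beta : S' \seq{h;f} T$ to $r'\big(C(\beta,\ell)\big) : S' \seq{h} S$, where $C(\beta,\ell) : S' \seq{(h;f);g} U$ is re-read as $S' \seq{h;(f;g)} U$ by associativity of expression composition.

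The crux is a one-line lemma asserting that the left rules compose correctly, $C\big(r(\ell'),\ell\big) \deq \ell'$, which is exactly the computation equation of the pullback $T \pullseq{g} U$ applied to $\delta := \ell'$. Granting this, I would verify the two equations of \Cref{def:pb} for $S$ over $f$. For the computation equation $C\big(r'(C(\beta,\ell)),\,r(\ell')\big) \deq \beta$, the idea is to post-compose both sides with $\ell : T \seq{g} U$: using associativity of $C$, the lemma, and the computation equation of the outer pullback $S \pullseq{f;g} U$ applied to $C(\beta,\ell)$, the left side reduces to $C(\beta,\ell)$, which is exactly the right side post-composed with $\ell$; since $T$ is a pullback along $g$, post-composition with $\ell$ is a bijection on derivations into $T$ (its inverse being $r$, by the computation and uniqueness equations of $T \pullseq{g} U$), so the two sides must already agree. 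The uniqueness equation $\eta \deq r'\big(C(C(\eta,r(\ell')),\ell)\big)$ for $\eta : S' \seq{h} S$ is shorter still: associativity together with the lemma rewrites the inner $C(C(\eta,r(\ell')),\ell)$ as $C(\eta,\ell')$, after which the uniqueness equation of the outer pullback $S \pullseq{f;g} U$ finishes the job.

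The main obstacle is conceptual rather than computational: one must resist citing the composition law and instead recognize that the hypotheses supply exactly two legs of the pasting rectangle, the third being reconstructed by composing the two natural bijections on derivation-sets carried by the given pullbacks. Once the left and right rules above are fixed, the remaining bookkeeping is short, its only ingredients being associativity of $C$ and the four computation and uniqueness equations attached to $S \pullseq{f;g} U$ and $T \pullseq{g} U$; and statement~(2) is the same argument read in the opposite category, so I would merely remark on the duality rather than repeat the calculation.
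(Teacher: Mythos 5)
Your proof is correct, but it takes a genuinely different route from the paper's. The paper disposes of both items in one line, by chaining vertical isomorphisms through the previously established composition law: $S \deq \pull{(f;g)}U \deq \pull{f}\pull{g}U \deq \pull{f}T$ (and dually $U \deq \push{(f;g)}S \deq \push{g}\push{f}S \deq \push{g}T$), then appealing to uniqueness of pullbacks. You instead build the pullback structure on $S \seq{f} T$ by hand --- left rule $r(\ell')$, right rule $\beta \mapsto r'(C(\beta,\ell))$ --- and verify the two equations of \Cref{def:pb} directly, the key step being the identity $C(r(\ell'),\ell) \deq \ell'$ together with the observation that post-composition with the cartesian $\ell$ is injective on derivations into $T$. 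Your opening objection to the paper's route has real content: the composition law is stated ``whenever both sides exist,'' so invoking it here presupposes that $\pull{f}\pull{g}U \deq \pull{f}T$ exists, which is exactly what item (1) asserts. In the ambient setting of Section 5 this is harmless ($p$ is a fibration, so all pullbacks exist, and one only needs the routine fact that $\pull{f}$ preserves vertical isomorphism), but read in isolation the proposition makes no such assumption, and your pasting-lemma argument establishes it without positing any pullbacks beyond the two given in the hypotheses. So the paper's proof buys brevity and reuse of established machinery; yours buys self-containedness and strictly greater generality. The duality remark disposing of item (2) is exactly what the paper does as well.
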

\begin{proof}
Since $S \deq \pull{(f;g)}U \deq \pull{f}\pull{g}U \deq \pull{f}T$ and
$U \deq \push{(f;g)} S \deq \push{g}\push{f}S \deq \push{g}T$.
\end{proof}

\subsection{The continuations representation of a monad}

We assume an adjunction
$$
\xymatrix {
\c{E}\ar@/^1pc/[rr]^{L_1}\ar[dd]_p & \bot & \c{D}\ar@/^1pc/[ll]^{R_1}\ar[dd]^q \\
\\
\c{I}\ar@/^1pc/[rr]^{L_0} & \bot & \c{J}\ar@/^1pc/[ll]^{R_0}
}
$$
giving rise to a strong monad on a monoidal closed fibration $p$.
\begin{proposition}\label{prop:strong1}
For every $T \refs B$ and $U \refs C$, there is an expression
$$\xi : RL[B] \to \negL[{R[C]}]{\negR[{R[C]}]{B}}$$
such that
$$\eta ; \xi \deq \lc{\plugR}$$
together with a typing derivation
$$
\infer[\xi]{RL[T] \seq{\xi} \negL[{R[U]}]{\negR[{R[U]}]{T}}}{}
$$
such that
$$
\infer[C]{T \seq{\eta ; \xi} \negL[{R[U]}]{\negR[{R[U]}]{T}}}{
  \infer[\eta]{T \seq{\eta} RL[T]}{} &
  \infer[\xi]{RL[T] \seq{\xi} \negL[{R[U]}]{\negR[{R[U]}]{T}}}{}}
\deq\quad
\infer[R^\ImpL]{T \seq{\lc{\plugR}} \negL[{R[U]}]{\negR[{R[U]}]{T}}}{
 \infer[L^\ImpR]{\negR[{R[U]}]{T} \mul T \seq{\plugR} R[U]}{}
 }
$$
%\infer[\kappa]{\negR[{R[U]}]{T} \seq{\kappa} \negR[{R[U]}]{RL[T]}}{}
\end{proposition}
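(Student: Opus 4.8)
The plan is to build $\xi$ from the free $RL$-algebra structure carried by $R[C]$, verify the unit equation at the level of i-types, and then replay the whole construction with derivations. First I would construct $\xi$: by the defining adjunction of the left residual, giving an expression $\xi : RL[B] \to \negL[{R[C]}]{\negR[{R[C]}]{B}}$ is the same as giving an expression $\negR[{R[C]}]{B} \mul RL[B] \to R[C]$, and I would take the latter to be the composite of the strength, the action of $RL$ on the plug map $\plugR : \negR[{R[C]}]{B} \mul B \to R[C]$, and the counit component $R[\epsilon] : RL[R[C]] \to R[C]$ that exhibits $R[C]$ as a free $RL$-algebra:
\[\xi \defeq \lc{\left(\sigma\,;\,RL[\plugR]\,;\,R[\epsilon]\right)}\]
where $\sigma : \negR[{R[C]}]{B} \mul RL[B] \to RL[\negR[{R[C]}]{B} \mul B]$ is the strength. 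Morally this is the comparison map into the continuation monad $\negL[{R[C]}]{\negR[{R[C]}]{\id}}$ induced by the algebra $R[C]$, and the content of the proposition is that it is compatible with units.

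To check $\eta\,;\,\xi \deq \lc{\plugR}$ I would transpose both sides along the left-residual adjunction. By naturality of $\lambda$ the transpose of $\eta\,;\,\xi$ is $(\id \mul \eta)\,;\,\sigma\,;\,RL[\plugR]\,;\,R[\epsilon]$, whereas the transpose of $\lc{\plugR}$ is simply $\plugR$. It then suffices to collapse the former to the latter in three steps: the strength/unit law $(\id \mul \eta)\,;\,\sigma \deq \eta$; naturality of $\eta$, which gives $\eta\,;\,RL[\plugR] \deq \plugR\,;\,\eta$ and so slides $\plugR$ out of the monad; and the triangle law $\eta\,;\,R[\epsilon] \deq \id$ of $L_0 \dashv R_0$ instantiated at $R[C]$, which erases the remaining double-negation detour. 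Since the transposes agree, so do $\eta\,;\,\xi$ and $\lc{\plugR}$.

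Finally I would lift everything to derivations. The derivation of $RL[T] \seq{\xi} \negL[{R[U]}]{\negR[{R[U]}]{T}}$ is obtained by applying $R^{\ImpL}$ to the composite of the $\sigma$ rule, the $RL$-rule applied to the left-rule $L^{\ImpR}$ derivation of $\negR[{R[U]}]{T} \mul T \seq{\plugR} R[U]$, and the $R$-image of the $\epsilon$ rule; its underlying expression is exactly $\xi$. For the displayed equation, I would use the $R^{\ImpL}$/$L^{\ImpL}$ bijection of the left residual of e-types to reduce the asserted identity to the identity of the two transposed derivations of $\negR[{R[U]}]{T} \mul T \seq{\plugR} R[U]$: the transpose of the right-hand side is the bare $L^{\ImpR}$ derivation of $\plugR$, and the transpose of the left-hand side reduces to it by replaying the three steps above, now invoking the \emph{derivation-level} versions of the strength/unit law, the naturality of $\eta$, and the triangle law supplied by the definitions of strong monad and of adjunction of type refinement systems.

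The conceptual crux is noticing that $R[C]$ (resp.\ $R[U]$) carries a canonical $RL$-algebra structure through $R[\epsilon]$, which is precisely what turns the strength into a unit-preserving map into the continuation monad; once this is in hand the i-type equation is forced by the triangle law. The main labor, and the step most likely to require care, is the derivation-level equation, where one must thread the transpose bijection for the left residual and keep track of the monoidal coherence isomorphisms---long but mechanical, since each step is dictated by an equation already guaranteed by the ambient structure.
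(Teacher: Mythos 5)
Your construction of $\xi$ as the currification of $\sigma\,;\,RL[\plugR]\,;\,R[\epsilon]$ is exactly the paper's definition, and your verification of the unit equation (transposing along the residual adjunction, then applying the strength/unit law, naturality of $\eta$, and the triangle law) spells out precisely what the paper compresses into ``the equations follow from the laws of strong monads.'' This is the same proof, just with the routine calculation made explicit.
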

\begin{proof}
The expression $\xi$ is defined as the currification of 
$$
\xymatrix {
\negR[{R[C]}]{B} \mul RL[B] \ar[r]^\sigma & RL[\negR[{R[C]}]{B} \mul B] \ar[r]^{RL[\plugR]} & RLR[B] \ar[r]^{R[\epsilon]} & R[B]
}
$$
The corresponding typing derivation mirrors the structure of the expression exactly, and the equations follow from the laws of strong monads.
\end{proof}
\begin{proposition}\label{prop:tonn} For every e-type $T$ and d-type $U$, we have a derivation
$$
\infer[F^{\mu}]{\pull{\eta}RL[T] \nseq \pull{\lc{\plugR}} \negL[{R[U]}]{\negR[{R[U]}]{T}}}{}
$$
\end{proposition}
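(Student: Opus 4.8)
The plan is to reduce the statement to \Cref{prop:strong1}, whose expression $\xi$ already carries exactly the information we need. The conceptual point is the insight governing \Cref{obs:poly}: the factorization $\eta ; \xi \deq \lc{\plugR}$ makes the unit $\eta$ of the monad appear inside the shift map, so that the monad $RL$ pulled back along $\eta$ can be matched against the double-negation pulled back along $\lc{\plugR}$.

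First I would check that the judgment is well-formed. Writing $T \refs B$ and $U \refs C$ as in \Cref{prop:strong1}, the residual formation rules give $\negR[{R[U]}]{T} \refs \negR[{R[C]}]{B}$ and hence $\negL[{R[U]}]{\negR[{R[U]}]{T}} \refs \negL[{R[C]}]{\negR[{R[C]}]{B}}$, which is precisely the codomain of the shift map $\lc{\plugR} : B \to \negL[{R[C]}]{\negR[{R[C]}]{B}}$. Therefore $\pull{\lc{\plugR}} \negL[{R[U]}]{\negR[{R[U]}]{T}} \refs B$; and since $\eta : B \to RL[B]$ with $RL[T] \refs RL[B]$, we also have $\pull{\eta}RL[T] \refs B$, so both sides of the subtyping judgment refine the same i-type $B$.

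The derivation then proceeds top-down in three steps. I would apply the right rule $R\pull{\lc{\plugR}}$ for the pullback along the shift, reducing the goal to $\pull{\eta}RL[T] \seq{\lc{\plugR}} \negL[{R[U]}]{\negR[{R[U]}]{T}}$; then rewrite $\lc{\plugR}$ as $\eta ; \xi$ via the conversion rule and the equation of \Cref{prop:strong1}; and finally split by the composition rule $C$ into the left pullback rule $L\pull{\eta}$ on the left and the typing derivation $\xi$ of \Cref{prop:strong1} on the right:
$$
\infer[R\pull{\lc{\plugR}}]{\pull{\eta}RL[T] \nseq \pull{\lc{\plugR}} \negL[{R[U]}]{\negR[{R[U]}]{T}}}{
\infer[\deq]{\pull{\eta}RL[T] \seq{\lc{\plugR}} \negL[{R[U]}]{\negR[{R[U]}]{T}}}{
\infer[C]{\pull{\eta}RL[T] \seq{\eta ; \xi} \negL[{R[U]}]{\negR[{R[U]}]{T}}}{
\infer[L\pull{\eta}]{\pull{\eta}RL[T] \seq{\eta} RL[T]}{} &
\infer[\xi]{RL[T] \seq{\xi} \negL[{R[U]}]{\negR[{R[U]}]{T}}}{}
}}}
$$

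Almost all of the work is front-loaded into \Cref{prop:strong1}: once $\xi$, its typing derivation, and the equation $\eta ; \xi \deq \lc{\plugR}$ are in hand, the present proposition is just this short composition. I therefore do not expect a genuine obstacle here; the only thing to verify is that the conversion step is licensed, which is exactly the equation supplied by \Cref{prop:strong1}.
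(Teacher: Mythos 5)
Your derivation is exactly the one in the paper: apply $R\pull{\lc{\plugR}}$, convert $\lc{\plugR}$ to $\eta;\xi$ using the equation from \Cref{prop:strong1}, and compose $L\pull{\eta}$ with the typing derivation $\xi$. The proposal is correct and takes essentially the same approach as the paper's proof.
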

\begin{proof}
$$
\infer[R\pull{\lc{\plugR}}]{\pull{\eta}RL[T] \nseq \pull{\lc{\plugR}} \negL[{R[U]}]{\negR[{R[U]}]{T}}}{
\infer[\deq]{\pull{\eta}RL[T] \seq{\lc{\plugR}} \negL[{R[U]}]{\negR[{R[U]}]{T}}}{
\infer[C]{\pull{\eta}RL[T] \seq{\eta;\xi} \negL[{R[U]}]{\negR[{R[U]}]{T}}}{
  \infer[L\pull{\eta}]{\pull{\eta}RL[T] \seq{\eta} RL[T]}{} &
  \infer[\xi]{RL[T] \seq{\xi} \negL[{R[U]}]{\negR[{R[U]}]{T}}}{}
}}}
$$
\end{proof}
Now, to exhibit a map in the reverse direction, the discussion in the introduction to this section suggests we should ask for $RL[T]$ to be a pullback of $R[U]$.
\begin{proposition}
\label{prop:fromnn}
For every e-type $T$ and d-type $U$ such that $RL[T] \pullseq{f} R[U]$, we have a derivation
$$\infer[F^{[]}]{\pull{\lc{\plugR}} \negL[{R[U]}]{\negR[{R[U]}]{T}} \nseq \pull{\eta}RL[T]}{}$$
\end{proposition}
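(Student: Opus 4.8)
The plan is to reduce the goal by the right rules for pullback until it becomes a plain typing judgment into the answer type $R[U]$, and then to discharge that judgment by ``running'' the double negation against a continuation manufactured out of the hypothesis $RL[T]\pullseq f R[U]$. Write $X\defeq\pull{\lc{\plugR}}\negL[{R[U]}]{\negR[{R[U]}]{T}}$, which refines $B$ because the shift $\lc{\plugR}:B\to\negL[{R[C]}]{\negR[{R[C]}]{B}}$ has domain $B$ while $\negL[{R[U]}]{\negR[{R[U]}]{T}}\refs\negL[{R[C]}]{\negR[{R[C]}]{B}}$. First I would apply the right rule $R\pull{\eta}$, reducing the goal $X\nseq\pull{\eta}RL[T]$ to a typing derivation of $X\seq{\eta}RL[T]$. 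Since by hypothesis $RL[T]\deq\pull fR[U]$ carries the pullback rules $L\pull f$ and $R\pull f$, a second application of the right rule $R\pull f$ reduces this in turn to constructing
$$X\seq{\eta;f}R[U].$$

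To build this last judgment I would plug $X$ against the ``diagonal'' continuation associated to $\eta;f:T\to R[U]$. The map $\eta;f$ itself is assembled by the composition rule $C$ from the unit rule $\eta$ (giving $T\seq{\eta}RL[T]$) and the pullback projection $L\pull f$ (giving $RL[T]\seq fR[U]$, using the hypothesis); currying it with $R^{\ImpR}$ produces a closed continuation $1\seq{\rc{\eta;f}}\negR[{R[U]}]{T}$. Meanwhile the left rule $L\pull{\lc{\plugR}}$ exhibits $X$ as $X\seq{\lc{\plugR}}\negL[{R[U]}]{\negR[{R[U]}]{T}}$. Tensoring these two by $M$ and then cutting against the left-residual counit $L^{\ImpL}$, namely $\negR[{R[U]}]{T}\mul\negL[{R[U]}]{\negR[{R[U]}]{T}}\seq{\plugL}R[U]$, yields a derivation of $1\mul X\seq{(\rc{\eta;f}\mul\lc{\plugR});\plugL}R[U]$; after transporting along the unit isomorphism $1\mul X\deq X$ this is a derivation of $X\seq gR[U]$ for the expression $g=(\rc{\eta;f}\mul\lc{\plugR});\plugL$.

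The crux is then the purely $\c{I}$-level identity $g\deq\eta;f$, which lets a single $\deq$-conversion finish the job. It is proved just as in \Cref{obs:poly}: factoring $\rc{\eta;f}\mul\lc{\plugR}$ as $(\rc{\eta;f}\mul\id);(\id\mul\lc{\plugR})$ and invoking the two residual computation laws $(\id\mul\lc{\plugR});\plugL\deq\plugR$ and $(\rc{\eta;f}\mul\id);\plugR\deq\eta;f$ gives $(\rc{\eta;f}\mul\lc{\plugR});\plugL\deq\eta;f$ in any monoidal closed category. Conceptually this is exactly the categorical content of ``instantiating $p:=X$ and applying the hypothesis to the identity'' in the second-order tautology of the introduction: the continuation $\eta;f$ is the map returning its argument into the chosen answer type $R[U]$, and evaluating a double negation against it recovers the monadic value. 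I expect the only fiddly part to be the bookkeeping around the unit isomorphism $1\mul X\deq X$ together with the threading of the vertical isomorphism $\pull fR[U]\deq RL[T]$ through the two pullback right rules; both are routine, so once the continuation is identified the full derivation $F^{[]}$ assembles mechanically.
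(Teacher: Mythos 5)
Your proposal is correct and follows essentially the same route as the paper's own derivation: $R\pull{\eta}$ then $R\pull{f}$ to reduce to $X \seq{\eta;f} R[U]$, followed by a cut of the tensor of the curried continuation $1 \seq{\rc{\eta;f}} \negR[{R[U]}]{T}$ (built from $\eta$ and $L\pull{f}$) with $L\pull{\lc{\plugR}}$ against $L^{\ImpL}$, discharged by the conversion $(\rc{\eta;f}\mul\lc{\plugR});\plugL \deq \eta;f$. The factorization you give for that identity is exactly the one the paper invokes in \Cref{obs:poly}, so nothing further is needed.
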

\begin{proof}
$$
\infer[R\pull{\eta}]{\pull{\lc{\plugR}} \negL[{R[U]}]{\negR[{R[U]}]{T}} \nseq \pull{\eta}RL[T]}{
\infer[R\pull{f}]{\pull{\lc{\plugR}} \negL[{R[U]}]{\negR[{R[U]}]{T}} \seq{\eta} RL[T]}{
\infer[\deq]{\pull{\lc{\plugR}} \negL[{R[U]}]{\negR[{R[U]}]{T}} \seq{\eta;f} R[U]}{
 \infer[C]{\pull{\lc{\plugR}} \negL[{R[U]}]{\negR[{R[U]}]{T}} \seq{\rc{\eta;f}\mul \lc{\plugR};\plugL} R[U]}{
 \infer[M]{\pull{\lc{\plugR}} \negL[{R[U]}]{\negR[{R[U]}]{T}} \seq{\rc{\eta;f} \mul \lc{\plugR}} \negR[{R[U]}]{T} \mul \negL[{R[U]}]{\negR[{R[U]}]{T}}}
  {
    \infer[R^\ImpR]{1 \seq{\rc{\eta;f}} \negR[{R[U]}]{T}}{\infer[C]{T \seq{\eta;f} {R[U]}}{\infer[\eta]{T \seq{\eta} RL[T]}{} & \infer[L\pull{f}]{RL[T] \seq{f} {R[U]}}{}}}   
  &
 \infer[L\pull{\lc{\plugR}}]{\pull{\lc{\plugR}} \negL[{R[U]}]{\negR[{R[U]}]{T}} \seq{\lc{\plugR}} \negL[{R[U]}]{\negR[{R[U]}]{T}}}{} 
  } &
  \infer[L^\ImpL]{ \negR[{R[U]}]{T} \mul \negL[{R[U]}]{\negR[{R[U]}]{T}} \seq{\plugL} {R[U]}}{}
}}}}
$$
\end{proof}
\begin{restatable}{proposition}{retractlemma}\label{prop:retract}
$F^{[]}$ is a retraction of $F^\mu$, i.e., we have (under assumption of $RL[T] \pullseq{f} R[U]$)
$$\infer[C]{\pull{\eta}RL[T] \nseq \pull{\eta}RL[T]}{
 \infer[F^{\mu}]{\pull{\eta}RL[T] \nseq \pull{\lc{\plugR}} \negL[{R[U]}]{\negR[{R[U]}]{T}}}{}
 &
 \infer[F^{[]}]{\pull{\lc{\plugR}} \negL[{R[U]}]{\negR[{R[U]}]{T}} \nseq \pull{\eta}RL[T]}{}
}\quad\deq\quad
\infer[I]{\pull{\eta}RL[T] \nseq \pull{\eta}RL[T] }{}
$$
\end{restatable}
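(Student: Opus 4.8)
The plan is to use that every e-type in sight is a pullback and collapse the statement to a single identity ``at the bottom''. Since pullbacks compose and $RL[T]\pullseq{f}R[U]$ by hypothesis, $\pull\eta RL[T]\deq\pull\eta\pull f R[U]\deq\pull{(\eta;f)}R[U]$ is the pullback of $R[U]$ along $\eta;f$, with left rule $\theta\defeq C(L\pull\eta,L\pull f):\pull\eta RL[T]\seq{\eta;f}R[U]$. By the $\eta$-law of this pullback (the second equation of \Cref{def:pb}), a derivation into $\pull\eta RL[T]$ is determined by its composite with $\theta$, since $R\pull{(\eta;f)}$ is a bijection; applied to $F^\mu;F^{[]}$ and to $\id$, this shows it suffices to prove $C(F^\mu;F^{[]},\theta)\deq\theta$.

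Next I would cancel the right-rule layers against the matching left rules using the $\beta$-law of \Cref{def:pb}. Regrouping by associativity, $C(F^{[]},\theta)=C(C(F^{[]},L\pull\eta),L\pull f)$; the two layers $R\pull\eta$ and $R\pull f$ built into $F^{[]}$ (\Cref{prop:fromnn}) cancel, leaving exactly the bottom derivation $\delta$ of \Cref{prop:fromnn}, so that $C(F^\mu;F^{[]},\theta)\deq C(F^\mu,\delta)$. Writing $F^\mu=R\pull{\lc\plugR}(C(L\pull\eta,\xi))$ (\Cref{prop:tonn}) and pushing $F^\mu$ through the tensor with bifunctoriality of $M$, the $\beta$-law for $\lc\plugR$ cancels $F^\mu$ against the occurrence of $L\pull{\lc\plugR}$ inside $\delta$, reducing the goal to
$$C(M(\beta_1,C(L\pull\eta,\xi)),L^{\ImpL})\deq\theta,$$
where $\beta_1:1\seq{\rc{\eta;f}}\negR[{R[U]}]{T}$ is the point constructed in the proof of \Cref{prop:fromnn}. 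Each of these steps is a genuine equality of derivations, justified only by universal properties and bifunctoriality.

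It then remains to prove this bottom identity. At the level of expressions both sides reduce to $\eta;f$: one uses $\eta;\xi\deq\lc\plugR$ (\Cref{prop:strong1}) in the right tensor factor, followed by the monoidal-closed identity $(\rc g\mul\lc\plugR);\plugL\deq g$ (the consequence of $(\id\mul\lc\plugR);\plugL\deq\plugR$ and $(\rc g\mul\id);\plugR\deq g$ that was already flagged in \Cref{obs:poly}) with $g=\eta;f$. To lift this to an equality of derivations I would invoke the full statement of \Cref{prop:strong1}, namely $C(\eta,\xi)\deq R^{\ImpL}(L^{\ImpR})$, together with the residual $\beta$- and $\eta$-laws, so that the residual machinery surrounding $\xi$ unwinds to the plug $\plugR$.

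The main obstacle is a genuine mismatch between the two occurrences of the unit. In $F^\mu$ the unit enters as the \emph{reindexing} $L\pull\eta:\pull\eta RL[T]\seq{\eta}RL[T]$, the left rule of the $\eta$-pullback, whereas in $\delta$ and in \Cref{prop:strong1} it enters as the honest \emph{monad unit} $T\seq{\eta}RL[T]$; these are distinct derivations sharing only the underlying expression $\eta$, so the strength unit law, which governs the monad unit, cannot be applied to $L\pull\eta$ as is. I would reconcile them through the fiberwise-monad unit $\iota:T\nseq\pull\eta RL[T]$, defined as $R\pull\eta(\eta)$ and hence satisfying $C(\iota,L\pull\eta)\deq\eta$ by the $\beta$-law for $\eta$, together with the monoidal-fibration coherence that $\mul$ preserves pullbacks, which lets the reindexing commute appropriately past the strength $\sigma$ hidden inside $\xi$. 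Threading this reconciliation through the strength, the naturality of $\eta$, and the triangle law is precisely the ``long but straightforward calculation'' the paper elsewhere invokes; once it is done, the monoidal-closed identity closes the argument.
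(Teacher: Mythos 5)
Your first three reductions follow exactly the computation that the paper hides behind the phrase ``long but mechanical'': use the $\eta$-laws of $\pull{\eta}$ and $\pull{f}$ to reduce the claim to an equality of composites with the left rules, cancel the $R\pull{\eta}$, $R\pull{f}$ and $R\pull{\lc{\plugR}}$ layers against $L\pull{\eta}$, $L\pull{f}$, $L\pull{\lc{\plugR}}$ by the $\beta$-laws and bifunctoriality, and arrive at the goal $C(M(\beta_1, C(L\pull{\eta},\xi)), L^{\ImpL}) \deq C(L\pull{\eta},L\pull{f})$ with $\beta_1 = R^{\ImpR}(C(\eta,L\pull{f}))$. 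Up to that point the proposal is correct and matches the paper.

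The gap is in the endgame. The obstacle you single out---that $L\pull{\eta}$ is not the monad unit---is not where the difficulty lies, and your proposed fix cannot work: from $C(\iota, L\pull{\eta}) \deq \eta$ with $\iota = R\pull{\eta}(\eta)$ you only learn what $C(L\pull{\eta},\xi)$ does \emph{after} precomposition with $\iota$, and $\iota$ is not left-cancellable, so \Cref{prop:strong1} cannot be transported to $C(L\pull{\eta},\xi)$ this way; preservation of pullbacks by $\mul$ plays no role here either. In fact $L\pull{\eta}$ is harmless: by bifunctoriality it factors out to the front, $M(\beta_1, C(L\pull{\eta},\xi)) \deq C(L\pull{\eta}, M(\beta_1,\xi))$, so the identity that actually has to be proved is $C(M(\beta_1,\id),\, \sigma;RL[\plugR];R[\epsilon]) \deq L\pull{f}$ (the second component being the uncurried $\xi$), and this involves only the honest unit $\eta$ sitting inside $\beta_1$, the strength, and the counit. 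That is where the real content of the ``mechanical computation'' lives: naturality of $\sigma$, the unit constraint $\sigma_{1,-} \deq \id$, and the residual $\beta$-law collapse the left-hand side to $C(RL[C(\eta,L\pull{f})], R[\epsilon])$, whose underlying expression is $RL[\eta;f];R[\epsilon]$, and reconciling \emph{this} with $L\pull{f}$ is the delicate step---it goes through precisely when $f$ is a transpose $R[g]$ (as in the main theorem, where $f = R[e_{L[T]}]$, $R$ preserves $q$-pullbacks, and the triangle law finishes). Your outline never reaches this step, and the route you sketch for the final stage would not produce it.
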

\begin{proof}
By a long but mechanical computation.% (see Appendix).
\end{proof}
However, in general there is no reason that $F^{[]}$ has to be a section of $F^\mu$.  For example, when $R$ and $L$ are the identity and $p$ is a trivial fibration $\c{E} \to 1$, this amounts to asking that the reset map
$$
\negL[U]{\negR[U]{U}} \nseq U
$$
is an inverse to (and not just a retraction of) the shift map
$$U \nseq \negL[U]{\negR[U]{U}}$$
Considering $U = 2$ in the monoidal closed fibration $\Set \to 1$ provides an easy counterexample.

To get an isomorphism
$$\pull{\eta}RL[T] \deq \pull{\lc{\plugR}} \negL[{R[U]}]{\negR[{R[U]}]{T}}$$
we therefore need a stronger assumption.
\begin{definition}Let $S \refs A$ and $U \refs C$ be e-types.  An {\bf encoding} of $S$ in $U$ is a map $e_S : A \to C$ such that $S \pullseq{e_S} U$.  A {\bf universal type} for a type refinement system is an e-type $U$, together with an encoding $S \pullseq{e_S} U$ of $S$ in $U$ for every e-type $S$.
\end{definition}
\begin{definition}
Let $(L \dashv R) : p \dashv q$ be an adjunction of type refinement systems giving rise to a strong monad, and suppose that $U$ is a universal type in $q$, with encoding family $(e_S)_S$.  We say that $U$ is {\bf reflected} across the adjunction if:
\begin{enumerate}
\item $R$ preserves $q$-pullbacks, and
\item for each e-type $T$, the double-negation
$\negL[{R[U]}]{\negR[{R[U]}]{T}}$
is the pullback of $R[U]$ along $\rc{\eta;R[e_{L[T]}]}\mul\id;\plugR$.
\end{enumerate}
\end{definition}

\begin{theorem}
Let $(L \dashv R) : p \dashv q$ be an adjunction of type refinement systems giving rise to a strong monad, and suppose that $U$ is a universal type in $q$ reflected across the adjunction.  Then if $p$ is a monoidal closed fibration, 
$$\pull{\eta}RL[T] \deq \pull{\lc{\plugR}} \negL[{R[U]}]{\negR[{R[U]}]{T}}$$
\end{theorem}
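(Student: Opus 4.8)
The plan is to reduce the claimed vertical isomorphism to the uniqueness of pullbacks, by exhibiting \emph{both} sides as the pullback of $R[U]$ along one and the same expression of $\c{I}$ (up to $\deq$). This is exactly the ingredient left open after \Cref{prop:retract}: we already have $F^\mu$ (\Cref{prop:tonn}) in one direction and, once we know that $RL[T]$ is a pullback of $R[U]$, the retraction $F^{[]}$ (\Cref{prop:fromnn}, \Cref{prop:retract}) in the other, so what remains is to see that $F^\mu$ is also a section. The cleanest route is to show the two e-types are literally the same pullback.

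First I would treat the left-hand side. Since $U$ is universal with encoding family $(e_S)_S$, the d-type $L[T]$ is a pullback, $L[T] \deq \pull{e_{L[T]}}U$. The first clause of the reflection hypothesis says that $R$ preserves $q$-pullbacks, so applying $R$ gives $RL[T] \deq \pull{R[e_{L[T]}]}R[U]$, i.e. $RL[T] \pullseq{R[e_{L[T]}]} R[U]$ (this is also what makes $F^{[]}$ available). Composing pullbacks via $\pull{(f;g)}T \deq \pull{f}\pull{g}T$ then yields
$$
\pull{\eta}RL[T] \ \deq\ \pull{\eta}\pull{R[e_{L[T]}]}R[U] \ \deq\ \pull{(\eta;R[e_{L[T]}])}R[U].
$$

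Next I would treat the right-hand side in the same way. The second clause of the reflection hypothesis says precisely that $\negL[{R[U]}]{\negR[{R[U]}]{T}}$ is the pullback of $R[U]$ along the reset-style map $h$ of that clause, i.e. $\negL[{R[U]}]{\negR[{R[U]}]{T}} \pullseq{h} R[U]$. Composing with the shift $\lc{\plugR} : B \to \negL[{R_0[C]}]{\negR[{R_0[C]}]{B}}$ and again using functoriality of pullbacks gives
$$
\pull{\lc{\plugR}}\negL[{R[U]}]{\negR[{R[U]}]{T}} \ \deq\ \pull{\lc{\plugR}}\pull{h}R[U] \ \deq\ \pull{(\lc{\plugR};h)}R[U].
$$
The crux is then the single equation $\lc{\plugR};h \deq \eta;R[e_{L[T]}]$ in $\c{I}$: plugging the name $\rc{\eta;R[e_{L[T]}]}$ of the encoding into the shifted continuation and resetting recovers the encoding. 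This is a short diagram chase from the $\beta$ laws $(\id\mul\lc{f});\plugL \deq f$ and $(\rc{f}\mul\id);\plugR \deq f$ together with bifunctoriality --- precisely the manoeuvre isolated in \Cref{obs:poly}. Granting it, both displays exhibit pullbacks of $R[U]$ along $\deq$-equal expressions, so by the uniqueness of pullbacks (\Cref{prop:pbuniq}) we conclude $\pull{\eta}RL[T] \deq \pull{\lc{\plugR}}\negL[{R[U]}]{\negR[{R[U]}]{T}}$.

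I expect the main obstacle to be bookkeeping rather than conceptual: keeping the base objects ($B$, $R_0[C]$, and the double-negation object $\negL[{R_0[C]}]{\negR[{R_0[C]}]{B}}$) aligned so that the two factorizations really are along composable expressions, and confirming that the clause-2 map $h$ is the reset whose precomposition with the shift collapses to the encoding. Invoking $R$'s pullback-preservation only on the encoding pullback of the universal type is the one place where the hypotheses must be applied with care; and one may further check, if desired, that the resulting isomorphism is the $(F^\mu,F^{[]})$ pair, so that combined with \Cref{prop:retract} it is genuinely the identity on $\pull{\eta}RL[T]$.
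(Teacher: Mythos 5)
Your proof is correct and takes essentially the same route as the paper: the paper factors the pullback judgment $\pull{\eta}RL[T] \pullseq{\eta;R[e_{L[T]}]} R[U]$ through the clause-2 pullback $\negL[{R[U]}]{\negR[{R[U]}]{T}} \pullseq{\rc{\eta;R[e_{L[T]}]}\mul\id;\plugR} R[U]$ and concludes by the two-out-of-three property (\Cref{prop:2outof3}), which is proved by exactly your computation of both sides as the pullback of $R[U]$ along $\deq$-equal composites, together with uniqueness of pullbacks. The ``crux'' identity $\lc{\plugR};h \deq \eta;R[e_{L[T]}]$ you isolate is the same monoidal-closed beta-law manoeuvre the paper uses implicitly in its factorization.
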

\begin{proof}
The judgment
$$
\pull{\eta}RL[T] \seq{\eta;R[e_{L[T]}]} R[U]
$$
may be factored as the composition of two judgments
$$
\pull{\eta}RL[T] \seq{\lc{\plugR}} \negL[{R[U]}]{\negR[{R[U]}]{T}} \seq{\rc{\eta;R[e_{L[T]}]}\mul\id;\plugR} R[U]
$$
But since
$$
\pull{\eta}RL[T] \pullseq{\eta} RL[T] \pullseq{R[e_{L[T]}]} R[U]
$$
is a pullback and
$$\negL[{R[U]}]{\negR[{R[U]}]{T}} \pullseq{\rc{\eta;R[e_{L[T]}]}\mul\id;\plugR} R[U]$$
is a pullback, the left hand side
$$
\pull{\eta}RL[T] \seq{\lc{\plugR}} \negL[{R[U]}]{\negR[{R[U]}]{T}}
$$
must also be a pullback (\Cref{prop:2outof3}).
\end{proof}

%% \section*{Acknowledgments}

%% Acknowledgments, if needed. 

% We recommend abbrvnat bibliography style.

\bibliographystyle{abbrvnat}

% The bibliography should be embedded for final submission.

\if0
\appendix

\section{Proof of \Cref{lemma:retract}}

\retractlemma*
\begin{proof}
$$\tiny\infer[C]{\pull{\eta}M[T] \nseq \pull{\eta}M[T]}{
\infer[R\pull{\lc{\plugR}}]{\pull{\eta}M[T] \nseq \pull{\lc{\plugR}} \negL[U]{\negR[U]{T}}}{
\infer[R^\ImpL]{\pull{\eta}M[T] \seq{\lc{\plugR}} \negL[U]{\negR[U]{T}}}{
\infer[\deq]{\negR[U]{T} \mul \pull{\eta}M[T] \seq{\plugR} U}{
\infer[C]{\negR[U]{T} \mul \pull{\eta}M[T] \seq{(\mu\mul\eta);\plugR} U}{
 \infer[M]{\negR[U]{T} \mul \pull{\eta}M[T] \seq{\mu\mul\eta} \negR[U]{M[T]}\mul M[T]}{
    \infer[\mu]{\negR[U]{T} \seq{\mu} \negR[U]{M[T]}}{} &
    \infer[L\pull{\eta}]{\pull{\eta}M[T] \seq{\eta} M[T]}{}
   } &
 \infer[L^\ImpR]{\negR[U]{M[T]}\mul M[T] \seq{\plugR} U}{}
}
}}}
&
\infer[R\pull{\eta}]{\pull{\lc{\plugR}} \negL[U]{\negR[U]{T}} \nseq \pull{\eta}M[T]}{
\infer[R\pull{f}]{\pull{\lc{\plugR}} \negL[U]{\negR[U]{T}} \seq{\eta} M[T]}{
\infer[\deq]{\pull{\lc{\plugR}} \negL[U]{\negR[U]{T}} \seq{\eta;f} U}{
 \infer[C]{\pull{\lc{\plugR}} \negL[U]{\negR[U]{T}} \seq{\lc{\plugR}\mul \rc{\eta;f};\plugL} U}{
 \infer[M]{\pull{\lc{\plugR}} \negL[U]{\negR[U]{T}} \seq{\lc{\plugR}\mul \rc{\eta;f}} \negL[U]{\negR[U]{T}} \mul \negR[U]{T}}
  {\infer[L\pull{\lc{\plugR}}]{\pull{\lc{\plugR}} \negL[U]{\negR[U]{T}} \seq{\lc{\plugR}} \negL[U]{\negR[U]{T}}}{} &
 \infer[R^\ImpR]{1 \seq{\rc{\eta;f}} \negR[U]{T}}{\infer[C]{T \seq{\eta;f} U}{\infer[\eta]{T \seq{\eta} M[T]}{} & \infer[L\pull{f}]{M[T] \seq{f} U}{}}}} &
  \infer[L^\ImpL]{\negL[U]{\negR[U]{T}} \mul \negR[U]{T} \seq{\plugL} U}{}
}}}}
}
$$
$\deq...$
$$\tiny
\infer[R\pull{\eta}]{\pull{\eta}M[T] \nseq \pull{\eta}M[T]}{
\infer[R\pull{f}]{\pull{\eta}M[T] \seq{\eta} M[T]}{
\infer[C]{\pull{\eta}M[T] \seq{\eta;f} U}{
\infer[R\pull{\lc{\plugR}}]{\pull{\eta}M[T] \nseq \pull{\lc{\plugR}} \negL[U]{\negR[U]{T}}}{
\infer[R^\ImpL]{\pull{\eta}M[T] \seq{\lc{\plugR}} \negL[U]{\negR[U]{T}}}{
\infer[\deq]{\negR[U]{T} \mul \pull{\eta}M[T] \seq{\plugR} U}{
\infer[C]{\negR[U]{T} \mul \pull{\eta}M[T] \seq{(\mu\mul\eta);\plugR} U}{
 \infer[M]{\negR[U]{T} \mul \pull{\eta}M[T] \seq{\mu\mul\eta} \negR[U]{M[T]}\mul M[T]}{
    \infer[\mu]{\negR[U]{T} \seq{\mu} \negR[U]{M[T]}}{} &
    \infer[L\pull{\eta}]{\pull{\eta}M[T] \seq{\eta} M[T]}{}
   } &
 \infer[L^\ImpR]{\negR[U]{M[T]}\mul M[T] \seq{\plugR} U}{}
}
}}}
&
\infer[\deq]{\pull{\lc{\plugR}} \negL[U]{\negR[U]{T}} \seq{\eta;f} U}{
 \infer[C]{\pull{\lc{\plugR}} \negL[U]{\negR[U]{T}} \seq{\lc{\plugR}\mul \rc{\eta;f};\plugL} U}{
 \infer[M]{\pull{\lc{\plugR}} \negL[U]{\negR[U]{T}} \seq{\lc{\plugR}\mul \rc{\eta;f}} \negL[U]{\negR[U]{T}} \mul \negR[U]{T}}
  {\infer[L\pull{\lc{\plugR}}]{\pull{\lc{\plugR}} \negL[U]{\negR[U]{T}} \seq{\lc{\plugR}} \negL[U]{\negR[U]{T}}}{} &
 \infer[R^\ImpR]{1 \seq{\rc{\eta;f}} \negR[U]{T}}{\infer[C]{T \seq{\eta;f} U}{\infer[\eta]{T \seq{\eta} M[T]}{} & \infer[L\pull{f}]{M[T] \seq{f} U}{}}}} &
  \infer[L^\ImpL]{\negL[U]{\negR[U]{T}} \mul \negR[U]{T} \seq{\plugL} U}{}
}}
}}}
$$

$\deq...$
$$\tiny
\infer[R\pull{\eta}]{\pull{\eta}M[T] \nseq \pull{\eta}M[T]}{
\infer[R\pull{f}]{\pull{\eta}M[T] \seq{\eta} M[T]}{
\infer[\deq]{\pull{\eta}M[T] \seq{\eta;f} U}{
 \infer[C]{\pull{\eta}M[T] \seq{\lc{\plugR}\mul \rc{\eta;f};\plugL} U}{
 \infer[M]{\pull{\eta}M[T] \seq{\lc{\plugR}\mul \rc{\eta;f}} \negL[U]{\negR[U]{T}} \mul \negR[U]{T}}
  {\infer[C]{\pull{\eta}M[T] \seq{\lc{\plugR}} \negL[U]{\negR[U]{T}}}{\infer[R\pull{\lc{\plugR}}]{\pull{\eta}M[T] \nseq \pull{\lc{\plugR}} \negL[U]{\negR[U]{T}}}{
\infer[R^\ImpL]{\pull{\eta}M[T] \seq{\lc{\plugR}} \negL[U]{\negR[U]{T}}}{
\infer[\deq]{\negR[U]{T} \mul \pull{\eta}M[T] \seq{\plugR} U}{
\infer[C]{\negR[U]{T} \mul \pull{\eta}M[T] \seq{(\mu\mul\eta);\plugR} U}{
 \infer[M]{\negR[U]{T} \mul \pull{\eta}M[T] \seq{\mu\mul\eta} \negR[U]{M[T]}\mul M[T]}{
    \infer[\mu]{\negR[U]{T} \seq{\mu} \negR[U]{M[T]}}{} &
    \infer[L\pull{\eta}]{\pull{\eta}M[T] \seq{\eta} M[T]}{}
   } &
 \infer[L^\ImpR]{\negR[U]{M[T]}\mul M[T] \seq{\plugR} U}{}
}
}}}
 & \infer[L\pull{\lc{\plugR}}]{\pull{\lc{\plugR}} \negL[U]{\negR[U]{T}} \seq{\lc{\plugR}} \negL[U]{\negR[U]{T}}}{}} &
 \infer[R^\ImpR]{1 \seq{\rc{\eta;f}} \negR[U]{T}}{\infer[C]{T \seq{\eta;f} U}{\infer[\eta]{T \seq{\eta} M[T]}{} & \infer[L\pull{f}]{M[T] \seq{f} U}{}}}} &
  \infer[L^\ImpL]{\negL[U]{\negR[U]{T}} \mul \negR[U]{T} \seq{\plugL} U}{}
}}
}}
$$

$\deq...$
$$\tiny
\infer[R\pull{\eta}]{\pull{\eta}M[T] \nseq \pull{\eta}M[T]}{
\infer[R\pull{f}]{\pull{\eta}M[T] \seq{\eta} M[T]}{
\infer[\deq]{\pull{\eta}M[T] \seq{\eta;f} U}{
 \infer[C]{\pull{\eta}M[T] \seq{\lc{\plugR}\mul \rc{\eta;f};\plugL} U}{
 \infer[M]{\pull{\eta}M[T] \seq{\lc{\plugR}\mul \rc{\eta;f}} \negL[U]{\negR[U]{T}} \mul \negR[U]{T}}
  {\infer[R^\ImpL]{\pull{\eta}M[T] \seq{\lc{\plugR}} \negL[U]{\negR[U]{T}}}{
\infer[\deq]{\negR[U]{T} \mul \pull{\eta}M[T] \seq{\plugR} U}{
\infer[C]{\negR[U]{T} \mul \pull{\eta}M[T] \seq{(\mu\mul\eta);\plugR} U}{
 \infer[M]{\negR[U]{T} \mul \pull{\eta}M[T] \seq{\mu\mul\eta} \negR[U]{M[T]}\mul M[T]}{
    \infer[\mu]{\negR[U]{T} \seq{\mu} \negR[U]{M[T]}}{} &
    \infer[L\pull{\eta}]{\pull{\eta}M[T] \seq{\eta} M[T]}{}
   } &
 \infer[L^\ImpR]{\negR[U]{M[T]}\mul M[T] \seq{\plugR} U}{}
}
}
  } &
 \infer[R^\ImpR]{1 \seq{\rc{\eta;f}} \negR[U]{T}}{\infer[C]{T \seq{\eta;f} U}{\infer[\eta]{T \seq{\eta} M[T]}{} & \infer[L\pull{f}]{M[T] \seq{f} U}{}}}} &
  \infer[L^\ImpL]{\negL[U]{\negR[U]{T}} \mul \negR[U]{T} \seq{\plugL} U}{}
}}
}}
$$

$\deq...$
$$\tiny
\infer[R\pull{\eta}]{\pull{\eta}M[T] \nseq \pull{\eta}M[T]}{
\infer[R\pull{f}]{\pull{\eta}M[T] \seq{\eta} M[T]}{
\infer[\deq]{\pull{\eta}M[T] \seq{\eta;f} U}{
 \infer[C]{\pull{\eta}M[T] \seq{\rc{\eta;f}\mul \id;\plugR} U}{
 \infer[M]{\pull{\eta}M[T] \seq{\rc{\eta;f}\mul\id} \negR[U]{T}\mul \pull{\eta}M[T]}{
  \infer[R^\ImpR]{1 \seq{\rc{\eta;f}} \negR[U]{T}}{\infer[C]{T \seq{\eta;f} U}{\infer[\eta]{T \seq{\eta} M[T]}{} & \infer[L\pull{f}]{M[T] \seq{f} U}{}}} & \infer[I]{\pull{\eta}M[T] \nseq \pull{\eta}M[T]}{}} &
\infer[\deq]{\negR[U]{T} \mul \pull{\eta}M[T] \seq{\plugR} U}{
\infer[C]{\negR[U]{T} \mul \pull{\eta}M[T] \seq{(\mu\mul\eta);\plugR} U}{
 \infer[M]{\negR[U]{T} \mul \pull{\eta}M[T] \seq{\mu\mul\eta} \negR[U]{M[T]}\mul M[T]}{
    \infer[\mu]{\negR[U]{T} \seq{\mu} \negR[U]{M[T]}}{} &
    \infer[L\pull{\eta}]{\pull{\eta}M[T] \seq{\eta} M[T]}{}
   } &
 \infer[L^\ImpR]{\negR[U]{M[T]}\mul M[T] \seq{\plugR} U}{}
 }}}
}
}}
$$

$\deq...$
$$\tiny
\infer[R\pull{\eta}]{\pull{\eta}M[T] \nseq \pull{\eta}M[T]}{
\infer[R\pull{f}]{\pull{\eta}M[T] \seq{\eta} M[T]}{
\infer[\deq]{\pull{\eta}M[T] \seq{\eta;f} U}{
 \infer[C]{\pull{\eta}M[T] \seq{\rc{\eta;f}\mul \id;\mu\mul\eta;\plugR} U}{
 \infer[C]{\pull{\eta}M[T] \seq{\rc{\eta;f}\mul\id;\mu\mul\eta} \negR[U]{M[T]}\mul M[T]}{
  \infer[M]{\pull{\eta}M[T] \seq{\rc{\eta;f}\mul\id} \negR[U]{T}\mul \pull{\eta}M[T]}{
   \infer[R^\ImpR]{1 \seq{\rc{\eta;f}} \negR[U]{T}}{\infer[C]{T \seq{\eta;f} U}{\infer[\eta]{T \seq{\eta} M[T]}{} & \infer[L\pull{f}]{M[T] \seq{f} U}{}}} & \infer[I]{\pull{\eta}M[T] \nseq \pull{\eta}M[T]}{}} &
  \infer[M]{\negR[U]{T} \mul \pull{\eta}M[T] \seq{\mu\mul\eta} \negR[U]{M[T]}\mul M[T]}{
    \infer[\mu]{\negR[U]{T} \seq{\mu} \negR[U]{M[T]}}{} &
    \infer[L\pull{\eta}]{\pull{\eta}M[T] \seq{\eta} M[T]}{}
   }
 } &
%% \infer[\deq]{\negR[U]{T} \mul \pull{\eta}M[T] \seq{\plugR} U}{
%% \infer[C]{\negR[U]{T} \mul \pull{\eta}M[T] \seq{(\mu\mul\eta);\plugR} U}{
%%  &
 \infer[L^\ImpR]{\negR[U]{M[T]}\mul M[T] \seq{\plugR} U}{}
 }}}}
$$

$\deq...$
$$\tiny
\infer[R\pull{\eta}]{\pull{\eta}M[T] \nseq \pull{\eta}M[T]}{
\infer[R\pull{f}]{\pull{\eta}M[T] \seq{\eta} M[T]}{
\infer[\deq]{\pull{\eta}M[T] \seq{\eta;f} U}{
 \infer[C]{\pull{\eta}M[T] \seq{(\rc{\eta;f};\mu)\mul\eta;\plugR} U}{
 \infer[M]{\pull{\eta}M[T] \seq{(\rc{\eta;f};\mu)\mul\eta} \negR[U]{M[T]}\mul M[T]}{
  \infer[C]{1 \seq{\rc{\eta;f};\mu} \negR[U]{M[T]}}{
   \infer[R^\ImpR]{1 \seq{\rc{\eta;f}} \negR[U]{T}}{\infer[C]{T \seq{\eta;f} U}{\infer[\eta]{T \seq{\eta} M[T]}{} & \infer[L\pull{f}]{M[T] \seq{f} U}{}}} & 
    \infer[\mu]{\negR[U]{T} \seq{\mu} \negR[U]{M[T]}}{} } &
   \infer[L\pull{\eta}]{\pull{\eta}M[T] \seq{\eta} M[T]}{}
 } &
%% \infer[\deq]{\negR[U]{T} \mul \pull{\eta}M[T] \seq{\plugR} U}{
%% \infer[C]{\negR[U]{T} \mul \pull{\eta}M[T] \seq{(\mu\mul\eta);\plugR} U}{
%%  &
 \infer[L^\ImpR]{\negR[U]{M[T]}\mul M[T] \seq{\plugR} U}{}
 }}}}
$$

$\deq...$
$$\tiny
\infer[R\pull{\eta}]{\pull{\eta}M[T] \nseq \pull{\eta}M[T]}{
\infer[R\pull{f}]{\pull{\eta}M[T] \seq{\eta} M[T]}{
\infer[\deq]{\pull{\eta}M[T] \seq{\eta;f} U}{
 \infer[C]{\pull{\eta}M[T] \seq{(\rc{\eta;f};\mu)\mul\eta;\plugR} U}{
 \infer[M]{\pull{\eta}M[T] \seq{(\rc{\eta;f};\mu)\mul\eta} \negR[U]{M[T]}\mul M[T]}{
  \infer[R^\ImpR]{1 \seq{\rc{f}} \negR[U]{M[T]}}{
   \infer[C]{M[T] \seq{f} U}{
    \infer[M]{M[T] \seq{(\rc{\eta;f};\mu) \mul \id} \negR[U]{M[T]} \mul M[T]}{
     \infer[C]{1 \seq{\rc{\eta;f};\mu} \negR[U]{M[T]}}{
    \infer[R^\ImpR]{1 \seq{\rc{\eta;f}} \negR[U]{T}}{\infer[C]{T \seq{\eta;f} U}{\infer[\eta]{T \seq{\eta} M[T]}{} & \infer[L\pull{f}]{M[T] \seq{f} U}{}}} & 
     \infer[\mu]{\negR[U]{T} \seq{\mu} \negR[U]{M[T]}}{} } &
     \infer[I]{M[T] \nseq M[T]}{}} &
    \infer[L^\ImpR]{\negR[U]{M[T]}\mul M[T] \seq{\plugR} U}{}}} &
   \infer[L\pull{\eta}]{\pull{\eta}M[T] \seq{\eta} M[T]}{}
 } &
%% \infer[\deq]{\negR[U]{T} \mul \pull{\eta}M[T] \seq{\plugR} U}{
%% \infer[C]{\negR[U]{T} \mul \pull{\eta}M[T] \seq{(\mu\mul\eta);\plugR} U}{
%%  &
 \infer[L^\ImpR]{\negR[U]{M[T]}\mul M[T] \seq{\plugR} U}{}
 }}}}
$$

$\deq...$
$$\tiny
\infer[R\pull{\eta}]{\pull{\eta}M[T] \nseq \pull{\eta}M[T]}{
\infer[R\pull{f}]{\pull{\eta}M[T] \seq{\eta} M[T]}{
\infer[\deq]{\pull{\eta}M[T] \seq{\eta;f} U}{
 \infer[C]{\pull{\eta}M[T] \seq{(\rc{\eta;f};\mu)\mul\eta;\plugR} U}{
  \infer[L\pull{\eta}]{\pull{\eta}M[T] \seq{\eta} M[T]}{} &
    \infer[C]{M[T] \seq{f} U}{
    \infer[M]{M[T] \seq{(\rc{\eta;f};\mu) \mul \id} \negR[U]{M[T]} \mul M[T]}{
     \infer[C]{1 \seq{\rc{\eta;f};\mu} \negR[U]{M[T]}}{
    \infer[R^\ImpR]{1 \seq{\rc{\eta;f}} \negR[U]{T}}{\infer[C]{T \seq{\eta;f} U}{\infer[\eta]{T \seq{\eta} M[T]}{} & \infer[L\pull{f}]{M[T] \seq{f} U}{}}} & 
     \infer[\mu]{\negR[U]{T} \seq{\mu} \negR[U]{M[T]}}{} } &
     \infer[I]{M[T] \nseq M[T]}{}} &
    \infer[L^\ImpR]{\negR[U]{M[T]}\mul M[T] \seq{\plugR} U}{}}}
  }}}
$$

$\deq...$
$$\tiny
\infer[R\pull{\eta}]{\pull{\eta}M[T] \nseq \pull{\eta}M[T]}{
\infer[R\pull{f}]{\pull{\eta}M[T] \seq{\eta} M[T]}{
\infer[\deq]{\pull{\eta}M[T] \seq{\eta;f} U}{
 \infer[C]{\pull{\eta}M[T] \seq{(\rc{\eta;f};\mu)\mul\eta;\plugR} U}{
  \infer[L\pull{\eta}]{\pull{\eta}M[T] \seq{\eta} M[T]}{} &
    \infer[C]{M[T] \seq{f} U}{
    \infer[M]{M[T] \seq{(\rc{\eta;f};\mu) \mul \id} \negR[U]{M[T]} \mul M[T]}{
    \infer[R^\ImpR]{1 \seq{\rc{\eta;f}} \negR[U]{M[T]}}{\infer[L\pull{f}]{M[T] \seq{f} U}{}} &
     \infer[I]{M[T] \nseq M[T]}{}} &
    \infer[L^\ImpR]{\negR[U]{M[T]}\mul M[T] \seq{\plugR} U}{}}}
  }}}
$$

$\deq...$
$$\tiny
\infer[R\pull{\eta}]{\pull{\eta}M[T] \nseq \pull{\eta}M[T]}{
\infer[R\pull{f}]{\pull{\eta}M[T] \seq{\eta} M[T]}{
\infer[\deq]{\pull{\eta}M[T] \seq{\eta;f} U}{
 \infer[C]{\pull{\eta}M[T] \seq{(\rc{\eta;f};\mu)\mul\eta;\plugR} U}{
  \infer[L\pull{\eta}]{\pull{\eta}M[T] \seq{\eta} M[T]}{} &
    \infer[L\pull{f}]{M[T] \seq{f} U}{}}}}}
$$

$\deq...$
$$\tiny\infer[I]{\pull{\eta}M[T] \nseq \pull{\eta}M[T]}{}$$
\end{proof}

\if0
\begin{lemma}
$$
\infer[C]{\pull{\lc{\plugR}} \negL[U]{\negR[U]{T}} \nseq \pull{\lc{\plugR}} \negL[U]{\negR[U]{T}}}{
\deduce{\pull{\lc{\plugR}} \negL[U]{\negR[U]{T}} \nseq \pull{\eta}M[T]}{\beta}
&
\deduce{\pull{\eta}M[T] \nseq \pull{\lc{\plugR}} \negL[U]{\negR[U]{T}}}{\alpha}
}
\quad\deq\quad
\infer[I]{\pull{\lc{\plugR}} \negL[U]{\negR[U]{T}} \nseq \pull{\lc{\plugR}} \negL[U]{\negR[U]{T}}}{}
$$
\end{lemma}
\begin{proof}
$$
\infer{C}{
\infer{R\pull{\eta}}{\infer{R\pull{f}}{\infer{C}{\infer{M}{\infer{L\pull{\lc{\plugR}}}{} & \infer{R^\ImpR}{\infer{C}{\infer{\eta}{} & \infer{L\pull{f}}{}}}} & \infer{L^\ImpL}{}}}} &
\infer{R\pull{\lc{\plugR}}}{\infer{R^\ImpL}{\infer{C}{\infer{M}{\infer{\mu}{} & \infer{L\pull{\eta}}{}} & \infer{L^\ImpR}{}}}}
}
$$
$\deq$
$$
\infer{R\pull{\lc{\plugR}}}{\infer{R^\ImpL}{
\infer{C}{
\infer{R\pull{\eta}}{\infer{R\pull{f}}{\infer{C}{\infer{M}{\infer{L\pull{\lc{\plugR}}}{} & \infer{R^\ImpR}{\infer{C}{\infer{\eta}{} & \infer{L\pull{f}}{}}}} & \infer{L^\ImpL}{}}}} &
\infer{C}{\infer{M}{\infer{\mu}{} & \infer{L\pull{\eta}}{}} & \infer{L^\ImpR}{}}}}
}
$$
$\deq$
$$
\infer{R\pull{\lc{\plugR}}}{\infer{R^\ImpL}{
\infer{C}{
\infer{C}{\infer{R\pull{\eta}}{\infer{R\pull{f}}{\infer{C}{\infer{M}{\infer{L\pull{\lc{\plugR}}}{} & \infer{R^\ImpR}{\infer{C}{\infer{\eta}{} & \infer{L\pull{f}}{}}}} & \infer{L^\ImpL}{}}}} &
\infer{M}{\infer{\mu}{} & \infer{L\pull{\eta}}{}}} & 
\infer{L^\ImpR}{}}}}
$$

$$\tiny
\infer[C]{\pull{\lc{\plugR}} \negL[U]{\negR[U]{T}} \nseq \pull{\lc{\plugR}} \negL[U]{\negR[U]{T}}}{
\infer[R\pull{\eta}]{\pull{\lc{\plugR}} \negL[U]{\negR[U]{T}} \nseq \pull{\eta}M[T]}{
\infer[R\pull{f}]{\pull{\lc{\plugR}} \negL[U]{\negR[U]{T}} \seq{\eta} M[T]}{
\infer[\deq]{\pull{\lc{\plugR}} \negL[U]{\negR[U]{T}} \seq{\eta;f} U}{
 \infer[C]{\pull{\lc{\plugR}} \negL[U]{\negR[U]{T}} \seq{\lc{\plugR}\mul \rc{\eta;f};\plugL} U}{
 \infer[M]{\pull{\lc{\plugR}} \negL[U]{\negR[U]{T}} \seq{\lc{\plugR}\mul \rc{\eta;f}} \negL[U]{\negR[U]{T}} \mul \negR[U]{T}}
  {\infer[L\pull{\lc{\plugR}}]{\pull{\lc{\plugR}} \negL[U]{\negR[U]{T}} \seq{\lc{\plugR}} \negL[U]{\negR[U]{T}}}{} &
 \infer[R^\ImpR]{1 \seq{\rc{\eta;f}} \negR[U]{T}}{\infer[C]{T \seq{\eta;f} U}{\infer[\eta]{T \seq{\eta} M[T]}{} & \infer[L\pull{f}]{M[T] \seq{f} U}{}}}} &
  \infer[L^\ImpL]{\negL[U]{\negR[U]{T}} \mul \negR[U]{T} \seq{\plugL} U}{}
}}}}
&
\infer[R\pull{\lc{\plugR}}]{\pull{\eta}M[T] \nseq \pull{\lc{\plugR}} \negL[U]{\negR[U]{T}}}{
\infer[R^\ImpL]{\pull{\eta}M[T] \seq{\lc{\plugR}} \negL[U]{\negR[U]{T}}}{
\infer[\deq]{\negR[U]{T} \mul \pull{\eta}M[T] \seq{\plugR} U}{
\infer[C]{\negR[U]{T} \mul \pull{\eta}M[T] \seq{(\mu\mul\eta);\plugR} U}{
 \infer[M]{\negR[U]{T} \mul \pull{\eta}M[T] \seq{\mu\mul\eta} \negR[U]{M[T]}\mul M[T]}{
    \infer[\mu]{\negR[U]{T} \seq{\mu} \negR[U]{M[T]}}{} &
    \infer[L\pull{\eta}]{\pull{\eta}M[T] \seq{\eta} M[T]}{}
   } &
 \infer[L^\ImpR]{\negR[U]{M[T]}\mul M[T] \seq{\plugR} U}{}
}
}}}
}
$$
$\deq\dots$
$$\tiny
\infer[R\pull{\lc{\plugR}}]{\pull{\lc{\plugR}}\negL[U]{\negR[U]{T}} \nseq \pull{\lc{\plugR}} \negL[U]{\negR[U]{T}}}{
\infer[R^\ImpL]{\pull{\lc{\plugR}}\negL[U]{\negR[U]{T}} \seq{\lc{\plugR}} \negL[U]{\negR[U]{T}}}{
\infer[C]{\negR[U]{T}\mul\pull{\lc{\plugR}} \negL[U]{\negR[U]{T}} \seq{\plugR} U}{
\infer[M]{\negR[U]{T}\mul \pull{\lc{\plugR}} \negL[U]{\negR[U]{T}} \nseq \negR[U]{T}\mul\pull{\eta}M[T]}{
\infer[I]{\negR[U]{T} \nseq \negR[U]{T}}{} &
\infer[R\pull{\eta}]{\pull{\lc{\plugR}} \negL[U]{\negR[U]{T}} \nseq \pull{\eta}M[T]}{
\infer[R\pull{f}]{\pull{\lc{\plugR}} \negL[U]{\negR[U]{T}} \seq{\eta} M[T]}{
\infer[\deq]{\pull{\lc{\plugR}} \negL[U]{\negR[U]{T}} \seq{\eta;f} U}{
 \infer[C]{\pull{\lc{\plugR}} \negL[U]{\negR[U]{T}} \seq{\lc{\plugR}\mul \rc{\eta;f};\plugL} U}{
 \infer[M]{\pull{\lc{\plugR}} \negL[U]{\negR[U]{T}} \seq{\lc{\plugR}\mul \rc{\eta;f}} \negL[U]{\negR[U]{T}} \mul \negR[U]{T}}
  {\infer[L\pull{\lc{\plugR}}]{\pull{\lc{\plugR}} \negL[U]{\negR[U]{T}} \seq{\lc{\plugR}} \negL[U]{\negR[U]{T}}}{} &
 \infer[R^\ImpR]{1 \seq{\rc{\eta;f}} \negR[U]{T}}{\infer[C]{T \seq{\eta;f} U}{\infer[\eta]{T \seq{\eta} M[T]}{} & \infer[L\pull{f}]{M[T] \seq{f} U}{}}}} &
  \infer[L^\ImpL]{\negL[U]{\negR[U]{T}} \mul \negR[U]{T} \seq{\plugL} U}{}
}}}}}
&
\infer[\deq]{\negR[U]{T} \mul \pull{\eta}M[T] \seq{\plugR} U}{
\infer[C]{\negR[U]{T} \mul \pull{\eta}M[T] \seq{(\mu\mul\eta);\plugR} U}{
 \infer[M]{\negR[U]{T} \mul \pull{\eta}M[T] \seq{\mu\mul\eta} \negR[U]{M[T]}\mul M[T]}{
    \infer[\mu]{\negR[U]{T} \seq{\mu} \negR[U]{M[T]}}{} &
    \infer[L\pull{\eta}]{\pull{\eta}M[T] \seq{\eta} M[T]}{}
   } &
 \infer[L^\ImpR]{\negR[U]{M[T]}\mul M[T] \seq{\plugR} U}{}
}
}}}
}
$$
$\deq\dots$
$$\tiny
\infer[R\pull{\lc{\plugR}}]{\pull{\lc{\plugR}} \negL[U]{\negR[U]{T}} \nseq \pull{\lc{\plugR}} \negL[U]{\negR[U]{T}}}{
\infer[R^\ImpL]{\pull{\lc{\plugR}} \negL[U]{\negR[U]{T}} \seq{\lc{\plugR}} \negL[U]{\negR[U]{T}}}{
\infer[C]{\negR[U]{T}\mul\pull{\lc{\plugR}} \negL[U]{\negR[U]{T}} \seq{(\mu\mul\eta);\plugR} U}{
\infer[M]{\negR[U]{T}\mul \pull{\lc{\plugR}} \negL[U]{\negR[U]{T}} \seq{\mu\mul\eta} \negR[U]{M[T]}\mul M[T]}{
 \infer[\mu]{\negR[U]{T} \seq{\mu} \negR[U]{M[T]}}{} &
\infer[C]{\pull{\lc{\plugR}} \negL[U]{\negR[U]{T}} \seq{\eta} M[T]}{
\infer[R\pull{\eta}]{\pull{\lc{\plugR}} \negL[U]{\negR[U]{T}} \nseq \pull{\eta}M[T]}{
\infer[R\pull{f}]{\pull{\lc{\plugR}} \negL[U]{\negR[U]{T}} \seq{\eta} M[T]}{
\infer[\deq]{\pull{\lc{\plugR}} \negL[U]{\negR[U]{T}} \seq{\eta;f} U}{
 \infer[C]{\pull{\lc{\plugR}} \negL[U]{\negR[U]{T}} \seq{\lc{\plugR}\mul \rc{\eta;f};\plugL} U}{
 \infer[M]{\pull{\lc{\plugR}} \negL[U]{\negR[U]{T}} \seq{\lc{\plugR}\mul \rc{\eta;f}} \negL[U]{\negR[U]{T}} \mul \negR[U]{T}}
  {\infer[L\pull{\lc{\plugR}}]{\pull{\lc{\plugR}} \negL[U]{\negR[U]{T}} \seq{\lc{\plugR}} \negL[U]{\negR[U]{T}}}{} &
 \infer[R^\ImpR]{1 \seq{\rc{\eta;f}} \negR[U]{T}}{\infer[C]{T \seq{\eta;f} U}{\infer[\eta]{T \seq{\eta} M[T]}{} & \infer[L\pull{f}]{M[T] \seq{f} U}{}}}} &
  \infer[L^\ImpL]{\negL[U]{\negR[U]{T}} \mul \negR[U]{T} \seq{\plugL} U}{}
}}}}
 &
  \infer[L\pull{\eta}]{\pull{\eta}M[T] \seq{\eta} M[T]}{}
   }} &
 \infer[L^\ImpR]{\negR[U]{M[T]}\mul M[T] \seq{\plugR} U}{}
}
}}
$$
$\deq\dots$
$$\tiny
\infer[R\pull{\lc{\plugR}}]{\pull{\lc{\plugR}} \negL[U]{\negR[U]{T}} \nseq \pull{\lc{\plugR}} \negL[U]{\negR[U]{T}}}{
\infer[R^\ImpL]{\pull{\lc{\plugR}} \negL[U]{\negR[U]{T}} \seq{\lc{\plugR}} \negL[U]{\negR[U]{T}}}{
\infer[C]{\negR[U]{T}\mul\pull{\lc{\plugR}} \negL[U]{\negR[U]{T}} \seq{(\mu\mul\eta);\plugR} U}{
\infer[M]{\negR[U]{T}\mul \pull{\lc{\plugR}} \negL[U]{\negR[U]{T}} \seq{\mu\mul\eta} \negR[U]{M[T]}\mul M[T]}{
 \infer[\mu]{\negR[U]{T} \seq{\mu} \negR[U]{M[T]}}{M[T]\mul \negR[U]{T} \seq{st} M[{\negR[U]{T}}\mul T] \seq{M[\plugR]} M[U] \seq{m_u} U} &
 \infer[R\pull{f}]{\pull{\lc{\plugR}} \negL[U]{\negR[U]{T}} \seq{\eta} M[T]}{
\infer[\deq]{\pull{\lc{\plugR}} \negL[U]{\negR[U]{T}} \seq{\eta;f} U}{
 \infer[C]{\pull{\lc{\plugR}} \negL[U]{\negR[U]{T}} \seq{\lc{\plugR}\mul \rc{\eta;f};\plugL} U}{
 \infer[M]{\pull{\lc{\plugR}} \negL[U]{\negR[U]{T}} \seq{\lc{\plugR}\mul \rc{\eta;f}} \negL[U]{\negR[U]{T}} \mul \negR[U]{T}}
  {\infer[L\pull{\lc{\plugR}}]{\pull{\lc{\plugR}} \negL[U]{\negR[U]{T}} \seq{\lc{\plugR}} \negL[U]{\negR[U]{T}}}{} &
 \infer[R^\ImpR]{1 \seq{\rc{\eta;f}} \negR[U]{T}}{\infer[C]{T \seq{\eta;f} U}{\infer[\eta]{T \seq{\eta} M[T]}{} & \infer[L\pull{f}]{M[T] \seq{f} U}{}}}} &
  \infer[L^\ImpL]{\negL[U]{\negR[U]{T}} \mul \negR[U]{T} \seq{\plugL} U}{}
}}}}
    &
 \infer[L^\ImpR]{\negR[U]{M[T]}\mul M[T] \seq{\plugR} U}{}
}
}}
$$
$\deq\dots$
$$\tiny
\infer[R\pull{\lc{\plugR}}]{\pull{\lc{\plugR}} \negL[U]{\negR[U]{T}} \nseq \pull{\lc{\plugR}} \negL[U]{\negR[U]{T}}}{
\infer[R^\ImpL]{\pull{\lc{\plugR}} \negL[U]{\negR[U]{T}} \seq{\lc{\plugR}} \negL[U]{\negR[U]{T}}}{
\infer[C]{\negR[U]{T}\mul\pull{\lc{\plugR}} \negL[U]{\negR[U]{T}} \seq{(\mu\mul\eta);\plugR} U}{
 \infer[M]{\pull{\lc{\plugR}} \negL[U]{\negR[U]{T}} \mul \negR[U]{T} \seq{\eta} M[T] \mul \negR[U]{T}}{
  \infer[R\pull{f}]{\pull{\lc{\plugR}} \negL[U]{\negR[U]{T}} \seq{\eta} M[T]}{
\infer[\deq]{\pull{\lc{\plugR}} \negL[U]{\negR[U]{T}} \seq{\eta;f} U}{
 \infer[C]{\pull{\lc{\plugR}} \negL[U]{\negR[U]{T}} \seq{\lc{\plugR}\mul \rc{\eta;f};\plugL} U}{
 \infer[M]{\pull{\lc{\plugR}} \negL[U]{\negR[U]{T}} \seq{\lc{\plugR}\mul \rc{\eta;f}} \negL[U]{\negR[U]{T}} \mul \negR[U]{T}}
  {\infer[L\pull{\lc{\plugR}}]{\pull{\lc{\plugR}} \negL[U]{\negR[U]{T}} \seq{\lc{\plugR}} \negL[U]{\negR[U]{T}}}{} &
 \infer[R^\ImpR]{1 \seq{\rc{\eta;f}} \negR[U]{T}}{\infer[C]{T \seq{\eta;f} U}{\infer[\eta]{T \seq{\eta} M[T]}{} & \infer[L\pull{f}]{M[T] \seq{f} U}{}}}} &
  \infer[L^\ImpL]{\negL[U]{\negR[U]{T}} \mul \negR[U]{T} \seq{\plugL} U}{}}}} &
   \infer[I]{\negR[U]{T} \nseq \negR[U]{T}}{}} &
  M[T]\mul \negR[U]{T} \seq{st} M[{\negR[U]{T}}\mul T] \seq{M[\plugR]} M[U] \seq{m_u} U
}}}
$$

\end{proof}
\fi

\fi

\end{document}